\documentclass[aps,pra,10pt,twocolumn,floatfix]{revtex4-2}
\usepackage[utf8]{inputenc}
\usepackage[T1]{fontenc}
\usepackage{amsmath,amsfonts,amssymb,mathrsfs,graphicx,xcolor}
\usepackage{enumerate,array,booktabs,theorem,physics}
\newtheorem{lemma}{Lemma}
\usepackage{yquant}
\usetikzlibrary{fit,quotes,snakes,arrows,shapes}
\useyquantlanguage{groups}
\usepackage{xurl}
\usepackage[hidelinks]{hyperref}
\hypersetup{pdftitle={Trade-offs and experimental feasibility of nonlocal polygamy with two-outcome Bell inequalities},
 pdfauthor={Josep Batle, Tomasz Rybotycki, Tomasz Bialecki, Piotr Gawron, Adam Bednorz}}

\newcommand{\ba}{\begin{eqnarray}}
\newcommand{\ea}{\end{eqnarray}}

\def\be#1\ee{\begin{equation}#1\end{equation}}

\def\mb{\begin{pmatrix}}
\def\me{\end{pmatrix}}

\begin{document}

\title{Trade-offs and experimental feasibility of nonlocal polygamy with two-outcome Bell inequalities}	

\author{Josep Batle$^{1}$}
\author{Tomasz Rybotycki$^{2,3,4}$}
	\author{Tomasz Bia{\l}ecki$^{5,6}$}
	\author{Piotr Gawron$^3$}
	\author{Adam Bednorz$^5$}
	\email{abednorz@fuw.edu.pl}
	\affiliation{$^1$CRISP -- Centre de Recerca Independent de sa Pobla, 07420 sa Pobla,
		Balearic Islands, Spain}
	\affiliation{$^2$Systems Research Institute, Polish Academy of Sciences, ul. Newelska 6,
		01-447 Warsaw, Poland}
	\affiliation{$^3$Nicolaus Copernicus Astronomical Center, Polish Academy of Sciences,
		ul. Bartycka 18, 00-716 Warsaw, Poland
	}
	\affiliation{$^4$Center of Excellence in Artificial Intelligence, AGH University,
		al. Mickiewicza 30, 30-059 Cracow, Poland
	}
	\affiliation{$^5$Faculty of Physics, University of Warsaw, ul. Pasteura 5, PL02-093
		Warsaw, Poland}
	\affiliation{$^6$Faculty of Physics and Applied Informatics, University of Lodz,
		ul. Pomorska 149/153, PL90-236 Lodz, Poland}
	
\begin{abstract}
Entanglement and Bell nonlocality have different sharing constraints across
subsystems of a multipartite quantum system. We examine simultaneous violations
of Bell inequalities with two-outcome observables and two or three measurement
settings per party. We identify configurations whose simultaneous violations can
be demonstrated on IBM Quantum hardware, and derive bounds and trade-off
relations for several polygamous configurations. The results connect experimental
capabilities of quantum computers with mathematical methods for studying the
limits of quantum correlations.
\end{abstract}
\maketitle

\section{Introduction}

Nonlocal properties  of quantum mechanics \cite{epr} are verified by violation of Bell and Clauser--Horne--Shimony--Holt (CHSH) inequalities
\cite{bell,chsh}, with many generalizations \cite{mermin,ardehali,beli,froiss,collins,sliwa}.
The nonclassical core of this feature is entanglement, which cannot be simply reduced to a classical correlated system.
The entangled states especially many party, are the fundamental building blocks of quantum computation and cryptography \cite{nielsen}.
Verification of entanglement is the basic diagnostics of properly functioning quantum computers, revealing
the limits of their capabilities.

However, the entanglement is fragile, and a many-party system may not necessarily exhibit entanglement on a subset 
of parties. The most prominent example is a two-party entanglement monogamy, i.e. a CHSH inequality cannot be simultaneously violated on
two-party subsets sharing one party \cite{toner}. The picture gets complicated when taking more parties and asking about various
subsets. In fact, many configurations are bounded by monogamy relations \cite{kurz,mironowicz,tran}, i.e. entanglement of subsets is not always possible.
Interestingly, when the number of parties grows, especially for more than 4, subsets start to break the monogamy \cite{pnas,munne}.
This is the consequence of exponentially growing the gap between the classical and quantum bound, which can be distributed linearly among
subsets.

The commonly considered case is the two-level party set tested by two-setting inequalities, with the same sets of settings of two-outcome measurements
 in all subsets.
By Jordan's lemma \cite{mironowicz,jorle}, the dimension restriction suffices in the case of linear tests. Taking 4 settings or 4 outcomes could be easily cheated by the quantum system
by using separate entanglement states for each pair of parties. Therefore we shall restrict here to two outcomes and maximally 3 settings.
The case of 2 settings is already interesting in the case of 3-element subsets of a 4-party set as the monogamy rules out standard Mermin inequalities 
\cite{mermin,kurz}.
Abandoning symmetry, we could simply entangle two separate pairs, i.e. a biseparable state and claim that each 3-subset contains at least one of the pairs and demonstrate violation
of a standard Bell inequality.  If we impose the symmetry condition, i.e. require the topologically equivalent inequalities, then simple 
cheating fails. It turns out that monogamy is violated not by Mermin but other 3-party inequalities based on \'Sliwa 46 classes \cite{sliwa} with one such example shown in \cite{pnas}.

In the present paper, we find another inequality of \'Sliwa, with a larger violation, and demonstrate its actual simultaneous violation on IBM Quantum.
However, we also find that randomly biseparable states can also violate it, albeit with smaller limits. This shows that
polygamic violation of Bell-type inequalities alone is insufficient to prove genuine multipartite entanglement.
The requirement of genuine multipartite entanglement is arguable, as the polygamy deliberately focuses on subsets and not the full set.
The traditional Mermin-Ardehali-Belinskii--Klyshko (MABK) inequality \cite{mermin,ardehali,beli} 
is jointly violated for $N\geq 5$ parties. We have also demonstrated the violation at $N=5$ on IBM Quantum \cite{ibm},
and discussed the bi- and tri- and even fourseparable limits, which turn out to be still beyond the classical bounds.

As a theoretical intermezzo, we develop explicit semidefinite programming
(SDP) certificates \cite{sdp,sdp2} for sums of squared Bell correlations.
The construction is the dual form of the Lov\'asz-$\vartheta$
uncertainty bound \cite{lovasz1979,degois2023}, extending the
anticommuting-clique method used in Bell monogamy
\cite{kurz,mironowicz,tran}. We exploit the structure of the network
examples to obtain exact certificates and uniform trade-off bounds.
For cyclic configurations, the sharper graph parameter $\beta$ and the
odd-cycle uncertainty relation \cite{xu2024beta} supply an
algebraic proof. 

In the final part we discuss violations with 3 settings, starting with the three-party case, when one of the settings is shared in the CHSH correlators.
Here the quantum optimum equals an optimum attainable by biseparable strategies: one party is classical and the other two are maximally entangled, giving two classical maxima and one quantum maximum,
on average above classical limit. We also confirmed it on IBM Quantum, creating the randomization within the qubit space itself.
We do not forget  the two-party $I_{3322}$ inequality \cite{froiss,collins,sliwa}, whose three cyclic copies, with the common sign pattern considered below, cannot all be violated simultaneously.
However, two of the inequalities can be violated, with the violation amount and separability depending highly on the relative signs.
One can even get a hybrid violation of $I_{3322}$ and CHSH inequalities, although reaching only the biseparable limit.

A large part of the work is supported by numerical/symbolic codes, which we present on the open repository \cite{zen}.

\section{Bell framework of monogamy and polygamy testing by inequalities}

Let us recall standard CHSH inequality for two parties $A$, and $B$, with two measurement settings $1$, $2$ \cite{chsh},
\begin{equation}
\mathcal S= A_1B_1+A_1B_2+A_2B_1-A_2B_2\leq 2
\end{equation}
for $A_i,B_j=\pm 1$. The above classical bound is a result of a simple case-by-case verification.
In quantum mechanics, it is violated  in the average sense, $\bar{\mathcal{S}}=\langle\mathcal S\rangle$,
with the averages defined $\langle O\rangle=\mathrm{Tr}O\rho$ for the state $\rho\succeq 0$, $\rho=\rho^\dag$, $\mathrm{Tr}\rho=1$,
a semidefinite Hermitian normalized matrix, and $O=O^\dag$ being the Hermitian observable. In a pure state $\rho=|\psi\rangle\langle\psi|$ with $\langle \psi|=(|\psi\rangle)^\dag$, normalization $\langle \psi|\psi\rangle=1$, and
$\langle O\rangle=\langle\psi|O|\psi\rangle$.
In the two-outcome scheme,  $O=O(+)-O(-)$ where $O(\pm)\succeq 0 $ are semidefinite operators, generating probabilities $p(\pm)=\mathrm{Tr}O(\pm)\rho$ with $O(+)+O(-)=I$ (identity) \cite{nielsen}.
The crucial point is the tensor structure $AB\equiv A\otimes B$, i.e. the spaces of $A$ and $B$ are multiplied. 
For linear inequalities the maxima lie on the boundaries, so we can assume $O^2=I$, which is the case we consider throughout the text.
When clear from the context, we shall use $A$, $B$, or $a$, $b$, etc., interchangeably to denote parties, observables, or basis states.

Now in quantum mechanics in the Bell state
$
\sqrt{2}|\psi\rangle=|01\rangle-|10\rangle
$,
with $A_1=X$, $A_2=Z$ and $\sqrt{2}B_1=-X-Z$, $\sqrt{2}B_2=Z-X$, and Pauli operators
$X=|0\rangle\langle 1|+|1\rangle\langle 0|$ and $Z=|0\rangle\langle 0|-|1\rangle\langle 1|$.
We denote $|ab\rangle\equiv |a\rangle\otimes |b\rangle$, the tensor basis in $AB$ space.
then 
\begin{equation}
\langle A_1B_1\rangle=\langle A_1B_2\rangle=\langle A_2B_1\rangle=-\langle A_2B_2\rangle=1/\sqrt{2}
\end{equation}
giving the combination $\bar{\mathcal S}=2\sqrt{2}>2$.
The quantum Tsirelson bound $2\sqrt{2}$ \cite{tsirel} is strictly below the algebraic maximum $4$; generic Bell-type inequalities have bounds
to be determined by Navascu\'es--Pironio--Ac\'in hierarchy (NPA) \cite{npa} with help of SDP \cite{sdp,sdp2}.
 It turns out that the inequality cannot be jointly violated by pairs $(A,B)$ and $(A,C)$
 sharing the same observables $(A_1,A_2)$, with possible exchange and/or sign changes

In the 3-party $ABC$ scenario there is a trade-off of the violations, say
\begin{equation}
\mathcal S_{AB}=\sum_{ij}\epsilon_{ij}\langle A_iB_j\rangle,\;
\mathcal S_{AC}=\sum_{ij}\epsilon'_{ij}\langle A_iC_j\rangle,
\end{equation}
The sign arrays $\epsilon_{ij}=\pm1$ and $\epsilon'_{ij}=\pm1$ may differ, but each must satisfy $\prod_{i,j}\epsilon_{ij}=\prod_{i,j}\epsilon'_{ij}=-1$ to define a CHSH inequality.

We have a general monogamy relation \cite{toner}
\begin{equation}
\bar{\mathcal S}_{AB}^2+\bar{\mathcal S}_{AC}^2\leq 8 \label{mon2}
\end{equation}
for arbitrary dichotomic observables and states.

For 3-party setup we have Mermin inequality \cite{mermin}
\begin{equation}
\mathcal M=A_1B_1C_2+A_1B_2C_1+A_2B_1C_1- A_2B_2C_2\leq 2
\end{equation}
with the same assumptions on observables as in Bell case.
It is violated, with $\bar{\mathcal M}=4$ e.g. taking $A_1=B_1=C_1=Y$ and $A_2=B_2=C_2=X$, for $Y=i|1\rangle\langle 0|-i|0\rangle\langle 1|$
and the $GHZ$ state \cite{ghz}
\begin{equation}
\sqrt{2}|\psi\rangle=|000\rangle-|111\rangle
\end{equation}
Again, it cannot be jointly violated in a 4-party setup $ABCD$ for any $3$-subset due to monogamy relation \cite{kurz}
\begin{equation}
\bar{\mathcal M}_{ABC}^2+\bar{\mathcal M}_{BCD}^2+\bar{\mathcal M}_{CDA}^2+\bar{\mathcal M}_{DAB}^2\leq 16\label{mon3}
\end{equation}
even if we assign random signs, e.g.
replace $A_i\to \pm A'_{\sigma(i)}$ for any permutation $\sigma$, with signs and permutations independent for every party, setting, and subset.
In fact every inequality of the form
\begin{equation}
\sum_i\langle F_i\rangle^2\leq \Lambda
\end{equation}
can, when each party has only two dichotomic settings, be reduced to real qubits and pure states.
The proof, based on Jordan's lemma and convexity \cite{mironowicz, jorle}, is presented in Appendix \ref{appjor}.
The traditional proofs of (\ref{mon2}) and (\ref{mon3}) by grouping anticommuting sets \cite{kurz} are recalled in Appendix \ref{appmon}.

\section{Symmetrized polygamy in the 4-party set}

We cannot violate Mermin inequality equally in 3-subsets of 4 parties, because of the monogamy relation.
However, there are different symmetrized inequalities in the 3-party, 2-settings, 2-outcomes scenario, that can be simultaneously violated.
There are 46 classes defined by \'Sliwa \cite{sliwa}, $\mathcal W^n$ for $n=1,\dots 46$.

In \cite{pnas}
class 39 was used:
\begin{align}
&\mathcal W^{39}=2(A_1+B_1+C_1)-A_1B_1-B_1C_1-C_1A_1\nonumber\\
&+A_1B_2+A_2B_1+B_1C_2+B_2C_1+C_1A_2+C_2A_1\nonumber\\
&+A_2B_2+B_2C_2+C_2A_2+2A_1B_1C_1+A_2B_2C_2\nonumber\\
&-A_2B_1C_1-A_1B_2C_1-A_1B_1C_2\nonumber\\
&-2(A_1B_2C_2+A_2B_1C_2+A_2B_2C_1)
\end{align}
with classical bound $\leq 6$.
It is convenient to work with Dicke states
\begin{equation}
{\binom{n}{k}}^{1/2}|D^k_n\rangle=\sum_{h(x)=k}|x\rangle\label{dicc}
\end{equation}
where $h(x)$ is the number of bits $1$ in the bitstring $x$.
In other words, the $|D^k_n\rangle$ distributes uniformly $k$ bits $1$.
Optimizing over identical settings
\begin{align}
&A_1=B_1=C_1=D_1=Z,\nonumber\\
&A_2=B_2=C_2=D_2=Z\cos\phi+X\sin\phi
\end{align}
in the space of symmetric Dicke states $|D^k_4\rangle$,
we obtain the numerical value
\begin{equation}
\bar{\mathcal W}^{39}_{ABC}\simeq 6.19989>6
\end{equation}
which is a bit higher than in \cite{pnas}. Due to the symmetry of $|D_k\rangle$ the violation is identical for all subsets of parties
$ABC$,$BCD$, $CDA$, $DAB$.

It turns out that the optimal violation is obtained by class 5
\begin{align}
&\mathcal W^5=A_1+B_1+C_1 -A_2B_2-B_2C_2-C_2A_2\nonumber\\
& +A_1B_2+A_2B_1+B_1C_2+B_2C_1+A_1C_2+A_2C_1\nonumber\\
&-A_1B_1C_1+A_2B_2C_2-A_2B_1C_1-A_1B_2C_1-A_1B_1C_2
\end{align}
with the classical bound $\leq 3$, quantum-classical (biseparable) bound $4\sqrt{2}-1\simeq 4.65685424949$ with $C_i=-1$, and the individual full quantum bound $\leq 8\sqrt{5}-13\simeq 4.88854382$ \cite{max322}
The quantum numerical bound over $|D^k_4\rangle$ gives
\begin{align}
\bar{\mathcal W}^{5}_{ABC}\simeq 3.2204751396970845
\end{align}
The value is the $4$th real root of the equation $f(x)\equiv\sum_i f_i x^i=0$, see Table \ref{fs5} in Appendix \ref{appmon},
confirmed also by NPA.

Expressing
\begin{equation}
\cos\phi=(u-1)/(u+1),\;\sin\phi=2\sqrt{u}/(u+1)
\end{equation}
we obtain the angle 
$
u\simeq 0.4352296958477135
$
which is the $4$th real root of $h(x)=\sum_i h_ix^i=0$, see Table \ref{us5}
so that
$
 \phi\simeq 1.9752375970999096
$
 and the maximizing state in Dicke basis
\begin{equation}
 |\psi\rangle=\sum_k\psi_k|D_k\rangle
\end{equation}
 reads 
\begin{equation}
\begin{pmatrix}
\psi_0\\
\psi_1\\
\psi_2\\
\psi_3\\
\psi_4
\end{pmatrix}
\simeq
\begin{pmatrix}
0.455452\\
 0.720303\\
0.315565\\
-0.362381\\
0.206945
\end{pmatrix}
\end{equation}
However, we can eliminate the middle term $\psi_2$ by rotating the basis.
Recall a single qubit rotation
\begin{equation}
Y_{2\theta}=\begin{pmatrix}
\cos\theta &-\sin\theta\\
\sin\theta &\cos\theta\end{pmatrix}=\exp(-i\theta Y)
\end{equation}
The joint rotation $Y_{2\theta}^{\otimes4}$
reads in the Dicke basis
$
Y_{2\theta}^{\otimes4}=\exp(\theta R)
$
where
\begin{equation}
R=\begin{pmatrix}
0&-2&0&0&0\\
2&0&-\sqrt{6}&0&0\\
0&\sqrt{6}&0&-\sqrt{6}&0\\
0&0&\sqrt{6}&0&-2\\
0&0&0&2&0\end{pmatrix}
\end{equation}
To eliminate $\psi_2$ the middle row of the rotation, corresponding to $D_2$, should nullify $|\psi\rangle$ which happens for
\begin{equation}
\theta\simeq -0.24462876880560303
\end{equation}
and the state reads
\begin{equation}
\begin{pmatrix}
\psi_0\\
\psi_1\\
\psi_2\\
\psi_3\\
\psi_4
\end{pmatrix}
\simeq
\begin{pmatrix}
0.623925216287\\
0.63436367573\\
 0\\
 -0.347285917038\\
 0.296129267758
\end{pmatrix}
\end{equation}
Such a combination can be easily implemented on IBM Quantum \cite{ibm,qis}.

One starts by creating an entangled 4-qubit state
\begin{equation}
\sqrt{\psi_0^2+\psi_1^2}|0000\rangle+\sqrt{\psi_3^2+\psi_4^2}|1111\rangle
\end{equation}
using a $CZ$ gates and then rotating qubit 1 conditionally with respect to qubit 2 by $CZ/RZZ$ gates to
 \begin{equation}
 \psi_0|0000\rangle+\psi_1|0100\rangle+\psi_3|1011\rangle+\psi_4|1111\rangle
 \end{equation}
 and finally distributing the middle states to the neighbors by $\exp(i\pi(XX+YY)/8)$ rotations which keep
 $|00\rangle$ and $|11\rangle$ unaffected, but rotate $|01\rangle\to (|01\rangle+i|10\rangle)/\sqrt{2}$
 and $|10\rangle\to (|10\rangle+i|01\rangle)/\sqrt{2}$, see details in Appendix \ref{appdi}. First between qubits $1$ and $2$, then individually between $1$ and $0$
 and $2$ and $3$, using zero-based qubit labels $0,1,2,3$, with phase corrections when necessary.
 
  The actual test has been run on \emph{ibm\_kingston} with 10 jobs, 12 repetitions, and 10000 shots, with qubits $8,9,10,11$, as $ABCD$, see Table \ref{ww5}.
  
\begin{table}
\begin{tabular}{ccccc}
\toprule
&$A$&$B$&$C$&$D$\\
\midrule
$\bar{\mathcal W}^5$&3.0556&3.0384&3.0617&3.1000\\
$\Delta{\mathcal W}^5$&0.0014&0.0014&0.0014&0.0014\\
\bottomrule
\end{tabular}
\caption{Experimental values of $\bar{W}^5$ and shot-noise standard errors for parties $ABCD$ obtained on IBM Quantum with the specified party excluded}
\label{ww5}
\end{table}

 There are several negative results. 
 A biseparable state, with two parties, e.g. $AB$ separated from $CD$, we have the total bound
 \begin{equation}
 \bar{\mathcal W}^5_{ABC}+ \bar{\mathcal W}^5_{BCD}+ \bar{\mathcal W}^5_{CDA}+ \bar{\mathcal W}^5_{DAB}\leq 12\label{wbis}
 \end{equation}
 which means that at least one of inequalities cannot be violated.
 As earlier all observables can be reduced to combinations of $X$ and $Z$ and so real.
 For this real-valued objective, it suffices to optimize over convex mixtures of products of real local states on $AB$ and $CD$. Such states satisfy the partial-transpose identity
$\langle PQ\rangle=\langle P^TQ\rangle$
 if $P$ is the operator polynomial in $AB$ space and $P^T$ is its transpose while $Q$ is a polynomial operator in $CD$ space. 
It is a simple consequence of products $\langle P\rangle=\langle P^T\rangle$ and $\langle PQ\rangle=\langle P\rangle\langle Q\rangle$ for product. Using the Partial Transpose equality $P^TQ\equiv PQ$ we can write the difference of the right-hand side and the left-hand side of (\ref{wbis}) as a sum of squares, see (\ref{sq5}) in Appendix \ref{appw5}.

Interestingly, separating a single party, e.g. $D$ from $ABC$ turns $D$ into a classical set of variables $D_{1,2}=\pm 1$
but the total sum
\begin{align}
& \bar{\mathcal W}^5_{ABC}+ \bar{\mathcal W}^5_{BCD}+ \bar{\mathcal W}^5_{CDA}+ \bar{\mathcal W}^5_{DAB}\nonumber\\
&\leq W_1\simeq 12.41985481086368
\end{align}
 where the bound $W_1$ is the second largest real root of $\bar{f}(x)\equiv\sum_i \bar{f}_ix^i=0$, see Table \ref{bs5a},
 and requires $D_1=1=-D_2$ and the other observables equal $A_1=B_1=C_1=Z$, $A_2=B_2=C_2=Z\cos\phi'+X\sin\phi'$ with $\phi'\simeq 2.08553$.
 It shows that the violations at the level $3.10496370272$ do not need genuine multipartite entanglement, only among random 3 parties.
 
 In fact dropping symmetry with 4 parties, we can always maximize $\bar{\mathcal S}_{AB}$ and $\bar{\mathcal S}_{CD}$ with separate entangled states.
 Then every subset $ABC$, $BCD$, $CDA$, $DAB$ contains one of them and so detects nonlocality. Therefore the polygamy challenge requires fair goal
 to capture shared multipartite entanglement.

 \section{Polygamy with MABK inequality with 5 parties}
 It is known that in the case of 5 parties, $ABCDE$, one can violate simultaneously MABK inequality with 4 parties \cite{mermin,ardehali,beli}
i.e.
\begin{equation}
\mathcal M_{ABCD}=\mathrm{Re}(1-i)ABCD\leq 4
\end{equation}
for $A=A_1+iA_2$, etc. or explicitly
\begin{align}
&A_1B_1C_1D_1+A_2B_2C_2D_2+A_2B_1C_1D_1+A_1B_2C_1D_1\nonumber\\
&+A_1B_1C_2D_1+A_1B_1C_1D_2-A_2B_2C_1D_1-A_1B_2C_2D_1\nonumber\\
&-A_1B_1C_2D_2-A_2B_1C_1D_2-A_2B_1C_2D_1-A_1B_2C_1D_2\nonumber\\
&-A_1B_2C_2D_2-A_2B_1C_2D_2-A_2B_2C_1D_2-A_2B_2C_2D_1
\end{align}
The inequality is flexible with respect to sign flips and exchanging settings.
The quantum bound $\bar{\mathcal M}_{ABCD}=8\sqrt{2}$ is obtained e.g. for $A_1=B_1=C_1=D_1=Y$, $A_2=B_2=C_2=D_2=X$
and the state
\begin{equation}
\sqrt{2}|\psi\rangle=|0000\rangle+e^{-i\pi/4}|1111\rangle
\end{equation}

Making one party classical, e.g. setting $D_1=D_2=1$ the inequality collapses to $2\mathcal M_{ABC}$ \cite{cere,werner}.
Interestingly for the $AB|CD$ biseparability, the quantum bound reduces to $4\sqrt{2}$ \cite{collinsm4} which is saturated either by 
a single entangled state $AB$ and classical $CD$ or a pair of entangled states at $AB$ and $CD$, see Appendix \ref{appms} and the
auxiliary Uffink inequality \cite{uff} in Appendix \ref{appuff}.

In total $ABCDE$ case we can consider
\begin{align}
&\mathcal M=\mathcal M_{ABCD}+\mathcal M_{BCDE}+\nonumber\\
&\mathcal M_{CDEA}+\mathcal M_{DEAB}+\mathcal M_{EABC}
\end{align}
For a genuinely multipartite entangled state the sum can reach $\bar{\mathcal M}=\pm 8\sqrt{10}\simeq \pm 25.298221$ or 
$8\sqrt{2/5}\simeq  5.05964425627$ per the subset of parties \cite{pnas}
in the state
\begin{align}
&\sqrt{10}|\psi\rangle=\sqrt{5}|11111\rangle\pm e^{-i\pi/4}\times\nonumber\\
&(|10000\rangle+|01000\rangle+|00100\rangle+|00010\rangle+|00001\rangle)
\end{align}
with $A_1=B_1=C_1=D_1=E_1=X$ and $A_2=B_2=C_2=D_2=E_2=Y$.

We tested this on \emph{ibm\_kingston}
$ABCDE$ represented by qubits $122,123,136,143,144$
with 10 jobs 6 repetitions and 10000 shots, see Table \ref{mm4}.

\begin{table}
\begin{tabular}{cccccc}
\toprule
&$A$&$B$&$C$&$D$&$E$\\
\midrule
$\bar{\mathcal M}_4$&4.11592&4.15381&4.17333&4.17890&4.19899\\
$\Delta{\mathcal M}_4$&0.00353&0.00353&0.00352&0.00352&0.00352\\
\bottomrule
\end{tabular}
\caption{Experimental values of $\bar{M}_4$ and shot-noise standard errors for parties $ABCDE$ with the specified party excluded} 
\label{mm4}
\end{table}

For the biseparability $ABCD|E$ with $E_1=E_2=1$, $A_1=B_1=C_1=D_1=Z$
and $A_2=B_2=C_2=D_2=((u-1)Z+2\sqrt{u}X)/(u+1)$
reaching $\bar{\mathcal M}$ value
$22.726730684148507$
 being the second largest real root of $w(x)\equiv\sum_i w_ix^i=0$ with coefficients in the Table \ref{m5a} in Appendix \ref{appm5},
while $u\simeq 1.233235$ is the fourth largest root of $v(x)=\sum_i v_i x^i=0$ in Table \ref{mu5a}.
The lowest bound in $ABCD|E$ is not symmetric, but minimally larger absolute values
$-22.76075931575539$ being the fourth smallest root of $\bar{w}(x)\equiv\sum_i \bar{w}_ix^i=0$ with coefficients in the Table \ref{m5b}
while $u\simeq 1.333520346238141$ is the third largest real root of $\bar{v}(x)=\sum_i \bar{v}_i x^i=0$ in Table \ref{mu5b}.

In the biseparability $ABC|DE$, the sum of 5 values can reach $20.649110640673517328$, also for $ABC|D|E$,see Appendix \ref{appm5}
However, one can get larger violation by the lower bound
$-21.0787936983$, see Appendix \ref{appm5}.
The triseparable $AB|CD|E$ and the fourseparable  $AB|C|D|E$ case gives the upper bound only classical $20$
but surprisingly the lower bound
is $-20.2624403701$  for the fourseparable state with only $AB$ entangled and $C_i=D_i=E_i=1$ for $i=1,2$
see Appendix \ref{appm5} 

In the complete classical case $A_i=B_i=C_i=D_i=E_i=1$ we get $-20$ ($-4$ per party).
The positive result $+20$ is obtained e.g. if $D_2=E_2=-1$.

  \section{Monogamies and trade-off with MABK inequalities}




The normalized MABK inequality reads
\begin{equation}
\mathcal A_n=\mathrm{Re}\;e^{i\pi (n-1)/4}A^{(1)}\cdots A^{(n)}/2^{(n-1)/2}
\end{equation}
Up to outcome and setting relabellings, this gives $\mathcal A_2=\mathcal S_{AB}/2$, $\mathcal A_3=\mathcal M_{ABC}/2$, $\mathcal A_4=\mathcal M_{ABCD}/4$, identifying $A^{(1)}=A$,
$A^{(2)}=B$, $A^{(3)}=C$, $A^{(4)}=D$.
Classically $A=\pm 1\pm i=\sqrt{2}e^{ik\pi/4}$ where $k=1,-1,3,-3$, i.e. $k$ is odd so $A^{(1)}\cdots A^{(n)}=2^{n/2}e^{i\alpha\pi/4}$ where
$\alpha$ has the same parity as $n$ so the classical bound is $|\bar{\mathcal A}_n|\leq 1$ (the extra $\sqrt{2}$ stems from real part of $\pm e^{\pm i\pi/4}$). 
On the other hand $A^{(k)}_1=X$, $A^{(k)}_2=Y$  on the  $GHZ$ state \cite{ghz}
$(|0^n\rangle+e^{i(1-n)\pi/4}|1^n\rangle)/\sqrt{2}$ gives $\bar{\mathcal A}_n=2^{(n-1)/2}$.
From Jordan's lemma, we can assume $A^{(j)}_{1,2}=Z^{(j)}\cos\phi_j\pm X^{(j)}\sin\phi_j$ and  then \cite{zuk}
\begin{align}
&|\bar{\mathcal A}_n|^2\nonumber\\
&=2\left|\mathrm{Re}\;e^{i(2n-1)\pi/4}\left\langle \prod_j
 (Z^{(j)}\cos\phi_j-iX^{(j)}\sin\phi_j)\right\rangle\right|^2\nonumber\\
&\leq \sum_i \langle P^{(1)}_{i_1}\cdots P^{(n)}_{i_n}\rangle^2\equiv (12\cdots n)\label{sumsq}
\end{align}
for  $i_k=0,1$ and $P_0=X$, $P_1=Z$ by Cauchy-Schwarz inequality applied to the two vectors of $2^n$ elements: averages $\langle P^{(1)}_{i_1}\cdots P^{(n)}_{i_n}\rangle$, and products $\pm\cos\phi_j$ and $\pm\sin\phi_j$ such that
exactly one of the two is chosen for each $j$. The norm of the latter vector is $1$ by $\cos^2+\sin^2=1$ applied to each angle. Note that $\mathrm{Re}\; e^{i(2n-1)\pi/4}(-i)^k=\pm 1/\sqrt{2}$ cancels the front factor.
The right-hand side of (\ref{sumsq}) allows to derive a variety of monogamy and trade-off relations.

We shall denote \emph{Pauli string} any tensor operator of the form
\begin{equation}
F_i=P^{(1)}_{i_1}\cdots P^{(n)}_{i_n}
\end{equation}
where $i_k=\bullet,0,1$ and $P_{\bullet}=I$ (inactive), $P_0=X$, $P_1=Z$ (active). For example a Pauli string $i=01\bullet\bullet 001010\bullet 110\bullet$ reads
\begin{equation}
XZIIXXZXZXIZZXI
\end{equation}

The traditional method of finding upper bounds on $\sum_i \langle F_i\rangle^2$ relied on grouping Pauli strings into anticommuting cliques \cite{kurz,mironowicz, tran}.
Note that
\begin{equation}
F_iF_j=(-1)^{h(i,j)} F_jF_i \label{annt}
\end{equation}
where $h(i,j)$ is the number of different bits (Hamming distance \cite{hamm}) on non-$\bullet$ (active) positions.
The scope can  be generalized to the whole set of Pauli matrices $P\in\{I,X,Y,Z\}$.
The sign is $+$ (commuting case) if it is even and $-$ if odd.
Since 
\begin{equation}
\sum_{i\in C}\langle F_i\rangle ^2\leq 1
\end{equation}
for the anticommuting clique $C$, identifying clique cover of the given set of Pauli strings, one find the bound.

We now use a semidefinite strengthening of clique grouping. 
The Pauli-correlation method connects the earlier monogamy constructions
\cite{kurz,mironowicz,tran} with graph uncertainty bounds and their
state-polynomial refinements \cite{degois2023,bermejo2024}.
Suppose we want to find an upper bound on the sum
\begin{equation}
\sum_j \langle F_j\rangle^2\leq \Lambda\label{labe}
\end{equation}
where $\langle F\rangle\equiv \langle\psi|F|\psi\rangle$, for real normalized state $|\psi\rangle$ ($\langle\psi|\psi\rangle=1$).
Remember that from convexity and Jordan's lemma  (Appendix \ref{appjor}), we can restrict to real qubit space.
More generally, let $\mathcal G$ be the anticommutation graph whose vertices are the Pauli
strings $F_i$ as vertices and edges/no-edges for anticommuting/commuting pairs, i.e. $h(i,j)\equiv 1/0$ mod 2 in (\ref{annt}). 
The bound (\ref{labe}) is then known as $\beta(\mathcal G)$ \cite{hast,degois2023,xu2024beta}.
The see-saw method \cite{sdp2} searches for feasible values, and therefore lower bounds on the global maximum, by maximizing eigenvalues of
\begin{equation}
H(x)=\sum_j x_j F_j
\end{equation}
at the constraint $\sum_j x_j^2=1$. Note that $\langle H(x)\rangle^2\leq \sum_j\langle F_j\rangle^2=S$
and, when $S>0$, equality holds at $x_j=\langle F_j\rangle/\sqrt{S}$.
The exact optimum satisfies
\[
 \Lambda_* =\max_\psi\sum_j\langle F_j\rangle_\psi^2
 =\left(\max_{\|x\|=1}\lambda_{\max}(H(x))\right)^2.
\]
The zero-correlation case is immediate. The see-saw algorithm is as follows. Start with some $x$.
The main loop reads:
\begin{enumerate}
\item Find maximal eigenvalue of $H$
\item Find the corresponding eigenvector and normalize it to $|\psi\rangle$
\item Evaluate $\bar{x}_j=\langle\psi|F_j|\psi\rangle$
\item Set new $x_j=\bar{x}_j/\sqrt{S}$, with $S=\sum_j \bar{x}_j^2$
\end{enumerate}
One stops when the desired accuracy is reached.
The problem is not convex, so the iteration does not certify the global maximum. Multiple starts can improve the feasible value but cannot certify an upper bound. In principle one can test standard zeros of the derivative vector but, having a lot of variables and
the exponentially growing space, this way is completely impractical especially for large systems.

We present a guaranteed approach. Due to the fact $\langle H^2\rangle-\langle H\rangle^2=\langle (H-\langle H\rangle)^2\rangle\geq 0$,
take 
\begin{align}
&\langle\psi|H(x)|\psi\rangle^2\leq\langle\psi| H^2(x)|\psi\rangle =\nonumber\\
&\sum_{ij} x_i\langle \psi|(F_iF_j+F_jF_i)|\psi\rangle x_j/2=\langle x|\langle\psi|G|\psi\rangle|x\rangle/2
\end{align}
where $G$ is tensor matrix indexed by both strings and qubit states 
with blocks $G_{ij}=F_iF_j+F_jF_i$, i.e. symmetrized in the qubit space. We remind that all strings have the property $F_iF_j=\pm F_jF_i$, i.e. they commute ($+$) or anticommute ($-$)
so $G_{ij}=0$ whenever $F_i$ and $F_j$ anticommute.

Now we can apply SDP method. Our goal is to find the upper bound on
\begin{equation}
\tfrac12\mathrm{Tr}\bigl[G(\rho_\psi\otimes\rho_x)\bigr]
\end{equation}
with $\rho_\psi=|\psi\rangle\langle\psi|$
and $\rho_x=|x\rangle\langle x|$.
The product condition $\rho=\rho_\psi\otimes\rho_x$ is nonlinear. For an SDP upper bound we enlarge the set of real product states to real density matrices $\rho$, with entries $\rho_{ai,bj}$ and
$a,b$ indexing the qubit space, vectors  $|\psi\rangle$ while $i,j$ indexing the string space, vectors $|x\rangle$.
The matrix is
\begin{itemize}
\item Hermitian, semidefinite $\rho\succeq 0$\\
\item normalized $\mathrm{Tr}\rho=1$\\
\item invariant under partial transpose ($T_x$): $\rho^{T_x}_{ai,bj}=\rho_{aj,bi}$ and $\rho=\rho^{T_x}$.
\end{itemize}
Every mixture of such product states is partial-transpose invariant. The converse is not assumed: this is a relaxation, not a characterization of separability. Nor does the PPT criterion \cite{ppt,ppth,upb} becomes an equality. Before this relaxation, the linear objective has the same maximum over product states and their convex mixtures
\begin{equation}
\rho=\sum_k p_k\rho_{k,\psi}\otimes\rho_{k,x}
\end{equation}
with $p_k\geq 0$, $\sum_k p_k=1$. Separately, convexity of the original objective in a physical state $\sigma=\sum_k p_k\sigma_k$ gives
\begin{align}
&\sum_j\langle F_j\rangle_\sigma^2=\sum_j\left(\sum_k p_k\langle F_j\rangle_k\right)^2=\nonumber\\
&\sum_j\left(\sum_k p_k\bar{F}_{jk}\right)^2\leq \sum_{jk}p_k\bar{F}_{jk}^2=\sum_k p_k\sum_j\bar{F}_{jk}^2
\end{align}
with $\bar{F}_{jk}=\langle F_j\rangle_{\sigma_k}$ and we used the nonnegativity of the variance.
Now the problem is convex and SDP methods apply, but the matrix $G$ is large, although the constraints are sparse.

Our goal is to retrieve Sum-Of-Squares certificate, i.e. express
\begin{equation}
2\Lambda I-G=V+V^{T_x}
\end{equation}
where $V$ is positive semidefinite (PSD). This identity gives a nonnegative expectation on real product states; it does not assert that $V+V^{T_x}$ is PSD on arbitrary entangled states.


Let us define the entangling operation (controlled $F$)
\begin{equation}
CF=\bigoplus_i F_i\equiv \mathrm{diag}(\{F_i\})=\sum_i I_i F_i.
\end{equation}
This is a diagonal block matrix (direct sum) with blocks of the size of the qubit space, and each block
is controlled by the particular Pauli string index projection $I_i=|i\rangle\langle i|$. The operation is unitary, symmetric, and self-reverse $CF^2=I$ as $F_i^2=I$.

We postulate $V=CF(\Gamma I)CF$, i.e.
$V$ is a block matrix with qubit blocks
\begin{equation}
V_{ij}=\Gamma_{ij}F_iF_j
\end{equation}
where $\Gamma$ is a real symmetric matrix only in the string space.
By symmetry of Pauli strings
\begin{equation}
V^{T_x}_{ij}=\Gamma_{ij}F_jF_i
\end{equation}

We have  $\Gamma_{ii}=\Lambda-1$. For $i\neq j$, $\Gamma_{ij}=-1$ if $F_iF_j=F_jF_i$ (commuting).
Otherwise, for the anticommuting pairs, it is a free value which cancels with $V^{T_x}$.
The problem reduced  to find PSD $\Gamma$ with fixed commuting entries at $-1$ and minimal diagonal entries.
It turns out that determining $\Gamma$ to find the minimal $\Lambda$ is relatively easy in many cases, and still programmable (by SDP)
in more complicated cases. The commuting background can be separated to $\Gamma=\Delta-J$ with $J_{ij}=1$ always, and then $\Delta_{ii}=\Lambda$,
$\Delta_{ij}=0$ when $i$, $j$ are different and commute, and otherwise it is free (to reach PSD $\Gamma$).
 Optimizing $\Lambda $ over all free entries gives 
exactly Lovasz  function $\vartheta(\mathcal G)$ \cite{hast,degois2023,xu2024beta}.
In general $\beta(\mathcal G)\leq\vartheta(\mathcal G)$ but the equality holds in many cases, see Appendix~\ref{app:theta-duality}
for the details. That appendix derives the $\Gamma$ dual from the standard trace-normalized primal,
proves equality by strict feasibility, and connects it to the augmented
moment formulation and the duals of higher-moment relaxations.
Thus $\Gamma$ is an explicit $\vartheta$ certificate, not a different graph
parameter. Extra sparsity restrictions can weaken the certificate,
whereas averaging over graph symmetries preserves the optimum.
The hierarchy $\alpha\leq\beta\leq\vartheta\leq\overline\chi_f
\leq\overline\chi$ distinguishes exact quantum bounds from independence and fractional
and integer clique grouping; see Appendix~\ref{app:theta-duality}.

Firstly, in the case of a completely anticommuting set of $F_j$, only diagonal terms of $\Gamma$ are fixed, so $\Gamma=0$ fulfills the PSD condition with $\Lambda =1$. Having two anticommuting cliques $C_1$ and $C_2$ of sizes $c_1$ and $c_2$ we can set $\Gamma_{ij}=-1$ if $i$ and $j$ belong to the different cliques 
(even if they accidentally anticommute), and $+1$
if they belong to the same clique. Now one off-diagonal block reads $-\sqrt{c_1c_2}|\bar{c}_1\rangle\langle \bar{c}_2|$ where
\begin{equation}
|\bar{c}_k\rangle=\frac{1}{\sqrt{c_k}}\sum_{i\in C_k}|i\rangle
\end{equation}
while the diagonal ones read $c_k|\bar{c}_k\rangle\langle \bar{c}_k|$ for $k=1,2$.
The matrix
\begin{equation}
\begin{pmatrix}
c_1&-\sqrt{c_1c_2}\\
-\sqrt{c_1c_2}&c_2
\end{pmatrix}
\end{equation}
is semidefinite as $c_{1,2}>0$ and $\det=0$ (one zero eigenvalue).
The generalization to $n$ cliques with the limit $\Lambda=n$ is straightforward,
\begin{equation}
\Gamma_{ij}=\left\{\begin{array}{l}
n-1\mbox{ if }C(i)=C(j),\\
-1\mbox{ otherwise }
\end{array}\right.
\end{equation}
with $C(i)$ being the clique number of $i$.
Again $\Gamma$ has the structure
\begin{equation}
\sum_i (n-1)c_i|\bar{c}_i\rangle\langle\bar{c}_i|-\sum_{i\neq j}\sqrt{c_ic_j}|\bar{c}_i\rangle\langle\bar{c}_j|
\end{equation}
However, for each vector
$
|x\rangle=\sum_i x_i|\bar{c}_i\rangle
$
we have
\begin{equation}
\langle x|\Gamma|x\rangle=\sum_i (n-1)c_ix_i^2-\sum_{i\neq j}\sqrt{c_ic_j}x_ix_j
 \end{equation}
 Defining $y_i=\sqrt{c_i}x_i$ we get
 \begin{equation}
 \langle x|\Gamma|x\rangle=\sum_i ny_i^2-\left(\sum_i y_i\right)^2\geq 0
\end{equation}
by Cauchy--Schwarz, so $\Gamma$ is PSD. This proves $\vartheta\leq\overline\chi$ for an integer clique partition. The stronger standard relation $\vartheta\leq\overline\chi_f\leq\overline\chi$, with fractional clique-cover number $\overline\chi_f$, follows from the weighted clique-cover formulation \cite{knuth1994}.

 \begin{figure}
\includegraphics[scale=0.7]{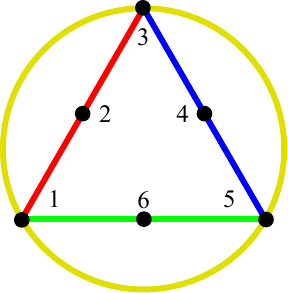}
\caption{Monogamy graph with 4 Mermin correlations on 6 qubits. It includes the $C_3$ graph when removing the yellow circle.}
\label{monc3}
\end{figure}

In many cases $\Gamma$ is not unique. In the cases solved previously
 by anticommuting cliques \cite{kurz,mironowicz,tran}, we can use  different, e.g. symmetric, but still valid $\Gamma$ choices.
For instance,
the ring-triangle, Fig. \ref{monc3}
\begin{equation}
(123)+(345)+(561)+(135)\leq 4\label{cc3}
\end{equation}
Here $\Delta_{ij}=4$ for the identical active sets in $i$ and $j$
and 1 difference on $123$, $345$, $561$ on $2$, $4$, $6$, respectively, or 3 
differences on $135$,
$\Delta_{ij}=2$ for differences $1$  when active sets
of $i$ and $j$ differ, and $\Delta_{ij}=0$ otherwise.

Another example is 
\begin{align}
&(1234)+(1345)+(1452)+(1523)+\nonumber\\
&(1678)+(1789)+(1896)+(1967)\leq 8\label{biset}
\end{align}
which consists of two pools $2345$ and $6789$ linked by the hub $1$.
Now
 $\Delta_{ij}=8$ for the identical
active sets and 3 differences but  identical on the qubit $1$ (i.e. the differences are on the remaining qubits)
$\Delta_{ij}=4$ for $i$ and $j$ in the different active sets but within the same pool (either $12345$ or $16789$)
identical on qubit $1$ ($1$  difference),
$\Delta_{ij}=2$ for $i$ and $j$ from different pools ($1$ difference), and $0$ otherwise.
In Appendix \ref{appmm} we present also the standard case of binary tree.

To demonstrate the power of the new method, we will show monogamy when no longer an anticommuting clique cover exists.
The first example is Mercedes configuration
\begin{equation}
(123)+(14)+(24)+(34)\leq 4\label{merc}
\end{equation}
presented in Fig. \ref{monmerc}.
Classical equal saturation can be traded for outer Mermin maximal violation.
The graph has the largest anticommuting clique of size $4$ and covering requires at least 6 cliques.
We can however construct $\Delta$ as follows:
$\Delta_{ij}=-2$ when $i$ and $j$ both belong to $123$ with 3 differences (4 cases);
$\Delta_{ij}=1$ when $i$ and $j$ belong to the different 2-edges (e.g $14$ and $24$) with 1 difference on 4 (24 cases)
$\Delta_{ij}=+2$ otherwise with 1 difference on common qubit.

For a comparison
\begin{equation}
(12)+(23)+(31)+(124)\leq 4\label{merc2}
\end{equation}
can be solved by clique cover because of slightly different topology.

 \begin{figure}
\includegraphics[scale=0.7]{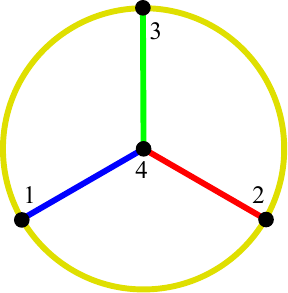}
\caption{Mercedes monogamy graph with Bell and Mermin correlations on 4 qubits.}
\label{monmerc}
\end{figure}

Another example is
\begin{equation}
(125)+(235)+(345)+(415) \leq 4\label{moo4}
\end{equation}
presented in Fig. \ref{mof4}. The graph consists of the central hub 5 always active
and four edges of the square $1234$. One cannot violate simultaneously all 4 Mermin inequalities here.
This support configuration is the three-party ladder case \cite{mironowicz}, with hub $A=5$, $B^{(1,2)}=1,3$, and
$C^{(1,2)}=2,4$. The matrix below gives an alternative exact
squared-correlation certificate.
The largest anticommuting clique is here of size 6, see the example in Table \ref{clic1}.
The valid values of $\Delta$ are as follows:
$\Delta_{ij}=-2$  for $i$ and $j$ from the same active set differing on the hub 5 (16 cases);
$\Delta_{ij}=+2$ for $i$ and $j$ from the same active set with 3 differences or 1 difference
on the square edge (48 cases);
$\Delta_{ij}=+2$ for $i$ and $j$ from different active sets and 1 difference (192 cases).

\begin{figure}
\includegraphics[scale=0.7]{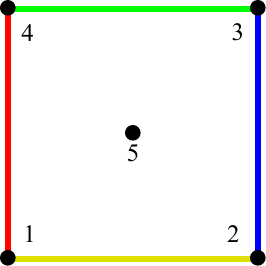}
\caption{Monogamy on $\mathcal M_{ABC}$ with the edges as excluded pairs.}
\label{mof4}
\end{figure}

\begin{table}
\begin{tabular}{cccccc}
\toprule
1&2&3&4&5\\
\midrule
$X$&$X$&$I$&$I$&$X$\\
$Z$&$X$&$I$&$I$&$X$\\
$I$&$Z$&$X$&$I$&$X$\\
$I$&$I$&$X$&$X$&$Z$\\
$I$&$X$&$Z$&$I$&$Z$\\
$I$&$I$&$X$&$Z$&$Z$\\
\bottomrule
\end{tabular}
\caption{An example of a clique of size 6 of the Pauli strings in (\ref{moo4})}
\label{clic1}
\end{table}

Another example is
\begin{align}
&(2345)+(1345)+(1245)+(1235)+\nonumber\\
&(2346)+(1346)+(1246)+(1236)\leq 8\label{mot8}
\end{align}
presented in Fig. \ref{mon8}. The graph consists of poles at $5$ and $6$,  and leaves at  the equator $1234$. It makes it impossible to violate simultaneously all 8 
MABK inequalities in the 6-qubit space.
By permutation symmetry one cannot violate all 15 level-4 MABK inequalities on the 6-qubit grid \cite{munne}.
The largest anticommuting clique is here of size 9, see the example in Table \ref{clic}.
The valid values of $\Delta$ are as follows:
$\Delta_{ij}=6$  for $i$ and $j$ from the same active set, 
identical on the pole and 3 differences on the leaves (64 cases);
$\Delta_{ij}=2$ for $i$ and $j$ from the same active set, 
different solely on the pole (64 cases);
$\Delta_{ij}=2$ for $i$ and $j$ from the same active set, 
different on the pole and two leaves (192 cases)
$\Delta_{ij}=2$ for $i$ and $j$ with different active sets
and 3 differences (512 cases);
$\Delta_{ij}=2$ for $i$ and $j$ with different active sets and 1 difference
(2-overlap $3\cdot 2^9$ cases, 3-overlap $3\cdot 2^9$ cases);
$\Delta_{ij}=-2$ for $i$ and $j$ from the same active set, 
with a single difference on the leaf (192 cases).

\begin{table}
\begin{tabular}{cccccc}
\toprule
1&2&3&4&5&6\\
\midrule
$I$&$X$&$X$&$X$&$X$&$I$\\
$I$&$X$&$X$&$X$&$Z$&$I$\\
$I$&$Z$&$X$&$X$&$I$&$X$\\
$Z$&$I$&$Z$&$X$&$I$&$X$\\
$X$&$I$&$X$&$Z$&$I$&$X$\\
$Z$&$Z$&$I$&$X$&$I$&$Z$\\
$X$&$X$&$I$&$Z$&$I$&$Z$\\
$X$&$Z$&$X$&$I$&$I$&$Z$\\
$Z$&$X$&$Z$&$I$&$I$&$Z$\\
\bottomrule
\end{tabular}
\caption{An example of a clique of size 9 of the Pauli strings in (\ref{mot8})}
\label{clic}
\end{table}

\begin{figure}
\includegraphics[scale=0.7]{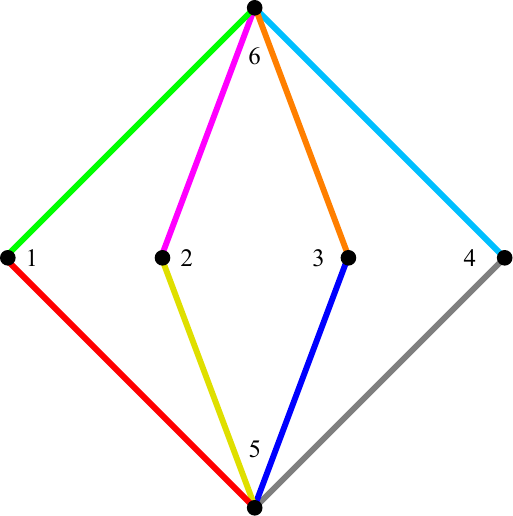}
\caption{Monogamy graph for $\mathcal M_{ABCD}$ correlations for 6 qubits. Here the edges represent pairs not included in the subset.}
\label{mon8}
\end{figure}

We can find not only monogamy relations but also trade-off.

We will show that for $n\geq 4$
\begin{equation}
\sum_j (12\cdots \hat{j}\cdots n)\leq 2^{n-2}\label{trade}
\end{equation}
where $\hat{j}$ means omission of the party. For instance
\begin{align}
&(123)+(234)+(341)+(412)\leq 4,\nonumber\\
&(1234)+(2345)+(3451)+(4512)+(5123)\leq 8
\end{align}
The first inequality is the known Mermin monogamy \cite{kurz}, while the second is a new trade-off relation.
One can check that the maximal anticommuting clique in the latter case has size 8.
In general the bound (\ref{trade}) is saturated by a single subset or jointly by a superposition of $|D^1_n\rangle$ and $|D^n_n\rangle$ Dicke states (\ref{dicc})\cite{pnas}.

It turns out that $\Gamma$ can be defined as follows.
If $i$ and $j$ have $I$ ($\bullet$) at different positions, we assign $\Gamma_{ij}=+1$ for anticommuting pairs.
Otherwise, for the same positions, $\Gamma_{ij}=\Gamma_{h(i,j)}$ is a function of Hamming distance $h(i,j)$.
When $h$ is even, the commutation fixes the $-1$ value so we can use only the odd weights.
For $n=4$, $\Gamma_1=-1$, $\Gamma_3=3$. For $n=5$, $\Gamma_1=-2$, $\Gamma_3=2$.
For $n=6$ the result is no longer unique, there exist many choices of $\Gamma$, $\Gamma_3=-2-\Gamma_1$, $\Gamma_5=20+5\Gamma_1$ and $\Gamma_1\in [-5,-1]$.
One uniform choice for every $n\geq4$ is $\Gamma_h=2h-(n-1)$ for odd $h$,
with $\Gamma_0=2^{n-2}-1$ and $\Gamma_h=-1$ for positive even $h$.
A short Walsh--Hadamard proof is given at the start of Appendix~\ref{appkk};
the subsequent Krawtchouk formulation gives the more general freedom.

Another trade-off relation
\begin{align}
&
(02345)+(03451)+(04512)+(05123)+(01234)
+\nonumber\\
&(0\overline{2345})+(0\overline{3451})+(0\overline{4512})+(0\overline{5123})
+(0\overline{1234})\leq 16,
\end{align}
belongs to the same family as (\ref{biset}) with two pools $12345$ and $\overline{12345}$ linked by the hub $0$. Here
$\Delta_{ij}=6$ for the same active sets $i$ and $j$ and $3$ differences, identical on qubit $0$,
$\Delta_{ij}=4$ for different active sets $i$, $j$, 
but within the same pool (either $012345$ or $0\overline{12345}$) identical on qubit $0$, with $1$ or $3$ differences on common qubits,
$\Delta_{ij}=2$ for the same active sets $i$ and $j$ and $1$ difference, identical on qubit $0$,
$\Delta_{ij}=2$ for the active sets $i$ and $j$ from different pools, differing on qubit $0$.
Otherwise $\Delta_{ij}=0$.

\begin{figure}
\includegraphics[scale=0.7]{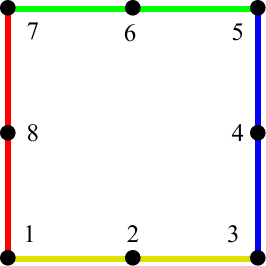}
\includegraphics[scale=0.7]{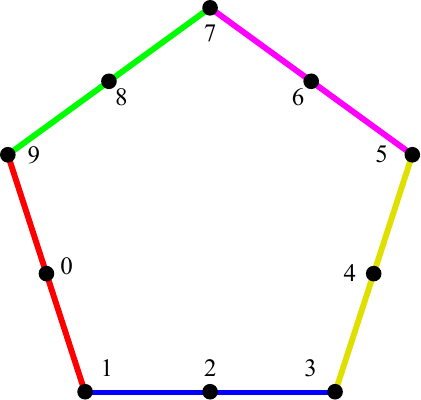}
\caption{Trade-off $C_4$ and $C_5$ graphs with Mermin correlations}
\label{mon45}
\end{figure}

The harder cases belong to the $C_n$ family \cite{tran},
\begin{equation}
C_n=\sum_{i=1}^n(2i-1\: 2i\:2i+1)\label{cnn}
\end{equation}
identifying $i+2n\equiv i$,
 e.g. $C_3$,
\begin{equation}
(123)+(345)+(561)\leq 4
\end{equation}
is an incomplete graph (\ref{cc3}) in Fig. \ref{monc3}.
For the explicit computational-basis states in this subsection, use the locally equivalent $X,Y$ measurement plane instead of the $X,Z$ plane in the definition of $(ijk)$; the norm bounds are unchanged by the corresponding local basis rotation.
All Mermin inequalities are equally violated with $\langle XYY+YXY+YYX-XXX\rangle=4/\sqrt{3}$
saturating the bound for the state
\begin{equation}
|\psi\rangle=\frac{1}{\sqrt{2}}|000000\rangle-\frac{1}{\sqrt{6}}\sum_{k=0}^2 T_3^k|001110\rangle
\end{equation}
with shift by 2, $T_n|c_1\cdots c_{2n}\rangle=|c_{2n-1}c_{2n}c_1\cdots c_{2n-2}\rangle$

For the two overlapping triples $(123)$ and $(345)$, the bound $4$ is also saturated by combined Bell states
\begin{equation}
2^{-3/2}(|00\rangle+|11\rangle)(|0\rangle+|1\rangle)(|00\rangle+|11\rangle)\label{bb2}
\end{equation}
with $(123)=(345)=2$, or
\begin{equation}
a|00000\rangle+b|11011\rangle+c|00111\rangle+d|11100\rangle\label{gg2}
\end{equation}
with $|a|^2+|b|^2=|c|^2+|d|^2=1/2$.
Then $(123)=8(|ad|^2+|bc|^2)$ and $(345)=8(|ac|^2+|bd|^2)$

The graph $C_4$, see Fig. \ref{mon45}, is bounded by
\begin{equation}
(123)+(345)+(567)+(781)\leq 8
\end{equation}
still violates equally Mermin inequality with $\langle XYY+YXY+YYX-XXX\rangle=\sqrt{6}$
for the state
\begin{align}
&|\psi\rangle = \frac{1}{\sqrt{3}}|00000000\rangle  + \frac{1}{2\sqrt{3}} \sum_{k=0}^{1} T_4^k |10111011\rangle\nonumber\\
& - \frac{1}{2\sqrt{2}} \sum_{k=0}^{3} T_4^k |00001110\rangle
\end{align}
but this no longer saturates the bound.

To saturate, we have to take qubits $123$, $567$ ($4$ and $8$ are arbitrary)
and the product
$|GHZ\rangle^2$
for
\begin{equation}
\sqrt{2}|GHZ\rangle=|000\rangle+|111\rangle
\end{equation}
Then
$\langle XXX-XYY-YXY-YYX\rangle=4$ for both groups $123$ and $567$ while
$0$ for $345$ and $781$. All even cases are saturated by analogous products of $GHZ$ states.

However, the case $C_5$, see Fig. \ref{mon45}, gives
\begin{align}
&(123)+(345)+(567)+(789)+(901)\nonumber\\
&\leq 4\sqrt{5}\simeq 8.94427191\leq 10,
\end{align}
confirmed exactly in $Q(\sqrt{5})$ field, which is below the clique limit $10$ \cite{tran}. On the other hand the see-saw lower bound reaches only $8$
(achievable by preceding even case). This leaves a gap between the feasible value and the first-level bound. The algebraic proof below closes this gap for every odd cycle; a higher-order SDP is not needed for that purpose.

Nevertheless, it is only a trade-off relation as Mermin inequality is still jointly violated $\langle XYY+YXY+YYX-XXX\rangle=2.4$ for
\begin{align}
&|\psi\rangle=\frac{\sqrt{5}}{3\sqrt{2}}|0000000000\rangle+\frac{2}{3\sqrt{10}}\sum_{k=0}^4 T_5^k
|1110111000\rangle\nonumber\\
&-\frac{1}{\sqrt{10}}\sum_{k=0}^4T^k_5|1110000000\rangle
\end{align}

 \subsection{Odd cycles }
 
 The nontrivial case when $\beta<\vartheta$ are graphs $C_{2n+1}$ \cite{tran}, with  $4\sqrt{5}$ for $C_5$ as mentioned above.
 The following algebraic argument proves
 \begin{equation}
 C_{2n+1}=(123)+(345)+\cdots (4n+1\:  0\:1)\leq 4n\label{c2n1}
 \end{equation}
 for every $n\geq1$.

We use the odd-cycle relation
\cite{xu2024beta}.
Let $A,B,C$ be Hermitian involutions satisfying
\begin{equation}
 [A,C]=0,\qquad \{A,B\}=\{B,C\}=0.
\end{equation}
Then every state $\rho$ satisfies
\begin{equation}
 \langle A\rangle^2+\langle B\rangle^2
 +\langle C\rangle^2\leq 1+\langle AC\rangle^2.
\end{equation}
Indeed, $D=AC$ is a Hermitian involution commuting with $A,B,C$.
For $P_\pm=(I\pm D)/2$, put
 \begin{align}
 p_\pm&=\operatorname{Tr}(\rho P_\pm),\nonumber\\
 u_\pm&=\operatorname{Tr}(\rho A P_\pm),\qquad
 v_\pm=\operatorname{Tr}(\rho B P_\pm).
 \end{align}
The uncertainty relation for the two anticommuting involutions $A,B$
within either sector gives
$u_\pm^2+v_\pm^2\leq p_\pm^2$; this also holds when $p_\pm=0$.
Since $C=AD$ and $p_++p_-=1$, we obtain
 \begin{align}
 &\langle A\rangle^2+\langle B\rangle^2
       +\langle C\rangle^2\nonumber\\
 &=2u_+^2+2u_-^2+(v_++v_-)^2\nonumber\\
 &\leq2(u_+^2+v_+^2+u_-^2+v_-^2)\nonumber\\
 &\leq2(p_+^2+p_-^2)
   =1+\langle D\rangle^2.
 \end{align}
The argument above uses
only the stated algebraic relations, not that particular representation.

Let $m=2n+1\geq3$, and let $B_1,\ldots,B_m$ be Hermitian
involutions. Suppose adjacent operators anticommute, including $B_m,B_1$,
and all other distinct pairs commute. Then, for every state $\rho$,
\begin{equation}
 \sum_{j=1}^m\langle B_j\rangle^2\leq n.
\end{equation}
For $m=3$ this is the usual uncertainty relation (anticommuting clique) for three pairwise
anticommuting involutions. Suppose $m\geq5$ and define $D=B_1B_3$.
Applying the preceding local inequality to $B_1,B_2,B_3$ gives
\begin{equation}
 \sum_{j=1}^m\langle B_j\rangle^2
 \leq1+\langle D\rangle^2
          +\sum_{j=4}^m\langle B_j\rangle^2.
\end{equation}
The operators $D,B_4,\ldots,B_m$ form a cycle of length $m-2$:
$D$ anticommutes with $B_4$ through its factor $B_3$, and with $B_m$
through its factor $B_1$, while it commutes with every intermediate
$B_5,\ldots,B_{m-1}$. All relations among the retained $B_j$ are unchanged.
Induction therefore bounds the last two terms by $n-1$, proving the lemma.
In particular, the $m=5$ case follows from one contraction to a triangle;
no separate pentagon calculation is needed. This proof also applies to
linear combinations of Pauli strings whenever the displayed involution
and commutation relations hold.

Write $P_0=X$, $P_1=Z$. For each $(u,v)\in\{0,1\}^2$, define
\begin{equation}
 \ell_j(u,v)=
 \begin{cases}
 u,&j<m\text{ odd},\\
 v,&j<m\text{ even},\\
 u\oplus v,&j=m,
 \end{cases}
 \qquad \ell_{m+1}=\ell_1.
\end{equation}
In the family indexed by $(u,v)$ take the two operators
\begin{equation}
 F^{uv}_{j,b}=
 P_{\ell_j}^{(2j-1)}P_b^{(2j)}P_{1-\ell_{j+1}}^{(2j+1)},
 \qquad b=0,1,
\end{equation}
with physical indices understood modulo $2m$.
For each $j$, the two binary linear forms $\ell_j,\ell_{j+1}$
are distinct nonzero forms on $\mathbb F_2^2$. Consequently,
as $(u,v)$ varies, $(\ell_j,1-\ell_{j+1})$ takes all four values.
The four families therefore partition all $8m$ Pauli strings in $C_m$.

Within a fixed family, $F^{uv}_{j,0}$ and $F^{uv}_{j,1}$ anticommute.
Every member of pair $j$ anticommutes with every member of pair $j+1$:
on their common qubit the labels are $1-\ell_{j+1}$ and
$\ell_{j+1}$. Nonadjacent pairs have disjoint supports and commute.

For a fixed state put
\begin{equation}
 a_j=\langle F^{uv}_{j,0}\rangle,\quad
 b_j=\langle F^{uv}_{j,1}\rangle,\quad
 r_j=\sqrt{a_j^2+b_j^2}.
\end{equation}
For $r_j>0$ define
$B_j=(a_jF^{uv}_{j,0}+b_jF^{uv}_{j,1})/r_j$;
if $r_j=0$, take $B_j=F^{uv}_{j,0}$.
These operators satisfy the odd-cycle lemma and
$\langle B_j\rangle=r_j$. Thus each family contributes at most $n$:
\begin{equation}
 \sum_{j=1}^m\sum_{b=0}^1
       \langle F^{uv}_{j,b}\rangle^2\leq n.
\end{equation}
Summing the four inequalities proves (\ref{c2n1}).

The bound is sharp.
Take independent GHZ-equivalent states on the $n$ disjoint triples
with indices $j=1,3,\ldots,2n-1$, and a product state on the remaining
qubits. In the $X,Z$ plane one explicit choice on each selected triple is
\begin{equation}
 |g\rangle=\frac{|000\rangle-|011\rangle-|101\rangle-|110\rangle}{2}.
\end{equation}
It has $\langle ZZZ\rangle=1$ and
$\langle ZXX\rangle=\langle XZX\rangle=\langle XXZ\rangle=-1$.
Each selected triple contributes $4$; every other triple contains a
single qubit from an independent selected state and contributes zero,
since its one-qubit marginal is maximally mixed. Hence $C_{2n+1}=4n$.

\section{Trade-offs and monogamies with 3 settings}
 
 \subsection{Pairwise Bell inequality for 3 parties}
 With only 3 parties $ABC$ we have the monogamy relation when taking the scenario with a party, say $A$ sharing its observables in $\mathcal S_{AB}$
 and $\mathcal S_{AC}$. We can weaken this assumption introducing more observables at $A$. With 4 observables $A$ and simply use two with $B$ and two with $C$
 on separate entangled states, maximizing the violation. Only the case of $3$ observables is not that obvious.
 Suppose $A$ uses $A_{1,2}$ in $\mathcal S_{AB}$ and $A_{1,3}$ in $\mathcal S_{AC}$, i.e.
 \begin{align}
&\mathcal S_{AB}=A_1B_1+A_2B_1+A_1B_3-A_2B_3,\nonumber\\
&\mathcal S_{CA}=C_1A_1+C_2A_1+C_1A_3-C_2A_3
\end{align}
It turns out that
\begin{equation}
\bar{\mathcal S}_{AB}+\bar{\mathcal S}_{CA}\leq 2+2\sqrt{2}\label{ss2}
\end{equation}
and the maximum is attained if one inequality is maximized, e.g. $\bar{\mathcal S}_{AB}=2\sqrt{2}$ 
by standard setting and the state, while $C_1=A_3=-C_2=1$ are classical values. The proof is in Appendix \ref{appb3}.

\begin{figure}
\includegraphics[scale=.5]{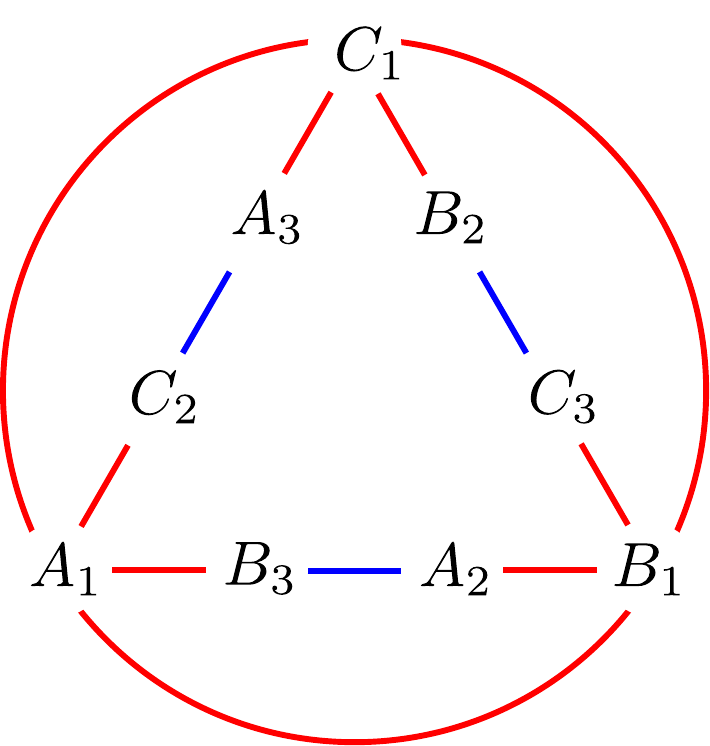}
\caption{The correlations in the triple Bell inequalities with 3 setting for each party.
The red links denote correlations with $+$ sign, while the blue links denote correlations with $-$ sign.}
\end{figure}

Interestingly the scenario can be extended to all 3 inequalities

\begin{align}
&\mathcal S_C=\mathcal S_{AB}=A_1B_1+A_2B_1+A_1B_3-A_2B_3,\nonumber\\
&\mathcal S_A=\mathcal S_{BC}=B_1C_1+B_2C_1+B_1C_3-B_2C_3,\nonumber\\
&\mathcal S_B=\mathcal S_{CA}=C_1A_1+C_2A_1+C_1A_3-C_2A_3
\end{align}
Classically we have each $\mathcal S=2$ for $A_i=B_i=C_i=1$ for $i=1,2,3$
or $\mathcal S=-2$ for $A_1=B_1=C_1=-1$, $A_i=B_i=C_i=1$ for $i=2,3$.

In quantum case, we have
\begin{equation}
-4-2\sqrt{2}\leq \bar{\mathcal S}_A+\bar{\mathcal S}_B+\bar{\mathcal S}_C\leq 4+2\sqrt{2}\label{ss3}
\end{equation}
The upper bound can be reached in the $2\times 2\times 1$ space with
the state $\sqrt{2}|\psi\rangle=|00\rangle+|11\rangle$ and  $A_1=Z$, $A_2=X$, $B_1=(Z+X)/\sqrt{2}$, $B_3=(Z-X)/\sqrt{2}$.
Also $B_2=A_3=\pm I$,  $C_1=\pm 1$, $C_2=C_3=\mp 1$. Then $\mathcal S_{AB}=2\sqrt{2}$, $\mathcal S_{BC}=2$, $\mathcal S_{CA}=2$.
The setup is invariant under shift of parties $A\to B\to C \to A$ or reflection $A\leftrightarrow B$ with $2\leftrightarrow 3$.

We can symmetrize the violation in $5=2+2+1$ dimensions for each party, numbered $01234$, with
 the state
\begin{equation}
\sqrt{6}|\Psi\rangle=|024\rangle+|134\rangle+|402\rangle+|413\rangle+|240\rangle+|341\rangle
\end{equation}
with the explicit observables
\begin{align}
&A_1=B_1=C_1=\begin{pmatrix}
1&0&0&0&0\\
0&-1&0&0&0\\
0&0&s&s&0\\
0&0&s&-s&0\\
0&0&0&0&1\end{pmatrix},\nonumber\\
&A_2=B_2=C_2=\begin{pmatrix}
0&1&0&0&0\\
1&0&0&0&0\\
0&0&1&0&0\\
0&0&0&1&0\\
0&0&0&0&-1\end{pmatrix},\nonumber\\
&
A_3=B_3=C_3=\begin{pmatrix}
1&0&0&0&0\\
0&1&0&0&0\\
0&0&s&-s&0\\
0&0&-s&-s&0\\
0&0&0&0&-1\end{pmatrix};
\end{align}
for $s=1/\sqrt{2}$,
and the violation is $2(\sqrt{2}+2)/3\simeq 2.27614237492$ or in total 

\begin{equation}
\bar{\mathcal S}_A+\bar{\mathcal S}_B+\bar{\mathcal S}_C=
2(\sqrt{2}+2)\simeq 6.82842712475
\end{equation}

The displayed pure state $|\Psi\rangle$ has Schmidt rank five across each
bipartition and is genuinely tripartite entangled. Because the observables are
block diagonal in the flag sectors, removing the coherences between the three
sectors gives the same measured correlations and a biseparable mixture. Thus
these correlations do not certify genuine tripartite entanglement.

This scenario can be implemented on IBM Quantum.
In our circuit it involves 4 qubits per party, e.g. $a=(a_0a_1a_2a_3)$ for $A$ (12 in total)

One starts by creating an entangled 3-qubit state $W$ in the space $|a_0b_0c_0\rangle$
\begin{equation}
|W\rangle=(|100\rangle+|010\rangle+|001\rangle)/\sqrt{3}
\end{equation}
using a $CZ/RZZ$ gates, see Appendix \ref{appdi}. This state serves only as a random generator of 3 options, its coherence is irrelevant.
Each of the 3 qubits becomes the root of one of the 3 maximal solutions, with maximal violation of one of $\mathcal S$
while attaining classical maximum for the other two.
Firstly, we fan it out the state $W$ to
$(|110000\rangle+|001100\rangle+|000011\rangle)/\sqrt{3}$
in the space $|a_0c_1b_0a_1c_0b_1\rangle$
by a $CX$ gates with $a_0$ as the control and $c_1$ as the target, repeated in the cycle $a\to b\to c\to a$.

Separately, we prepare Bell entangled states on pairs $a_2c_3$, $b_2a_3$, $c_2b_3$.
The settings (angles) at qubits 2 and 3 are chosen according to the standard CHSH configuration.
Since $A_1$, $B_1$, $C_1$ are shared, for the choice $A_1$ we set the identical 
measurement of $a_2$ and $a_3$, while $A_2$ means the complementary setting at $a_3$ and $A_3$ means the complementary setting at $a_2$, similarly $B$ and $C$. The outcome conversion from 4 bits to a single bit is as follows for the party $A$,
\begin{enumerate}
\item if $a_0=1$ then return the value at $a_2$ except the setting $2$ returning $0$ ($+1$ in Bell assignment),
\item otherwise if $a_1=1$ then return the value at $a_3$ except the setting $3$ returning $0$ ($+1$)
\item otherwise return $0$ ($+1$) for setting $1$ and $1$ ($-1$) for settings 2,3
\end{enumerate}
Note that ideally $a_0a_1=00,01,10$ but the errors allow the unwanted case $11$, which is appended to the $10$ case in the above protocol.

\begin{figure*}
\includegraphics[scale=.5]{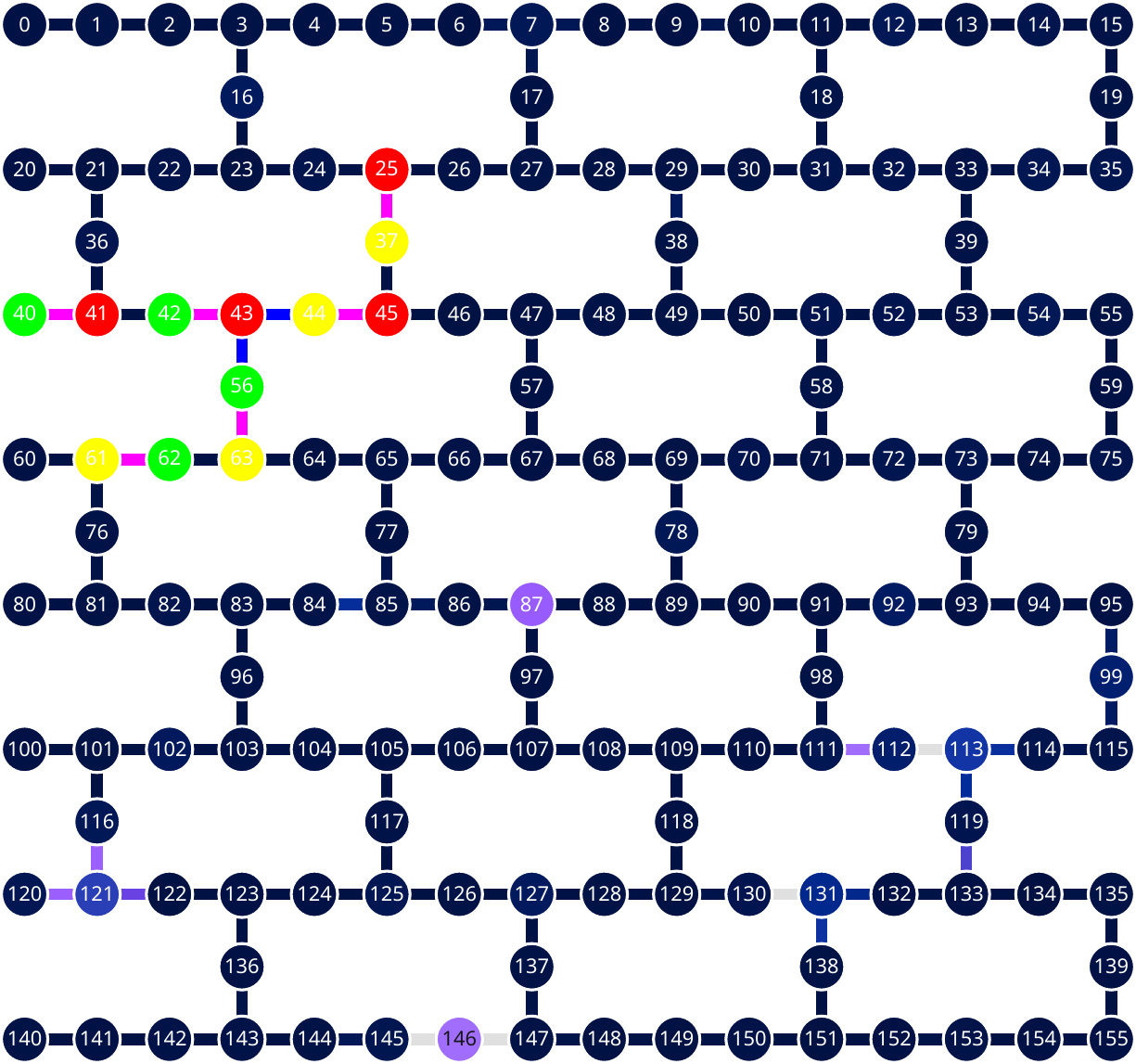}
\caption{The placement of the qubits of the respective parties $A$ (red), $B$ (yellow), $C$ (green)
on the \emph{ibm\_kingston} grid.
The central triple $0$ qubits is connected by blue links, and qubits $1$ are connected to them by violet links.
The remaining qubits $2$, $3$ are always separately entangled by violet links.}\label{ibms3}
\end{figure*}

 We have run 20 jobs, 6 repetitions, 10000 shots on \emph{ibm\_kingston}.
encoding $a_0b_0c_0c_1a_1b_1a_2b_2c_2c_3a_3b_3$ as $43,44,56,42,45,63,41,37,62,40,25,61$, see Fig. \ref{ibms3}
and Table \ref{exs}.

\begin{table}
\begin{tabular}{cccc}
\toprule
&$A$&$B$&$C$\\
\midrule
$\bar{\mathcal S}$&2.02189 & 2.03136 &2.04169\\
$\Delta{\mathcal S}$&0.00088 & 0.00088& 0.00088\\
\bottomrule
\end{tabular}
\caption{Experimental values of $\bar{S}$ with a shared observable and shot-noise standard errors for parties $ABC$ with the specified party excluded} \label{exs}
\end{table}

The total bound is reached in the three cases of maximal positive violation, $4+2\sqrt{2}$, $|A\rangle$: $A$ separate $BC$ entangled
$|B\rangle$: $B$ separate $CA$ entangled, $|C\rangle$: $C$ separate, $AB$ entangled.
For instance in case $C$: acting directly on the state maximizing state $|C\rangle$ the operators can be replaced
\begin{align}
&\sqrt{2}A_1=B_1+B_3,\:\sqrt{2}A_2=B_1-B_3,\nonumber\\
&\sqrt{2}B_1=A_1+A_2,\:\sqrt{2}B_3=A_1-A_2\nonumber\\
&B_2=A_3=C_1=-C_2=-C_3
\end{align}
For the representative saturating strategy (or on its relevant support), the first identities together with $A_i^2=B_i^2=C_i^2=1$ and commutativity $A_iB_j=B_jA_i$, $B_iC_j=C_jB_i$, $C_iA_j=A_jC_i$ imply
 $B_3B_1=-B_1B_3$ and $A_2A_1=-A_1A_2$. Moreover
$A_1A_3=A_3A_1$, $A_2A_3=A_3A_2$, $B_2B_1=B_1B_2$, $B_2B_3=B_3B_2$.
We can express all operators in terms of $A_1\equiv A$, $B_1\equiv B$, $C_1\equiv C$

The maximal negative violation is not a naive sign reversal, as all correlation are of second order and contain different observables.
Nevertheless, the three cases of maximal negative violation, $-4-2\sqrt{2}$, $|A\rangle$ are again: $A$ separate $BC$ entangled
$|B\rangle$: $B$ separate $CA$ entangled, $|C\rangle$: $C$ separate, $AB$ entangled.

For instance in case $C$: acting directly on the  minimizing state $|C\rangle$ the operators can be replaced
\begin{align}
&-\sqrt{2}A_1=B_1+B_3,\:-\sqrt{2}A_2=B_1-B_3,\nonumber\\
&-\sqrt{2}B_1=A_1+A_2,\:-\sqrt{2}B_3=A_1-A_2,\nonumber\\
&B_2=A_3=-C_1=C_2=C_3
\end{align}
It is realized taking the maximal state, reversing $C_i\to -C_i$ and either $A_{1,2}$ or $B_{1,3}$ (not both).
We proceed analogously as in the maximizing case, carefully watching sign flips.
The symbolic proof of both the exact bounds on (\ref{ss3}) using NPA hierarchy is presented in Appendix \ref{appb3}

\subsection{\texorpdfstring{$\mathcal I_{3322}$ polygamy}{I3322 polygamy}}

Another relevant two-party inequality with 3 setting is \cite{froiss,collins,sliwa}
\begin{align}
&\mathcal I\equiv \mathcal I_{3322}=A_1+A_2+B_1+B_2+(A_1+A_2)(B_1+B_2)\nonumber\\
&+A_3(B_1-B_2)+B_3(A_1-A_2)\geq -4
\end{align}
The inequality is violated with $\bar{\mathcal I}=-5$ for the Bell state $\sqrt{2}|\psi\rangle=|00\rangle+|11\rangle$ 
with $2A_{1,2}=\sqrt{3}X\pm Z$, $2B_{1,2}=-\sqrt{3}X\mp Z$, $B_3=-Z$, $A_3=Z$ but a value of about $-5.0035$ is approached by increasing-dimensional constructions; the unrestricted optimum and the need for infinite dimension were conjectured on numerical evidence \cite{i3322max}.
Unlike CHSH, it can be also jointly violated for parties $AB$ and $AC$ \cite{collins}. The violation is very weak leading to the question if it can really
discriminate two-party and three-party entanglement.

Our numerical search finds its strongest joint violation with three-dimensional $A$ and qubits $B$ and $C$.
The candidate optimal state has the following expansion in the $|abc\rangle$ basis
\begin{equation}
|\psi\rangle=
\psi_0|0+\rangle+\psi_1|0-\rangle+\psi_2|1-\rangle+\psi_3|2\bar{0}\rangle
\end{equation}
where $\sqrt{2}|\pm\rangle=|00\rangle\pm|11\rangle$, $\sqrt{2}|\bar{0}\rangle=|01\rangle+|10\rangle$
in the $BC$ space.
The observables read
$A_1=2P_+-I$,
$A_2=2P_--I$,
where $P_\pm=|a_\pm\rangle\langle a_\pm|$ is a projection on
$
|a_\pm\rangle=x|0\rangle+y|1\rangle\pm z|2\rangle
$
and
\begin{equation}
A_3=\begin{pmatrix}
0 &0 &1\\
0 &1&0\\
1&0&0\end{pmatrix}
\end{equation}
and
$B_1=C_1=Z\cos\phi+X\sin\phi$, $B_2=C_2=Z\cos\phi-X\sin\phi$, $B_3=C_3=-X$.
We get 
\begin{equation}
\bar{I}_{AB}+\bar{I}_{AC}=\mu-4\simeq -8.2003459282962
\end{equation}
where $\mu$ is the real root of degree-7 equation
\begin{align}
&49152 + 24576x - 512x^2 - 3584x^3\nonumber\\
& + 220x^4 + 114x^5 - 18x^6 + x^7=0.
\end{align}
The state is then an eigenstate with coefficients
\begin{equation}
\begin{pmatrix}
\psi_0\\
\psi_1\\
\psi_2\\
\psi_3\end{pmatrix}
\simeq
\begin{pmatrix}
0.6902281889\\
0.2189839461\\ 
-0.5080080578\\
-0.4664320870\end{pmatrix}
\end{equation}
and angle  $\phi\simeq 0.5318627701$ and direction
\begin{equation}
\begin{pmatrix}
x\\
y\\
z
\end{pmatrix}
 \simeq \begin{pmatrix}
0.5233652544\\
0.7650506662\\
0.3752149898
\end{pmatrix}
\end{equation}

Numerical NPA bounds agree with this construction to the reported precision; this agreement is not an exact certificate of global optimality.
To find it we use SDP with the basis $I$,$D_i$, $D_iD_j$, for $D=A,B,C$, $i,j=1,2,3$, $i\neq j$ and their products.
The total basis of $10^3=1000$ elements can be split into 8 groups, due to $Z_2^3$ symmetries: exchange of parties $B\leftrightarrow C$,
relabeling $1\leftrightarrow 2$ for $B,C$ with $A_3\to -A_3$, relabeling $1\leftrightarrow 2$ for $A$ and $B_3,C_3\to -B_3,-C_3$.
The symmetries fit the basis $I$,$D_+$,$D_-$,$D_3$,$D_3D_+$,$D_3D_-$,$D_+D_3$,$D_-D_3$, $D_1D_2+D_2D_1$, $D_1D_2-D_2D_1$, with $D_\pm= D_1\pm D_2$.
It reduces the SDP task to express $\mathcal I_{AB}+\mathcal I_{AC}-\lambda$ by the sum of quadratic forms generated by 8 semidefinite matrices, similarly as in the case
of $\mathcal S_{A}+\mathcal S_B+\mathcal S_C$ and maximize $\lambda$.
The reported bound agrees within the numerical error with the maximal state construction.

If all parties are qubits, the violation is smaller $\mu_2\simeq -8.12886621095481$.
In fact $\mu_2$ is the root of the equation $m(x)\equiv\sum_i m_ix^i=0$, see Table \ref{i33} in Appendix \ref{appis}

Then $A_3=B_3=X$ and $A_1=Z\cos\alpha+X\sin\alpha$, $A_2=Z\cos\alpha-X\sin\alpha$,
and $B_1=Z\cos\beta+X\sin\beta$, $B_2=Z\cos\beta-X\sin\beta$,
with
\begin{equation}
\alpha\simeq -2.566007180965164,\;\beta\simeq -2.863135308591454
\end{equation}

The numerical lower bound considered for biseparable states is higher but still $<-8$.
Suppose now that one party is classical so the state is biseparable.
Across $A|BC$, both pairs are local, so $\bar{I}_{AB}+\bar{I}_{AC}$. The other two bipartitions are equivalent under $B\leftrightarrow C$. For $AB|C$, linearity permits minimization over the eight deterministic assignments
$ (C_1,C_2,C_3)\in\{-1,+1\}^{3}$,
followed by optimization over $AB$. Finally, linearity extends the bound to mixtures across bipartitions.
We can find a state 
\begin{equation}
|\psi\rangle=\sum_{k}\lambda_k|kk 0\rangle
\end{equation}
in the space $|abc\rangle$ (i.e. $C$ is separated) and operators $A_i$, $B_i$, $C_i$
such that
\begin{align}
&\langle 2(A_1+A_2)+(A_1+A_2+1)(B_1+B_2+C_1+C_2)+\nonumber\\
&A_3(B_1-B_2+C_1-C_2)+(B_3+C_3)(A_1-A_2)\rangle< -8
\end{align}
The numerical NPA lower bound with the same basis is $\simeq -8.1294823381768871$ 


For $A$,$B$ two dimensional and $C_1=C_2=-1$, $C_3=1$ we have the minimum 
$\simeq  -8.0243205$
with $A_{1,2}=Z\cos\alpha\pm X\sin\alpha$,  
$B_{1,2}=Z\cos\beta\pm X\sin\beta$,
 $A_3=Z\cos\alpha'+ X\sin\alpha'$,  
$B_3=Z\cos\beta'+ X\sin\beta'$,
for
\begin{equation}
(\alpha,\alpha',\beta,\beta')\simeq
(2.636106,  1.6845197,   0.180805,   2.271655)
\end{equation}

The best value found numerically can be reached with 3-dimensional $A$ and $B$, of the form
\begin{align}
&A_1=\begin{pmatrix}
x&y&0\\
y&-x&0\\
0&0&-1\end{pmatrix},\;
A_2=\begin{pmatrix}
z&w&0\\
w&-z&0\\
0&0&-1\end{pmatrix},\nonumber\\
&A_3=\begin{pmatrix}
1&0&0\\
0&0&1\\
0&1&0\end{pmatrix},\;
B_1=\begin{pmatrix}
-c&0&s\\
0&-1&0\\
s&0&c\end{pmatrix},\nonumber\\
&B_2=\begin{pmatrix}
-c&0&-s\\
0&-1&0\\
-s&0&c\end{pmatrix},\;
B_3=\begin{pmatrix}
1&0&0\\
0&f&g\\
0&g&-f\end{pmatrix},\;
\end{align}
with $x^2+y^2=z^2+w^2=c^2+s^2=f^2+g^2=1$.
In particular, numerical maximization gives
\begin{equation}
\begin{pmatrix}
x\\
y\\
z\\
w\\
c\\
s\\
f\\
g
\end{pmatrix}
\simeq
\begin{pmatrix}
 0.96162288\\
-0.2743746\\
0.58851723\\
 0.80848467\\
 0.77108603\\
 0.63673097\\
0.34478694\\
 0.93868097
 \end{pmatrix}
\end{equation}
The minimizing state in the $AB$ space $|ab\rangle$, $a,b=0,1,2$, 
\begin{align}
&|\psi\rangle=\psi_0|01\rangle+\psi_1|11\rangle+\psi_2|12\rangle+\psi_3|20\rangle.\nonumber\\
&\begin{pmatrix}
\psi_0\\
\psi_1\\
\psi_2\\
\psi_3\end{pmatrix}
\simeq
\begin{pmatrix}
0.615936119\\
0.247412381\\
0.313235046\\
-0.679185994
\end{pmatrix}
\end{align}

Among the strategies studied, the largest joint violation uses a three-dimensional system $A$. The violations are relatively small, see Fig.~\ref{ibar}; the qubit values found do not exceed the biseparable values found with three-dimensional systems $A$ and $B$.

\begin{figure}
\includegraphics[scale=1.2]{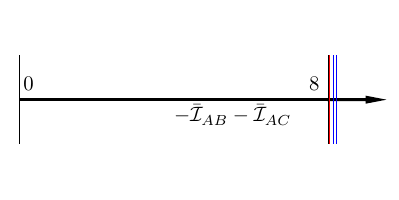}
\caption{Scale of the negated joint value $-(\bar{\mathcal I}_{AB}+\bar{\mathcal I}_{AC})$.
The classical bound is $8$. For qubits $A$, $B$ we can reach only red bounds (smaller biseparable and larger with qubit $C$).
For 3-dimensional $A$ and $B$ separated from $C$ we get minimally farther than qubit bound (left blue practically over the right red), 
and the best value found, $\sim 8.2$ (right blue), is attained using 3-dimensional $A$ and qubits $BC$.}
\label{ibar}
\end{figure}

A natural question is if the $\mathcal I_{3322}$ inequality can be violated jointly for all three pairs $AB$, $BC$, $CA$.
We shall allow an arbitrary sign combination. Namely we define
\begin{equation}
\mathcal I^{xyz}(A,B)=\mathcal I(A,\{xB_1,yB_2,zB_3\}),
\end{equation}
i.e. $B_j$ can be sign-flipped. It does not change the classical bound $\geq -4$ since $|B_j|\leq 1$ condition does not depend on the sign.
In the cyclic sum $\mathcal I(A,B)+\mathcal I(B,C)+\mathcal I(C,A)$ we consider the common sign patterns specified below. The result here concerns this family of relabellings, not every independent relabelling of the three Bell expressions.

It turns out that none of these combinations, maintaining cyclic symmetry, violates simultaneously the classical bound, i.e.
\begin{equation}
\mathcal I^{xyz}(A,B)+\mathcal I^{xyz}(B,C)+\mathcal I^{xyz}(C,A)+12\geq 0
\end{equation}
for each choice of signs $x,y,z=\pm$.
The proof involves SDP protocol to reduce the left-hand side to the sum of squares,
see Appendix \ref{appis}.

Surprisingly, there is a combination of $\mathcal I$ with larger violation
\begin{align}
&\mathcal I^{a}=\mathcal I(A,B)+\mathcal I(-A,C)=\nonumber\\
&B_1+B_2+C_1+C_2+(A_1+A_2)(B_1+B_2-C_1-C_2)+\nonumber\\
&A_3(B_1-B_2-C_1+C_2)+(A_1-A_2)(B_3-C_3).\label{issa}
\end{align}
The actual relabellings $1\leftrightarrow 2$ and sign flips of $B$ and $C$ are irrelevant.

The minimum reads $-6-2\sqrt{2}\simeq -8.828<-8$.
The minimum is attained, for example, by a biseparable combination of classical $B$,  $B_1=B_2=-1$, $B_3=A_2=1$ and the Bell state $AC$, $(|01\rangle-|10\rangle)/\sqrt{2}$, $C_3=-1$ or $C_3=(Z+X)/\sqrt{2}$ with
\begin{equation}
C_1=X,\: C_2=Z,\: A_1=-(Z+X)/\sqrt{2}, A_3=(Z-X)/\sqrt{2},
\end{equation}
We remind that $\mathcal I(A,B)$ inequality reduces to Bell-CHSH choosing e.g. $A_2=-1$, $B_3=-1$ \cite{collins}.
Therefore this combination cannot detect full tripartite entanglement.
The proof of the bound is in Appendix \ref{appis}. 

These two representative minimizing strategies have the properties
$C_+=\sqrt{2}A_1$, $C_-=\sqrt{2}A_3$ and $B_+=-2$, $B_-=0$,
when acting on the minimal state, with $D_\pm=D_1\pm D_2$, and $C_3=-1$ or $C_3=A_1$ on the state.

Making the above substitutions we get directly $-6-2\sqrt{2}$.
The other solutions are generated by the symmetries, see Appendix \ref{appis}.

Finally, we can consider a hybrid inequality
\begin{equation}
\mathcal I(A,B)+\mathcal S(\{A_1,A_3\},C)\label{ich}
\end{equation}
where $
\mathcal S=A_1C_++A_3C_-$
The joint minimum is $-4-2\sqrt{2}$ obtained in 5 biseparable cases, one with entangled $AC$ and 4 with entangled $AB$, see Appendix \ref{appis} (also the proof).

\section{Conclusions}

We have presented several cases of polygamous behavior of quantum nonlocality with two-outcome observables,
with either two or three settings. The 3-party local  inequality  turns out to have a sufficiently large gap between classical and quantum bound
to be violated simultaneously in all subsets of the 4-party set on IBM Quantum. It demonstrates both technological progress in hardware
development and the important role of polygamous inequalities to verify large-scale entanglement.
We have analysed 4-party MABK inequality violated simultaneously in each 4-subset of the 5 parties and also confirmed the simultaneous violation on IBM Quantum. The violation of classical bounds
is possible not only by the fully entangled state but also on randomly separable ones. We have developed explicit SDP and partial-transpose certificate constructions for monogamy and trade-off relations, building on the graph-theoretic bounds cited above. The first-level bound need not be tight for larger graphs.

The picture is more complicated with 3 settings, where already CHSH can be jointly pairwise violated in the 3-party setup, with the optimum already attainable by mixtures of biseparable strategies. Nevertheless, the simultaneous violation is confirmed again on IBM Quantum.

The inherently 3-settings $\mathcal I_{3322}$ inequality can be violated in either a symmetric or an antisymmetric way.
The symmetric violation is weak and can certify genuine tripartite entanglement only when the applicable biseparable bound and dimension assumptions are established. The antisymmetric violation is larger, but its optimum is already attainable by biseparable strategies. The three cyclic copies of $\mathcal I_{3322}$ with the common sign patterns considered here cannot all be violated simultaneously, although joint violation of two copies is possible. The hybrid $\mathcal I_{3322}$--CHSH optimum is likewise attainable by a biseparable strategy.

Our findings point out the convergence between progress in quantum computing and nonlocal inequalities.
Future research can test more complicated inequalities on quantum computing, as a diagnostic tool
but also confirmation of nonclassical predictions. Further work on SDP and trade-off constraints can help
to filter out best candidates for polygamous behavior.

\emph{Acknowledgements.}
		The results have been created using IBM Quantum. The views expressed are those of the
	authors and do not reflect the official policy or position of the IBM Quantum or IonQ team.
	TR gratefully acknowledges the funding support by program "{}Excellence
	initiative -- research university"{} for the AGH University in Krakow as well as the
	ARTIQ project: UMO-2021/01/2/ST6/00004 and ARTIQ/0004/2021. We also acknowledge the use of ChatGPT (OpenAI, https://chat.openai.com) for assistance in developing some proofs and partially checking the work. The authors remain responsible for all claims and proofs. We thank Wies{\l}aw Laskowski and Jaros{\l}aw Korbicz for fruitful discussions.

\appendix

\section{Pauli algebra and IBM Quantum gates}

\label{appdi}

Throughout the text we work with standard Pauli algebra
\begin{align}
&X=\begin{pmatrix}
0&1\\
1&0
\end{pmatrix},\;
&Y=\begin{pmatrix}
0&-i\\
i&0
\end{pmatrix},\nonumber\\
&Z=\begin{pmatrix}
1&0\\
0&-1
\end{pmatrix},\;
&I=\begin{pmatrix}
1&0\\
0&1
\end{pmatrix}
\end{align}
in the basis $|0\rangle$, $|1\rangle$,
or equivalently
\begin{align}
&X=|0\rangle\langle 1|+|1\rangle\langle 0|,\nonumber\\
&Y=i|1\rangle\langle 0|-i|0\rangle\langle 1|,\nonumber\\
&Z=|0\rangle\langle 0|-|1\rangle\langle 1|,\nonumber\\
&I=|0\rangle\langle 0|+|1\rangle\langle 1|,
\end{align}
Then $Z|0\rangle=|0\rangle$, $Z|1\rangle=-|1\rangle$.
The useful properties are
\begin{align}
&X^2=Y^2=Z^2=I,\nonumber\\
&XY=iZ=-YX,\nonumber\\
&ZX=iY=-XZ,\nonumber\\
&YZ=iX=-ZY
\end{align}

The basic IBM Quantum gates \cite{ibm,qis} are unitary rotations defined by Pauli matrices. We shall denote $V_\theta=\exp(-iV\theta/2)=I\cos(\theta/2)-iV\sin(\theta/2)$
when $V^2=I$.
Then the basic gates are $X_\theta\equiv RX(\theta)$ and $Z_\theta\equiv RZ(\theta)$.
All gates are defined put to the global phase factor $e^{i\phi}$ canceling by unitarity.
The initial state is always $|0\rangle$ and we use shorthand notation $V_\pm\equiv V_{\pm\pi/2}$ or $V_\pi=V$ and
$V_{\theta+2\pi}\equiv V_{\theta}$. An important gate is also Hadamard
\begin{equation}
H=\frac{X+Z}{\sqrt{2}}=\frac{1}{\sqrt{2}}\begin{pmatrix}
1&1\\
1&-1\end{pmatrix}
\end{equation}
which is also self-inverse, $H^2=I$.
The measurement is by default in the $Z$ basis i.e. $I_0=|0\rangle\langle 0|$ and $I_1=|1\rangle\langle 1|$ are the projectors corresponding to the two outcomes
$0$ and $1$, which can be mapped to $+1$ and $-1$ for Bell correlators.

The basic two-qubit gates are $CZ$ (controlled-$Z$) defined in the $|ab\rangle$ basis
as $CZ|ab\rangle=(-1)^{ab}|ab\rangle$ or directly 
\begin{equation}
\begin{pmatrix}
1&0&0&0\\
0&1&0&0\\
0&0&1&0\\
0&0&0&-1\end{pmatrix}
\end{equation}
or $RZZ(\theta)$, see below.

Now we explain the generation of Dicke states by available gates.
The key mechanism is the conditional rotation, keeping two qubit states $|00\rangle$ and $|11\rangle$ unaffected while
applying rotation in the basis $|01\rangle$, $|10\rangle$,
\begin{equation}
\begin{pmatrix}
\cos\theta& -i\sin\theta\\
-i\sin\theta & \cos\theta
\end{pmatrix}
\end{equation}
by applying $R=\exp(-i\theta(XX+YY)/2)$ operation. The operation is effectively realized by IBM Quantum $RZZ(\theta)$ gates equivalent
to $(ZZ)_\theta=\exp(-iZZ\theta/2)$. For the rotation we apply this gate twice
as $R=(XX)_\theta(YY)_\theta$, while $(XX)_\theta=(Y_+Y_+)(ZZ)_\theta(Y_-Y_-)$ and
$(YY)_\theta=(X_-X_-)(ZZ)_\theta(X_+X_+)$ since $X=Y_+ZY_-$. It can be further shortened
by observing that $X_+Y_+=X_+Z_+X_+Z_-=Z_+X_+Z_+Z_-=Z_+X_+$ since $H=Z_+X_+Z_+=X_+Z_+X_+$
 \begin{figure}
\begin{tikzpicture}[scale=1]
		\begin{yquant}
			qubit {} q[2];
			box {$Y_-$}  q[0];
			box {$Y_-$}   q[1];
			[name=ptA] zz (q[0, 1]);
			box {$X_+$}  q[0];
			[name=ptB] box {$X_+$}   q[1];
			box {$Z_+$}  q[0];
			box {$Z_+$}  q[1];
			[name=rtA] zz {$\theta$} (q[0],q[1]);
			box {$X_-$}  q[0];
			[name=rtB] box {$X_-$}   q[1];
			
	\end{yquant}
	\path (ptA) -- (ptA |- ptB) node[midway, fill=white, inner sep=1.5pt] {$\theta$};
	\path (rtA) -- (rtA |- rtB) node[midway, fill=white, inner sep=1.5pt] {$\theta$};
\end{tikzpicture}

\caption{The $V_\theta$ gate expressed by two $RZZ(\theta)$ gates, denoted by vertical links with $\theta$. }
\label{cnotr}
\end{figure}

\section{Maximum of sum of squares of averages, Jordan's lemma and convexity}
\label{appjor}

Let $F_i$ be a real linear combination of tensor products of Hermitian operators $A^{(1)}_{i_1}A^{(2)}_{i_2}\cdots A^{(n)}_{i_n}$
with $i_k=0,1,2$ and $A^{(k)}_0=I^{(k)}$ when a party is omitted, and $A^{(i)}_{1,2}=A^{(i)}_{1,2}(+)-A^{(i)}_{1,2}(-)$
with $A^{(i)}_{1,2}(\pm)\succeq 0$ and $A^{(i)}_{1,2}(+)+A^{(i)}_{1,2}(-)=I^{(i)}$. This is the general form
of two-setting two-outcome Bell-type correlators.
Then it suffices to check the inequality
\begin{equation}
\sum_j \langle F_j\rangle^2\leq \Lambda \label{qqd}
\end{equation}
in the case of qubits, real operators, and real pure states.

The inequality is equivalent to
\begin{equation}
\sum_j x_j\langle F_j\rangle\leq \sqrt{\Lambda}\label{lqd}
\end{equation}
for all real vectors $x$, normalized $\sum_j x_j^2=1$.
By Cauchy-Schwarz inequality it follows from (\ref{qqd}).
Conversely, we take $x_j=\langle F_j\rangle/\sqrt{S}$ with $S=\sum_j\langle F_j\rangle^2>0$; the case $S=0$ is immediate.
On the other hand (\ref{lqd}) is linear with respect to operators $A^{(k)}$ so its maximum at fixed $x$ must be achievable at the endpoints.
For each fixed setting separately, its two complementary effects commute and have a joint diagonal representation. Keeping all other measurements fixed, the diagonal terms can be pushed to the boundaries, either $0$ or $1$,
becoming projections. With projections, we have $A^{(k)2}_{1,2}=I$.

We can use Jordan's lemma \cite{mironowicz,jorle}.
It says that operators $A_1^2=A_2^2=I$ can be reduced to the $2\times 2$ block representation.
Let us denote $A_+=A_1+A_2$ and $A_-=A_1-A_2$.
As $A_i^2=1$ we have 
\begin{align}
&A_+A_-=(A_1+A_2)(A_1-A_2)=\nonumber\\
&A_1^2+A_2A_1-A_1A_2-A_2^2=[A_2,A_1]=-A_-A_+
\end{align}
for the commutator $[F,G]=FG-GF$. Both $A_\pm$ are Hermitian. Moreover $A_+^2A^2_-=A^2_-A^2_+$ so we can jointly diagonalize $A_\pm^2$.
Note also that $A_+^2+A_-^2=4$. We can arrange the eigenspaces so that $A_+^2=4\cos^2\phi$, $A_-^2=4\sin^2\phi$
for different values of $\sin^2\phi$.
If $\sin^2\phi=0$, then $A_+=\pm 2$ and can be diagonalized, and similarly $A_-$ if $\cos^2\phi=0$. If both $\cos^2\phi$ and $\sin^2\phi$
are nonzero we diagonalize $A_+$ with eigenstates $v_\pm$ of eigenvalues $\pm 2\cos\phi$.
Then
\begin{equation}
A_+A_-|v_\pm\rangle=-A_-A_+|v_\pm\rangle=\mp2A_-|v_\pm\rangle \cos\phi
\end{equation}
It means that $A_-$ takes states $|v_\pm\rangle$ into the space of $|v_\mp\rangle$ states. By singular value decomposition
we can diagonalize the off-diagonal block of $A_-$ by separate basis changes in the eigenspaces $A_+=\pm2\cos\phi$.
We end up with $2\times 2$ blocks of $A_+=2Z\cos\phi$ and $A_-=2X\sin\phi$. 

Every quantum state $\rho$ can be written in the tensor basis of these blocks but the cross-block terms do not contribute
to $\langle F_j\rangle$. Therefore $\langle F_j\rangle=\langle F_j\rangle'$ where $\rho'$ is obtained from $\rho$
by discarding cross-blocks by projecting onto the blocks. In the diagonal basis $\rho'$ is a convex combination of projections,
so we can restrict to pure states  $|\psi\rangle$ within single blocks.
Finally, if the state $|\psi\rangle=|\psi_r\rangle+i|\psi_i\rangle$
with real $|\psi_{r,i}\rangle$, we can
write
\begin{equation}
\langle F_j\rangle=\langle \psi_r|F_j|\psi_r\rangle+\langle \psi_i|F_j|\psi_i\rangle
\end{equation}
because $F_j$ is real in our basis. So we can consider a mixture of the normalized real and imaginary parts, weighted by their squared norms, and the problem reduces completely to the real case.

\section{Monogamy relations from anticommuting sets}
\label{appmon}

We can write a general CHSH form
\begin{equation}
\mathcal S=\epsilon_{11}A_\pm B_1+\epsilon_{22} A_\mp B_2
\end{equation}
or
\begin{equation}
|2\bar{\mathcal S}|=|\langle A_\pm (B_++B_-)\rangle+\epsilon'\langle A_\mp(B_+-B_-)\rangle|
\end{equation}
with the arbitrary sign $\epsilon'=\pm 1$.
Then
\begin{align}
&|\bar{\mathcal S}|/2\leq \nonumber\\
&|c_Ac_B \langle ZZ\rangle|+|c_As_B\langle ZX\rangle|+|s_Ac_B\langle XZ\rangle|+|s_As_B\langle XX\rangle|
\end{align}

with $c_{A,B}=\cos\phi_{A,B}$, $s_{A,B}=\sin\phi_{A,B}$
By Cauchy-Schwarz inequality $(V\cdot W)^2\leq V^2W^2$, applied to the vectors
\begin{align}
&V=(c_Ac_B,c_As_B,s_Ac_B,s_As_B),\nonumber\\
&W=(\langle ZZ\rangle,\langle ZX\rangle,\langle XZ\rangle,\langle XX\rangle)
\end{align}
we have \cite{zuk}
\begin{equation}
\bar{\mathcal S}^2/4\leq \langle ZZ\rangle^2+\langle ZX\rangle^2+\langle XZ\rangle^2+\langle XX\rangle^2
\end{equation}
since $V^2=1$.
For a set of anticommuting dichotomic observables $F_j$ such that $F_j^2=1$ and $F_iF_j=-F_jF_i$ for $i\neq j$
we have \cite{kurz}
\begin{align}
&0\leq \left\langle\left(\sum_j\bar{F}_j F_j-\Bigl(\sum_j\bar{F}_j^2\Bigr)I\right)^2\right\rangle\nonumber\\
&=\sum_j\bar{F}_j^2-\left(\sum_j\bar{F}_j^2\right)^2\nonumber\\
&=\sum_j\bar{F}_j^2\left(1-\sum_j\bar{F}_j^2\right)
\end{align}
which leads to
\begin{equation}
\sum_j\bar{F}_j^2\leq 1
\end{equation}
Applying it to $F_1=ZZ$, $F_2=ZX$,
\begin{equation}
\langle ZZ\rangle^2+ \langle ZX\rangle^2\leq 1
\end{equation}
and similarly $\langle XZ\rangle^2+ \langle XX\rangle^2\leq 1$ for $F_1=XZ$, $F_2=XX$,
giving finally  $\bar{\mathcal S}^2\leq 8$ (Tsirelson bound \cite{tsirel}).
We have proven it for a particular choice of $\phi_A$, $\phi_B$ but 
\begin{equation}
\bar{\mathcal S}=\sum_{AB}p_{AB}\bar{\mathcal S}_{AB}
\end{equation}
for $AB$ denoting combinations of eigenspaces of $A_\pm$ and $B_\pm$ and 
\begin{equation}
\rho\to \sum_{AB}p_{AB}\rho_{AB}
\end{equation}
with probabilities $p_{AB}=\mathrm{Tr}_{AB}\rho$ that of occurring a particular combination $AB$ and scaling to $\mathrm{Tr}\rho_{AB}=1$
We can ignore all off-diagonal terms of $\rho$, i.e. connecting different eigenspaces, because they do not contribute to the correlation $\mathcal S$.
Then from simple classical inequality for a variance of $\mathcal S_{AB}$, we have
\begin{equation}
\bar{\mathcal S}^2=\left(\sum_{AB}p_{AB}\bar{\mathcal S}_{AB}\right)^2\leq \sum_{AB}p_{AB}\bar{\mathcal S}_{AB}^2\leq 8\sum_{AB}p_{AB}=8
\end{equation}
as $\sum_{AB}p_{AB}=1$.

In the 3-party $ABC$ scenario there is a trade-off of the violations, say
\begin{equation}
\mathcal S_{AB}=\sum_{ij}\epsilon_{ij}\langle A_iB_j\rangle,\;
\mathcal S_{AC}=\sum_{ij}\epsilon'_{ij}\langle A_iC_j\rangle,
\end{equation}
The choice of signs $\epsilon$ and $\epsilon'$ can be different.
We can use similar arguments as for single Bell inequality to prove
\begin{align}
&(\bar{\mathcal S}_{AB}^2+\bar{\mathcal S}_{AC}^2)/4\leq \nonumber\\
&\langle ZZI\rangle^2+\langle ZXI\rangle^2+\langle XZI\rangle^2+\langle XXI\rangle^2\nonumber\\
&+\langle ZIZ\rangle^2+\langle ZIX\rangle^2+\langle XIZ\rangle^2+\langle XIX\rangle^2
\end{align}
where $I$ stands for the identity for the omitted party. Taking $F_1=ZZI$, $F_2=ZXI$, $F_3=XIZ$, $F_4=XIX$
and $F_1=XZI$, $F_2=XXI$, $F_3=ZIZ$, $F_4=ZIX$ (two anticommuting sets) we get
\begin{equation}
\bar{\mathcal S}_{AB}^2+\bar{\mathcal S}_{AC}^2\leq 8
\end{equation}
valid
for arbitrary dichotomic observables and states by Jordan's lemma.

For 3-party setup we have Mermin inequality \cite{mermin}
\begin{equation}
\mathcal M_{ABC}=A_1B_1C_2+A_1B_2C_1+A_2B_1C_1- A_2B_2C_2
\end{equation}
with the same assumptions on observables as in Bell case.
In classical scenario $\mathcal M\leq 2$ for $A_i,B_j,C_k=\pm 1$.
In the quantum case, we can proceed analogously, translating to $A_\pm$, $B_\pm$, $C_\pm$.
It gives
\begin{align}
&4\mathcal M_{ABC}=\nonumber\\
& A_+B_+C_++ A_-B_+C_++ A_+B_-C_++ A_+B_+C_-\nonumber\\
&-A_+B_-C_--A_-B_+C_-- A_-B_-C_+-A_-B_-C_-
\end{align}
Restricting to eigenspaces of $A_\pm$, $B_\pm$, $C_\pm$, we get
\begin{align}
&\bar{\mathcal M}/2=c_Ac_Bc_C\langle ZZZ\rangle+s_Ac_Bc_C\langle XZZ\rangle\nonumber\\
&+c_As_Bc_C\langle ZXZ\rangle+c_Ac_Bs_C\langle ZZX\rangle\nonumber\\
&-c_As_Bs_C\langle ZXX\rangle-s_Ac_Bs_C\langle XZX\rangle\nonumber\\
&-s_As_Bc_C\langle XXZ\rangle-s_As_Bs_C\langle XXX\rangle
\end{align}
As in the Bell inequality, by Cauchy-Schwarz inequality applied to
\begin{equation}
V=\begin{pmatrix}
c_Ac_Bc_C\\
s_Ac_Bc_C\\
c_As_Bc_C\\
c_Ac_Bs_C\\
-c_As_Bs_C\\
-s_Ac_Bs_C\\
-s_As_Bc_C\\
-s_As_Bs_C
\end{pmatrix},\;
W=\begin{pmatrix}
\langle ZZZ\rangle\\
\langle XZZ\rangle\\
\langle ZXZ\rangle\\
\langle ZZX\rangle\\
\langle ZXX\rangle\\
\langle XZX\rangle\\
\langle XXZ\rangle\\
\langle XXX\rangle
\end{pmatrix}
\end{equation}
we get
\begin{align}
&\bar{\mathcal M}^2/4\leq \langle ZZZ\rangle^2+\langle XZZ\rangle^2\nonumber\\
&+\langle ZXZ\rangle^2+\langle ZZX\rangle^2+\langle ZXX\rangle^2\nonumber\\
&+\langle XZX\rangle^2+\langle XXZ\rangle^2+\langle XXX\rangle^2
\end{align}
Note also that the inequality remains valid under exchange of roles $A_1\leftrightarrow A_2$ which changes $A_-\to -A_-$, and changing signs
$A_2\to -A_2$ equivalent to $A_+\leftrightarrow A_-$ as the bound takes all 8 combinations of $A_\pm$, $B_\pm$, $C_\pm$.
We can immediately find the quantum bound by taking anticommuting sets $F=(ZZZ,XXX)$, $F=(XZZ,ZXX)$, $F=(ZXZ,XZX)$, $F=(ZZX,XXZ)$
\begin{equation}
\bar{\mathcal M}^2\leq 16
\end{equation}
which agrees with the $|\bar{\mathcal M}|\leq 4$ result. Again the inequality extends to arbitrary states by Jordan's lemma
 We assume that all correlations involve the same sets of 2 observables, but may differ by sign changes or relabeling.
Then we can use the individual bound to get
\begin{align}
&(\bar{\mathcal M}_{ABC}^2+\bar{\mathcal M}_{BCD}^2+\bar{\mathcal M}_{CDA}^2+\bar{\mathcal M}_{DAB}^2)/4\leq\nonumber\\
&\sum_{ABC=ZX}\langle ABCI\rangle^2+\sum_{BCD=ZX}\langle IBCD\rangle^2\nonumber\\
&+\sum_{CDA=ZX}\langle AICD\rangle^2+\sum_{DAB=ZX}\langle ABID\rangle^2
\end{align}
The right-hand side consists of $32$ terms with $ABCD$ substituted by $X,Z$ and the missing party replaced by identity $I$.
Start with the following set of eight pairwise anticommuting strings
\begin{equation}
F=\begin{pmatrix}
XXXI\\
ZZZI\\
ZXIX\\
XZIZ\\
XIZX\\ ZIXZ\\
IZXX\\ IXZZ
\end{pmatrix}
\end{equation}
Apply all $2^4=16$ independent local exchanges $X\leftrightarrow Z$
to this set, one choice at each party. Each transformed set remains
pairwise anticommuting. Among the 128 occurrences, each of the 32
three-body strings occurs exactly four times.
Therefore, distributing the terms equally we get \cite{kurz}
\begin{equation}
\bar{\mathcal M}_{ABC}^2+\bar{\mathcal M}_{BCD}^2+\bar{\mathcal M}_{CDA}^2+\bar{\mathcal M}_{DAB}^2\leq 16
\end{equation}
The bound extends to arbitrary states by Jordan's lemma.

  \begin{table}
 \begin{tabular}{cc}
 \toprule
 $i$&$f_i$\\
 \midrule 
14& 16\\
13& 1512\\
12& -18927\\
11& + 789321\\
10& -  11382144\\
9&  - 8866028\\
8&  + 1092699070\\
7& + 149884014\\
6&  - 28458491172\\
5&  + 21947086716\\
4& + 220984337433\\
3& - 534906776367\\
2& + 1166578934700\\
1&  - 1375056878080\\
0&  -1202033209344\\
\bottomrule
\end{tabular}
\caption{Table of coefficients of the polynomial $f(x)$}\label{fs5}
\end{table}

 \begin{table}
 \begin{tabular}{cc}
 \toprule
 $i$&$h_i$\\
 \midrule 
14& 24\\
13&  + 176\\
12&  - 1618\\
11&  + 6402\\
10& + 34481\\
9& +  73327\\
8&  + 159636\\
7& + 320904\\
6& + 394183\\
5&  + 253595\\
4& + 66156\\
3& - 10468\\
2& - 17950\\
1&  - 10080\\
0& -1568\\
 \bottomrule
\end{tabular}
\caption{Table of coefficients of the polynomial $h(x)$}\label{us5}
\end{table}

\section{\texorpdfstring{$\mathcal W^5$ polygamy}{W5 polygamy}}
\label{appw5}
\begin{widetext}

For real product states across $AB|CD$, the following expectation-value identity holds. The difference of the unaveraged polynomials is antisymmetric under partial transpose on $AB$, whereas the product density matrix is invariant under that transpose. Thus its expectation vanishes.

\begin{align}
 &16(12-\mathcal W^5_{ABC}- \mathcal W^5_{BCD}- \mathcal W^5_{CDA}- \mathcal W^5_{DAB})
 =\nonumber\\
&\bigl\langle(A_1-D_1+A_2C_1-B_2C_1+A_1C_2+B_1C_2-A_2D_1-B_2D_1+A_1D_2-B_1D_2+A_2B_2-A_2C_2+B_2D_2-C_2D_2\nonumber\\
&-A_1B_1C_1-A_2B_1C_1-A_1B_1C_2+A_2B_1D_1+B_1C_1D_1+B_2C_1D_1-A_2B_2D_2-A_1C_1D_2+B_1C_1D_2+A_2C_2D_2)^2+\nonumber\\
&
(B_1-C_1-A_2C_1-B_2C_1-A_1C_2+B_1C_2-A_2D_1+B_2D_1+A_1D_2+B_1D_2+A_2B_2+A_2C_2-B_2D_2-C_2D_2\nonumber\\
&
-B_1C_2D_1+A_1B_2C_1-A_2B_2C_2-A_1B_1D_1-A_1B_2D_1+A_1C_1D_1+A_2C_1D_1
+A_1C_2D_1-A_1B_1D_2+B_2C_2D_2)^2+\nonumber\\
&(1+A_2)^2(1-B_1)^2(1-C_1)^2(1+D_2)^2+(1-A_1)^2(1+B_2)^2(1+C_2)^2(1-D_1)^2+(A_1+A_2+B_1+C_1+D_1+D_2+\nonumber\\
&A_2B_1-A_2B_2+B_1C_1+B_2C_1+A_1C_2-A_2C_2+B_1C_2+A_2D_1+B_2D_1+A_1D_2-A_2D_2-B_2D_2+C_1D_2-C_2D_2\nonumber\\
&-A_1B_1C_1-A_1B_1C_2-A_2B_1D_1-B_1C_1D_1-B_2C_1D_1-A_2B_1D_2+A_2B_2D_2\nonumber\\
&-A_1C_1D_2-A_2C_1D_2+A_2C_2D_2+A_2B_1C_1D_2-5)^2+\nonumber\\
&
(A_1+B_1+B_2+C_1+C_2+D_1+A_1B_2-A_2B_2+A_2C_1+B_2C_1-A_2C_2+B_1C_2-B_2C_2+A_1D_1+A_2D_1\nonumber\\
&+C_2D_1+A_1D_2+B_1D_2-B_2D_2-C_2D_2-A_1B_2C_1-A_1B_2C_2+A_2B_2C_2-A_1B_1D_1-A_1C_1D_1\nonumber\\
&-A_2C_1D_1-B_1C_2D_1-B_2C_2D_1-A_1B_1D_2+B_2C_2D_2+A_1B_2C_2D_1-5)^2\bigr\rangle\label{sq5}
\end{align}
\end{widetext}

  \begin{table}
 \begin{tabular}{cc}
 \toprule
 $i$&$\bar{f}_i$\\
 \midrule 
11& 5\\
10&+ 4084\\
9&  - 1944208\\
8&  + 65288128\\
7& - 129175040\\
6&  -27052701696\\
5& + 334853234688\\
4& - 404496613376\\
3& - 714946838528\\
2& - 70762727473152\\
1& + 397032928837632\\
0& -451298709209088\\
\bottomrule
\end{tabular}
\caption{Table of coefficients of the polynomial $\bar{f}(x)$}\label{bs5a}
\end{table}

\section{MABK polygamy with 5 parties}
\label{appm5}

 \begin{table}
 \begin{tabular}{cc}
 \toprule
 $i$&$w_i$\\
 \midrule
 0&
-1348993610803265103608453857408879888957440000\\
1&
 -1862027439368874196315264434261591732309196800\\
2& - 544284001638028890174731129868245406993874944\\
3& + 186275487542605903899950936112300629503770624\\
4& + 77959192080112434517604337992982574393196544\\
5& - 7868343165721543276454266995480688022519808\\
6& - 3989371250800960166613091601631171991044096\\
7& + 199529672562320553165519922230478334066688\\
8& + 98889264682092492046683976496751197552640\\
9& - 3678275235052666587013740732120112824320\\
10& - 1260009487394625599271144573929300951040\\
11& + 48536148686365447944470643365774360576\\
12& + 8881547702420601937554739451698085888\\
13& - 414688510707196350419008010900733952\\
14& - 36118216572329937625551112801091584\\
15& + 2145829744705972550957979154776064\\
16& + 86484244238501463508879964897280\\
17& - 6565142087832070154994336137216\\
18& - 139625279073419759521651228672\\
19& + 11025099934518830973740646400\\
20& + 214752412409271585309982720\\
21& - 6252894111076475000061952\\
22& - 193016288946419973226496\\
23& - 3666655092253282795520\\
24& - 15694520666142539776\\
25& + 1897325745878597632\\
26& + 7270519365566464\\
27& + 136612730171392\\
28& + 17412201573376\\
29& - 105543443968\\
30& - 4897300608\\
31& + 18494480\\
32& + 431540\\
33& + 976\\
34& + 1\\
\bottomrule
\end{tabular}
\caption{Table of coefficients of the polynomial $w(x)$}\label{m5a}
\end{table}

 \begin{table}
 \begin{tabular}{cc}
 \toprule
 $i$&$v_i$\\
 \midrule 
0& -17496\\
1& - 23328\\
2& + 2562192\\
3& + 8963460\\
4& - 87124221\\
5& - 1085663142\\
6& - 3087408519\\
7& + 49132536291\\
8& + 257131065951\\
9& - 710011046268\\
10& - 5241750399930\\
11& - 3142358882574\\
12& + 30178598302403\\
13& + 83015723329318\\
14& + 98985283300107\\
15& + 99236167396045\\
16& + 197062587648471\\
17& + 343678745669288\\
18& + 258504880553540\\
19& - 143178551046512\\
20& - 486985987297423\\
21& - 408513678199802\\
22& - 53840529633217\\
23& + 177521434339437\\
24& + 157068862444629\\
25& + 46016697356388\\
26& - 13512271335690\\
27& - 16867623183462\\
28& - 6488272278711\\
29& - 1192944925638\\
30& - 64464708195\\
31& + 10272994371\\
32& + 955436877\\
33& - 60781104\\
34& + 944784\\
\bottomrule
\end{tabular}
\caption{Table of coefficients of the polynomial $v(x)$}\label{mu5a}
\end{table} 

 \begin{table}
 \begin{tabular}{cc}
 \toprule
 $i$&$\bar{w}_i$\\
 \midrule
0&-1348993610803265103608453857408879888957440000\\
1& - 1862027439368874196315264434261591732309196800\\
2&  - 544284001638028890174731129868245406993874944\\
3& + 186275487542605903899950936112300629503770624\\
4& + 77959192080112434517604337992982574393196544\\
5& - 7868343165721543276454266995480688022519808\\
6& - 3989371250800960166613091601631171991044096\\
7& + 199529672562320553165519922230478334066688\\
8& + 98889264682092492046683976496751197552640\\
9& - 3678275235052666587013740732120112824320\\
10& - 1260009487394625599271144573929300951040\\
11& + 48536148686365447944470643365774360576\\
12& + 8881547702420601937554739451698085888\\
13& - 414688510707196350419008010900733952\\
14& - 36118216572329937625551112801091584\\
15& + 2145829744705972550957979154776064\\
16& + 86484244238501463508879964897280\\
17& - 6565142087832070154994336137216\\
18& - 139625279073419759521651228672\\
19& + 11025099934518830973740646400\\
20& + 214752412409271585309982720\\
21& - 6252894111076475000061952\\
22& - 193016288946419973226496\\
23& - 3666655092253282795520\\
24& - 15694520666142539776\\
25& + 1897325745878597632\\
26& + 7270519365566464\\
27& + 136612730171392\\
28& + 17412201573376\\
29& - 105543443968\\
30& - 4897300608\\
31& + 18494480\\
32& + 431540\\
33& + 976\\
34& + 1\\
\bottomrule
\end{tabular}
\caption{Table of coefficients of the polynomial $\bar{w}(x)$}\label{m5b}
\end{table} 

\begin{table}
 \begin{tabular}{cc}
 \toprule
 $i$&$\bar{v}_i$\\
 \midrule 
0& -17496\\
1& - 23328\\
2& + 2562192\\
3& + 8963460\\
4& - 87124221\\
5& - 1085663142\\
6& - 3087408519\\
7& + 49132536291\\
8& + 257131065951\\
9& - 710011046268\\
10& - 5241750399930\\
11& - 3142358882574\\
12& + 30178598302403\\
13& + 83015723329318\\
14& + 98985283300107\\
15& + 99236167396045\\
16& + 197062587648471\\
17& + 343678745669288\\
18& + 258504880553540\\
19& - 143178551046512\\
20& - 486985987297423\\
21& - 408513678199802\\
22& - 53840529633217\\
23& + 177521434339437\\
24& + 157068862444629\\
25& + 46016697356388\\
26& - 13512271335690\\
27& - 16867623183462\\
28& - 6488272278711\\
29& - 1192944925638\\
30& - 64464708195\\
31& + 10272994371\\
32& + 955436877\\
33& - 60781104\\
34& + 944784\\
 \bottomrule
\end{tabular}
\caption{Table of coefficients of the polynomial $\bar{v}(x)$}\label{mu5b}
\end{table} 
 
 The maximum for the biseparable $ABC|DE$ is
\begin{equation}
8+4\sqrt{10}\simeq 20.649110640673517328
\end{equation}
for $4\sqrt{5+\sqrt{10}}|\psi_{ABC}\rangle$ of the form
\begin{align}
&(|100\rangle+|010\rangle+|001\rangle+(1+\sqrt{10})(|011\rangle+|101\rangle+|110\rangle)\nonumber\\
&+3|111\rangle-(5+\sqrt{10})|000\rangle
\end{align}
and $A_1=B_1=C_1=Z$, $A_2=B_2=C_2=X$, $D_j=E_j=1$. It is therefore also the upper triseparable $ABC|D|E$ bound.
However, one can get larger violation by the lower bound
$-21.0787936983$
being the smallest real root of
\begin{equation}
-71424 - 23808x - 1504x^2 + 80x^3 + 5x^4=0
\end{equation}
and with $E_1=E_2=D_1=D_2=-1$, 
and 
\begin{align}
&A_1=((1-u)Z+2\sqrt{u}X)/(u+1)\nonumber\\
&A_2=((1-u)Z-2\sqrt{u}X)/(u+1)\nonumber\\
&B_1=((1 - u)Z + 2\sqrt{u}X)/(u + 1),\nonumber\\
&B_2=((1 - u)Z - 2\sqrt{u}X)/(u + 1),\nonumber\\
&C_1=((u - 1)Z + 2\sqrt{u}X)/(u + 1),\nonumber\\
&C_2=((u - 1)Z - 2\sqrt{u}X)/(u + 1)
\end{align}
with $u\simeq 0.107142854395$,
being the second smallest root of
\begin{align}
&1 - 56x + 588x^2 - 1608x^3 + 1894x^4\nonumber\\
& - 1608x^5 + 588x^6 - 56x^7 + x^8=0
\end{align}
   
The triseparable $AB|CD|E$ and the fourseparable case gives the upper bound only classical $20$
but the lower bound
is $-20.2624403701$ which
is the smallest root of
\begin{equation}
-1000704 + 238592x - 8544x^2 - 576x^3 + 27x^4=0
\end{equation}
for the fourseparable state with only $AB$ entangled and $C_i=D_i=E_i=1$ for $i=1,2$
with
\begin{align}
&A_1=((u - 1)Z - 2\sqrt{u}X)/(u + 1),\nonumber\\
&A_2= ((u - 1)Z + 2\sqrt{u}X)/(u + 1),\nonumber\\
&B_1=((u - 1)Z + 2\sqrt{u}X)/(u + 1),\nonumber\\
&B_2=((u - 1)Z - 2\sqrt{u}X)/(u + 1)
\end{align}
with $u\simeq 0.082007887975$
being the smallest root of
\begin{align}
&5 - 112x + 821x^2 - 2756x^3 + 4276x^4\nonumber\\
& - 2756x^5 +  821x^6 - 112x^7 + 5x^8=0
\end{align}

In the complete classical case $A_i=B_i=C_i=D_i=E_i=1$ we get $-20$ ($-4$ per party).
The positive result $+20$ is obtained e.g. if $D_2=E_2=-1$.

\section{\texorpdfstring{$AB|CD$ biseparable bound of $\mathcal M_{ABCD}$}{AB|CD biseparable bound of M(ABCD)}}
\label{appms}

We can write $\mathcal M_{ABCD}$ in the form
\begin{align}
&(A_1B_1-A_2B_2)(C_1D_1+C_1D_2+C_2D_1-C_2D_2)\nonumber\\
&+(A_1B_2+A_2B_1)(C_1D_1-C_1D_2-C_2D_1-C_2D_2)
\end{align}
For the biseparable case we can use Cauchy-Schwarz inequality

\begin{align}
&\bar{\mathcal M}^2_{ABCD}\leq (\langle A_1B_1-A_2B_2\rangle^2+ \langle A_1B_2+A_2B_1\rangle^2)\nonumber\\
&\times(\langle C_1D_1+C_1D_2+C_2D_1-C_2D_2\rangle^2+\nonumber\\
&\langle C_1D_1-C_1D_2-C_2D_1-C_2D_2\rangle^2)
\end{align}
However
\begin{align}
&\langle C_1D_1+C_1D_2+C_2D_1-C_2D_2\rangle^2+\nonumber\\
&\langle C_1D_1-C_1D_2-C_2D_1-C_2D_2\rangle^2\nonumber\\
&=2(\langle C_1D_1-C_2D_2\rangle^2+\langle C_1D_2+C_2D_1\rangle^2)
\end{align}
from paralelloid identity.
Applying now Uffink inequality \cite{uff} (see Appendix \ref{appuff}) to both expressions we arrive at $\bar{\mathcal M}^2_{ABCD}\leq 32$.
Note that the maximum needs a tradeoff of entanglement. 
Uffink inequality is saturated at product and maximally entangled states while the observables are orthogonal, e.g.
$A_1=Z$, $A_2=X$, $B_1=Z\cos\phi+X\sin\phi$, $B_2=Z\sin\phi-X\cos\phi$
in the state $(|00\rangle+|11\rangle)/\sqrt{2}$ giving
\begin{equation}
\langle A_1B_1-A_2B_2\rangle=2\cos\phi,\;
\langle A_1B_2+A_2B_1\rangle=2\sin\phi
\end{equation} 
 The nontrivial saturation is when
both $AB$ and $CD$ are entangled. Taking identical entangled states and replacing $AB\to CD$ with $\phi\to\phi'$
we get
\begin{equation}
\bar{\mathcal M}_{ABCD}=4((\cos\phi'+\sin\phi')\cos\phi+(\cos\phi'-\sin\phi')\sin\phi)
\end{equation}
which is maximized  at $4\sqrt{2}$ whenever $\phi'=\pi/4-\phi$.

\section{Uffink inequality}
\label{appuff}
The  Uffink inequality \cite{uff} reads
\begin{equation}
\langle A_1B_1-A_2B_2\rangle^2+ \langle A_1B_2+A_2B_1\rangle^2\leq 4
\end{equation}
or more generally
\begin{equation}
|\langle A^{(1)}A^{(2)}A^{(3)}\cdots A^{(n)}\rangle|\leq 2^{n-1}
\end{equation}
for $n\geq2$ and $A^j=A^j_1+iA^j_2$ for each party $j$.
To prove it, again due to Jordan's lemma we can focus on $A^{(j)}_1=Z^{(j)}\cos\phi_j+X^{(j)}\sin\phi_j$,
$A^{(j)}_2=Z^{(j)}\cos\phi_j-X^{(j)}\sin\phi_j$
The inequality then reads
\begin{equation}
|\langle \prod_j ((1+i)Z^{(j)}\cos\phi_j+(1-i)X^{(j)}\sin\phi_j)\rangle|\leq 2^{n-1}
\end{equation}
Factoring out $1+i$, we simplify it to
\begin{equation}
|\langle \prod_j (Z^{(j)}\cos\phi_j-iX^{(j)}\sin\phi_j)\rangle|\leq 2^{n/2-1}
\end{equation}
now let us use a different representation $Z\to X$, $X\to Y$. Then we get $X^j\cos\phi_j-iY^j\sin\phi_j$
which has the block structure
\begin{equation}
\begin{pmatrix}
0&\cos\phi_j-\sin\phi_j\\
\cos\phi_j+\sin\phi_j&0
\end{pmatrix}=\sqrt{2}\begin{pmatrix}
0&\cos\tilde{\phi}_j\\
\sin\tilde{\phi}_j&0
\end{pmatrix}
\end{equation}
for $\tilde{\phi}_j=\phi_j+\pi/4$. We get essentially the whole space split into block of two states with flipped
$|0\rangle\leftrightarrow |1\rangle$.
Let us denote each state by a bitstring $K=k_1k_2\dots k_n$, $k_j=0,1$.
Each such block has the contribution
\begin{equation}
\psi_K\phi^\ast_{\bar{K}}\prod_j \alpha_j(k_j)+\psi_K^\ast\phi_{\bar{K}}\prod_j \alpha_j(\bar{k}_j)
\end{equation}
where $\alpha_j(0)=\cos\tilde{\phi}_j$, $\alpha_j(1)=\sin\tilde{\phi}_j$ and $\bar{K}$ means complete spin flipping 
($0\leftrightarrow 1$)
Each individual term is maximized when the product $\psi_K\phi^\ast_{\bar{K}}$ is real/imaginary and maximal or minimal
with respect to the constraint $|\psi_K|^2+|\phi_{\bar{K}}|^2=p_K$.
Therefore $\psi_K=\pm \phi_{\bar{K}}$  or  $\psi_K=\pm i\phi_{\bar{K}}$ and all the terms have the form
\begin{equation}
p_K\left(\prod_j |\alpha_j(k_j)|+\prod_j |\alpha_j(\bar{k}_j)|\right)/2
\end{equation}
By convexity the maximum is attained at a boundary point, namely the block with the largest value of
\begin{equation}
\prod_j |\alpha_j(k_j)|+\prod_j |\alpha_j(\bar{k}_j)|.
\end{equation}
By Cauchy-Schwarz inequality applied to any of the terms e.g. $j=1$
\begin{align}
&\left(\prod_j |\alpha_j(k_j)|+\prod_j |\alpha_j(\bar{k}_j)|\right)^2\leq\nonumber\\
&(\cos^2\tilde{\phi}_1+\sin^2\tilde{\phi}_1) \left(\prod_{j>1} \alpha^2_j(k_j)+\prod_{j>1}\alpha^2_j(\bar{k}_j)\right).
\end{align}
The last expression has all $\alpha_j^2\leq 1$ for $j>2$ so
\begin{equation}
\prod_{j>1} \alpha^2_j(k_j)+\prod_{j>1}\alpha^2_j(\bar{k}_j)\leq \alpha^2_2(k_2)+\alpha^2_2(\bar{k}_2)=1
\end{equation}
We end up with the bound $1/2$ for $p_K=1$.

\section{\texorpdfstring{Binary tree solved by PSD $\Gamma$}{Binary tree solved by PSD Gamma}}
\label{appmm}

Here we present the bounds on sums of squares solved by our method based on partial transpose.
The Bell monogamy $BC-AC$ \cite{toner}
\begin{equation}
\sum_j\langle F_j\rangle^2\leq 2,
\end{equation}
requires 8 Pauli strings $IXX$, $IXZ$, $IZX$, $IZZ$, $XIX$, $XIZ$, $ZIX$, $ZIZ$, in $ABC$ representation,
The PSD $\Gamma$ is realized taking 
\begin{equation}
\Gamma=\begin{pmatrix}
1&-1&1&-1&-1&1&-1&1\\
-1&1&-1&1&1&-1&1&-1\\
1&-1&1&-1&-1&1&-1&1\\
-1&1&-1&1&1&-1&1&-1\\
-1&1&-1&1&1&-1&1&-1\\
1&-1&1&-1&-1&1&-1&1\\
-1&1&-1&1&1&-1&1&-1\\
1&-1&1&-1&-1&1&-1&1
\end{pmatrix}
\end{equation}
in this basis.
It agrees with the general rule that $\Gamma_{ij}=-1$ for commuting pairs.
and $\Gamma_{ij}=+1$ for anticommuting pairs with $I$ at different indices.
The matrix is PSD and has rank one: $\Gamma=vv^T$ with $v=(1,-1,1,-1,-1,1,-1,1)^T$. Its nonzero eigenvalue is $8$, so it is eight times a rank-one orthogonal projector.


The monogamy is generalized  to binary tree geometry \cite{kurz} with parties indexed by tree prefixes i.e.
$A$ (root), $A_0$ and $A_1$ (sons), $A_{00}$, $A_{01}$, $A_{10}$, $A_{11}$ (grandsons) etc.
We restrict to the straight ancestry line, i.e the Pauli string over the growing suffix, e.g.
\begin{equation}
AA_0A_{01}A_{011}A_{0110}A_{01101}\cdots
\end{equation}
To formalize it, each string is uniquely identified by its full index $x=x_1x_2\cdots x_N$ for $N+1$ depth.
We define the prefix $x^{(k)}\equiv x_1x_2\cdots x_k$ including empty one $x^{(0)}$ and the full one $x^{(N)}=x$
then the string consists of
\begin{equation}
A_{x^{(0)}}A_{x^{(1)}}\cdots A_{x^{(N)}}
\end{equation}
with each $A_x$ replaced by $X$ or $Z$ in the appropriate space. The full space consists of $2^{N+1}-1$ qubits.
As the strings are essentially enumerated by $x$ and $2$ possibilities at each node of the path,
 the space of $\Gamma$ is of dimension $2^{2N+1}$. The index carries two objects, $p=p_1p_2\cdots p_N$ - the path of $2^N$ combinations
 and $v=v_0v_1v_2\cdots v_N$ - the value (choices) of the Pauli operators  $X,Z$ replaced by $0,1$ along the path.
 
 We shall prove that
\begin{equation}
\sum_{p,v}\langle F_{pv}\rangle^2\leq 2^N
 \end{equation}
This time our construction of $\Gamma$ is as follows, $\Gamma=\Delta-J$ where
$J$ is the matrix of $1$ i.e. $J_{pv,rw}=1$ always and by default it covers all commuting pairs. 
The contribution $-J$ has rank one and one negative eigenvalue.
On the other hand we set $\Delta$ to be a sparse matrix with
\begin{equation}
\Delta_{pv,rw}=
\left\{\begin{array}{l}
2^N\mbox{ for }pv=rw\\
2^N\mbox{ for }p\oplus r=0^N,\; v\oplus w=0^{N}1\\
2^{k+1}\mbox{ for } 0\leq k\leq N-1\mbox{ and }\\
 (p\oplus r)^{(k+1)}=0^k1,\; (v\oplus w)^{(k)}=0^k1\\
0\mbox{ otherwise }
\end{array}
\right.
\end{equation}
with $x\oplus y$  for  $x,y=0,1$ being the binary (XOR) addition, i.e. 
\begin{equation}
x\oplus y=\left\{\begin{array}{l}
0\mbox{ if }x=y,\\
1\mbox{ if  }x\neq y
\end{array}\right..
\end{equation}
Here $v^{(k)}=v_0\cdots v_k$. The notation should be understood as follows.
Diagonal elements of $\Delta$ keep the value $2^N$.
Elements on the same paths but differing exactly on the last bit of the value are assigned $2^N$, too.
Now whenever the paths split at $k$, i.e. the first differing path bit is at position $k+1$
we assign $2^{k+1}$ only if the values $v$ and $w$ differ at position $k$ and while identical before $k$. Note that the rule affects only anticommuting 
strings, as the suffixes beyond the splitting point are irrelevant. In this way, for $k=N-1$ we get again $2^N$
which is correct. On the other end, at $k=0$ we get $2$ for the paths differing right from the root
and values differing on the root itself.

The goal of the proof is to show that $\Gamma$ is PSD.
We apply Walsh-Hadamard (WH) transform \cite{whtt,wh2} to $\Gamma$ in \emph{both} the value and path space.
The WH transform in $n$ qubit space is essentially application of Hadamard gate in each dimension.
Formally it takes a vector $\psi$ to $\psi'$,
\begin{equation}
\psi'_y=2^{-n/2}\sum_{x}(-1)^{x\cdot y}\psi_x
\end{equation}
with 
\begin{equation}
x\cdot y=\sum_k x_ky_k
\end{equation}
for the bitstrings $x=x_1x_2\cdots x_n$, $y=y_1y_2\cdots y_n$.
For matrices
\begin{equation}
M'_{st}=2^{-n}\sum_{xy}(-1)^{x\cdot s+y\cdot t}M_{xy}\label{whm}
\end{equation}

We can use a simple lemma. 
\begin{lemma}
\label{le1}
Let the matrix entries $C_{xy}$ with $x$, $y$ being the binary strings of length $L$, depend solely
on $x\oplus y$, i.e. the relative bitstring, $C_{x,y}\equiv C_{x\oplus y}$. Then the WH transform is diagonal.
\end{lemma}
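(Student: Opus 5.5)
The plan is to compute the transformed entries directly from the definition (\ref{whm}) with $n=L$, using the dependence of $C$ only on $x\oplus y$. Substituting $y=x\oplus z$, which is a bijection of $\{0,1\}^L$ for each fixed $x$, we get
\begin{equation}
C'_{st}=2^{-L}\sum_{x,z}(-1)^{x\cdot s+(x\oplus z)\cdot t}C_z .
\end{equation}
The key identity is $(x\oplus z)\cdot t\equiv x\cdot t+z\cdot t \pmod 2$. It holds bitwise because $x_k\oplus z_k\equiv x_k+z_k$, and only the parity of the exponent matters. The sum therefore factorizes as
\begin{equation}
C'_{st}=\Bigl(2^{-L}\sum_x(-1)^{x\cdot(s\oplus t)}\Bigr)\Bigl(\sum_z(-1)^{z\cdot t}C_z\Bigr).
\end{equation}

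Next, use the character orthogonality $\sum_x(-1)^{x\cdot u}=\prod_k(1+(-1)^{u_k})=2^L\delta_{u,0^L}$. This gives $C'_{st}=\delta_{st}\,\hat C_t$ with $\hat C_t=\sum_z(-1)^{z\cdot t}C_z$, so the transform is diagonal.

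For the later use in proving that $\Gamma$ is PSD, I would also record two facts. First, the WH transform is an orthogonal involution, since $H^{\otimes L}$ is real symmetric and $(H^{\otimes L})^2=I$. Second, (\ref{whm}) is exactly the conjugation $H^{\otimes L}CH^{\otimes L}$. Together these mean the eigenvalues of $C$ are precisely the Walsh coefficients $\hat C_t$, so $C\succeq0$ if and only if every $\hat C_t\geq0$.

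The argument is routine, and I expect the only point needing care is the mod-2 identity for the inner product. Reduction modulo 2 is harmless here because only the sign $(-1)^{(\cdot)}$ enters.
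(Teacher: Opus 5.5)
Your proof is correct and follows the paper's argument essentially step for step. Both use the substitution $y=x\oplus z$, the mod-2 linearity of the dot product, and the character sum $\sum_x(-1)^{x\cdot u}=2^L\delta_{u,0^L}$ to get $C'_{st}=\delta_{st}\sum_z(-1)^{z\cdot t}C_z$, and your added remark that the transform is conjugation by the orthogonal involution $H^{\otimes L}$ is exactly how the paper then uses the lemma to reduce positivity of $\Gamma$ to nonnegativity of these diagonal entries.
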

\emph{Proof}.
Substituting $y=x\oplus z$ shift,
\begin{equation}
2^LC'_{st}=\sum_{xy}(-1)^{s\cdot x+ t\cdot y}C_{xy}=\sum_{xz}(-1)^{(s\oplus t)\cdot x+t\cdot z}C_z
\end{equation}
but 
\begin{equation}
\sum_x(-1)^{x\cdot s}=\left\{\begin{array}{l}
2^L\mbox{ for } s=0^L\\
0\mbox{ otherwise }
\end{array}
\right.
\end{equation}
because every bit $1$ of $s$ triggers $+1$ and $-1$ cancellation.
As a result $C'_{st}=0$ for $s\neq t$ and additionally
\begin{equation}
C'_{ss}=\sum_z(-1)^{s\cdot z}C_z.
\end{equation}
$\square$

In our case $L=2N+1$.
Since $\Delta_{pv,rw}$  and $J_{pv,rw}$ depend only on $pv \oplus rw$, Lemma 1 applies to them both
so $\Gamma' = \Delta' - J'$ is diagonal and PSD iff all diagonal entries are non-negative.
First, let us see what happens to $J$, whose elements do not depend on the relative string at all.
Whenever $s\neq 0^{2N+1}$, some its bit is $1$, say $s_k=1$. Then the sum will contain $z_k=0$ and $z_k=1$ and
both terms will cancel. The only nonvanishing term is for $s=0^{2N+1}$, i.e.
\begin{equation}
J'_{ss}=2^{2N+1}\delta_{s,0^{2N+1}}
\end{equation}
Now the diagonal part of $\Delta$ will remain unchanged, i.e. $2^{N}I^{2N+1}$.
Let us calculate the contribution of the other terms.
The first term is $\Delta^{(N)}_{pv}=2^N$ for $p=0^N$, $v=0^N1$,
which is transformed to $s=rw$ (respective bitstrings in the transformed space)
\begin{equation}
\Delta^{(N)\prime}_{rw}=2^N(-1)^{r\cdot 0^N+w\cdot 0^N1}=2^N(-1)^{w_N1}
\end{equation}
which is $+2^N$ for $w_N=0$ and $-2^N$ for $w_N=1$ and the other bits are arbitrary.
The second term is
$\Delta^{(N-1)}_{pv}=2^N$ for $p^{(N)}=0^{N-1}1$ , $v^{(N-1)}=0^{N-1}1$, which gives
\begin{equation}
\Delta^{(N-1)\prime}_{rw}=2^N\sum_{v_N}(-1)^{r\cdot 0^{N-1}1+w\cdot 0^{N-1}1v_N}.
\end{equation}
This sum cancels if $w_N=1$ because of the alternating bit $v_N$,
and the only nonzero case is $w_N=0$.
Therefore it becomes $+2^{N+1}$ if $r_N=w_{N-1}$ and $-2^{N+1}$ otherwise.
Now  $\Delta^{(k)}_{pv}=
2^{k+1}$ for $p^{(k+1)}=0^k1$ , $v^{(k)}=0^k1$, which gives
\begin{equation}
\Delta^{(k)\prime}_{rw}=2^{k+1}\sum_{\substack{p_{k+2}\cdots p_N\\
v_{k+1}\cdots v_N}}
(-1)^{r\cdot 0^k1p_{k+2}\cdots p_N+w\cdot 0^k1v_{k+1}\cdots v_N}
\end{equation}
Now the alternating free suffix bits $p_{k+2}\cdots p_N$, $v_{k+1}\cdots v_N$
will leave the only nonzero terms with $r_{k+2},\cdots r_N=0$ and $w_{k+1},\cdots, w_N=0$.
In this case, we get terms
$
+2^{2N-k}
$ when $r_{k+1}=w_k$ and 
$-2^{2N-k}$
otherwise, with the prefix bits arbitrary.

Let us check the overall values for $s=rw$, focusing only on cases with potentially negative contributions.
At $s=0^{2N+1}$, the total contribution reads
\begin{equation}
\Delta'_{ss}=2^N+2^N+2^{N+1}+\dots 2^{2N-k}+\dots + 2^{2N}=2^{2N+1}
\end{equation}
Therefore it cancels precisely with $J'$.
If $w_N=1$, we get
\begin{equation}
\Delta'_{ss}=2^N-2^N=0
\end{equation}
from the diagonal and $\Delta^{(N)\prime}$.
Note that other $\Delta^{(k)\prime}$ cannot reach $w_N=1$.
If $w_N=0$ and $r_N\neq w_{N-1}$ then at least one of them must be $1$.
We have then
\begin{equation}
\Delta'_{ss}=2^N+2^N-2^{N+1}=0
\end{equation}
from the diagonal, $\Delta^{(N)\prime}$, and $\Delta^{(N-1)\prime}$.
Now, $r_N=w_{N-1}=w_N=0$ but $r_{N-1}\neq w_{N-2}$, we get
\begin{equation}
\Delta'_{ss}=2^N+2^N+2^{N+1}-2^{N+2}=0
\end{equation}
from the diagonal, $\Delta^{(N)\prime}$, $\Delta^{(N-1)\prime}$,  $\Delta^{(N-2)\prime}$.
Iteratively, for $r_N=\cdots =r_{k+2}=w_N=\cdots =w_{k+1}=0$ and $r_{k+1}\neq w_k$ we get
\begin{equation}
\Delta'_{ss}=2^N+2^N+2^{N+1}+\dots+2^{2N-k-1} -2^{2N-k}=0
\end{equation}
from the diagonal, $\Delta^{(N)\prime}$, $\Delta^{(N-1)\prime}$, $\cdots$,   $\Delta^{(k)\prime}$.
The final term is at $k=0$, 
for $r_N=\cdots =r_2=w_N=\cdots =w_1=0$ and $r_1\neq w_0$ we get
\begin{equation}
\Delta'_{ss}=2^N+2^N+2^{N+1}+\dots+2^{2N-1} -2^{2N}=0
\end{equation}
from the diagonal, $\Delta^{(N)\prime}$, $\Delta^{(N-1)\prime}$, $\cdots$,   $\Delta^{(0)\prime}$.

\section{\texorpdfstring{Tradeoff between $n-1$ subsets}{Tradeoff between n-1 subsets}}
\label{appkk}

Here we prove the general trade-off relation (\ref{trade}).

\paragraph{A closed-form certificate for all $(N-1)$-party subsets.}
Put $M=N-1\geq3$ and $\Lambda=2^{M-1}$.
Index the Pauli strings by the omitted party and their $M$ binary
Pauli labels. When the omitted parties differ, set
$\Gamma_{ij}=-(-1)^{h(i,j)}$, where the Hamming distance is counted only
on the common active positions. In a diagonal block set
\[
 \Gamma_h=
 \begin{cases}
 2^{M-1}-1,&h=0,\\
 -1,&h>0\text{ even},\\
 2h-M,&h\text{ odd}.
 \end{cases}
\]
This satisfies the required commuting-entry constraints.
A normalized Walsh--Hadamard transform on the $M$ label bits makes
each diagonal block diagonal, with entries depending on the transformed
Hamming weight $w$ as follows:
\[
 \widehat\Gamma_w=
 \begin{cases}
 0,&w=0,1,M,\\
 2^M,&w=M-1,\\
 2^{M-1},&2\leq w\leq M-2.
 \end{cases}
\]
To check this without Krawtchouk matrices, write $\chi(s)=(-1)^{|s|}$
and use
\[
 \Gamma(s)=2^{M-1}\delta_{s,0}
 -\frac{1+\chi(s)}2
 -\frac{1-\chi(s)}2\sum_{a=1}^M(-1)^{s_a}.
\]
For different omitted parties $a,b$, the transformed off-diagonal block
has only one nonzero entry, equal to $-2^M$: it couples the coordinate
whose only zero active bit is $b$ in sector $a$ to its counterpart in
sector $b$. These coordinates form disjoint two-dimensional blocks
\[
 2^M\begin{pmatrix}1&-1\\-1&1\end{pmatrix}\succeq0.
\]
Every remaining coordinate is nonnegative and uncoupled. Therefore
$\Gamma\succeq0$, proving the bound $2^{N-2}$ in Eq.~\eqref{trade}.

The longer transform formulation follows.

To shorten notation we will concatenate symbols
\begin{equation}
\underbrace{00\cdots 0}_x\equiv 0^x,\;\underbrace{11\cdots 1}_x\equiv 1^x,
\end{equation}
Now the question of commutation is related to Hamming distance between the indices $i$ and $j$.
The Hamming distance is the number of differing bits (originally $X$ and $Z$) on positions that neither $i$ nor $j$
has $\bullet$ (originally $I$). 
For bitstrings $x=x_1x_2\cdots x_N$ and $y=y_1y_2\cdots y_N$ the Hamming distance is defined
\begin{equation}
h(x,y)=\sum_m x_m\oplus y_m.
\end{equation}
The strings commute at even Hamming distance  and anticommute at odd distance.
Whenever $i$ and $j$ have $\bullet$ at different positions, $\Gamma_{ij}=-(-1)^{h(i,j})$ excluding bits 
with $\bullet$ at $i$ or $j$.

We shall apply WH transform (\ref{whm}) to the space of $i$ excluding $\bullet$ ($I$).
The transform can be regarded as the tensor product of Hadamard transformations
\begin{equation}
H=\frac{Z+X}{\sqrt{2}}=\frac{1}{\sqrt{2}}\begin{pmatrix}
1&1\\
1&-1
\end{pmatrix}
\end{equation}
or $H_{ij}=(-1)^{ij}/\sqrt{2}$ in the basis $i,j=0,1$. 
Note that this time $X$ and $Z$ are not the original string Pauli matrices
but the operators in the $2^MN$ space, with $M=N-1$. Nevertheless, the standard properties of Hadamard matrix are useful,
$H^2=I$, $HXH=Z$, $HZH=X$, $HIH=I$. Let $H^{(k)}$ be the Hadamard matrix acting in the subspace of index $k$.
The full transform we need, reads
\begin{equation}
W=\bigoplus_k H^{(1)}\cdots H^{(k-1)}H^{(k+1)}\cdots H^{(N)}
\end{equation}
(again omitting tensor sign $\otimes$),
meaning that we split the $N\cdot 2^M$ into the direct sum $N$ sectors according to the position of $\bullet$ and apply Hadamard transformation
in each subspace. The total transformation is unitary and self-inverse.
Let us check how it affects $\gamma$ block $2^M\times 2^M$
 between sectors with $\bullet$ at positions $a$ and $b$. Without loss of generality, let $a=1$, $b=2$.
According to our assignment
$\Gamma_{ab}=-(I+X)(I-X)\cdots (I-X)$
because the sign of $\gamma$ does not depend on the bits at $a$ and $b$ but otherwise is it alternating.
Our transform takes it to
$
\Gamma_{ab}=-(I+Z)(I-Z)\cdots(I-Z)
$
which collapses to a single element $-2^M|\bullet 01^{M-1} \rangle\langle 0\bullet 1^{M-1}|$, and $0$ otherwise. More generally, the row (first index) bit at $b$ is $0$ and the column (second index)
bit at $a$ is $0$ and the common bits are $1$.

Now let us see what happens on the diagonal blocks, i.e. $a=b$. In this case we will shorten
$
\Gamma_{ij}\equiv \Gamma_k
$
where $k$ is the Hamming distance between $i$ and $j$ (now $\bullet$ is at the same position so we can simply ignore it).
we have $\Gamma_k=-1$ for the even $k$ with added $2^{M-1}$ at $k=0$.
For a moment let us set the remaining $\gamma_k=\mp 1$.
The block, say at $a=1$ reads
\begin{equation}
\Gamma_{aa}=2^{M-1}-(I\pm X)\cdots (I\pm X)
\end{equation}
so it will be transformed to
\begin{equation}
\Gamma'_{aa}=2^{M-1}-(I\pm Z)\cdots (I\pm Z)
\end{equation}
The striking feature is that both block collapses to the diagonal one,
$(I+ Z)\cdots (I+ Z)$ becomes $2^{M}$ times projection on the index $0^{M}$
(the smallest index)
and $(I- Z)\cdots (I- Z)$ becomes $2^{M}$ times projection on the (highest) index $1^{M}$.
Averaging them both we get
\begin{equation}
\Gamma'_{aa}=2^{M-1}(I-I_{0^{M}}-I_{1^{M}})
\end{equation}
Here we use $I_x=|x\rangle\langle x|$ to denote the projection onto the bitstring $x$.
Our goal is to have $\Gamma'$ PSD. The sole $aa$ block is already PSD but we have the off-diagonal coupling of
the entries $\bullet 01^{M-1}$ with the weight $2^{M}$ to counter. Therefore we need help from odd Hamming distances $k$.

Fortunately, for a matrix with entries depending solely on the Hamming distances, WH transform reduces to Kravchuk transform.
Let us take the matrix 
$
\Gamma_{xy}=\Gamma_{h(x,y)}
$.
From Lemma \ref{le1}, the WH transform takes it to
\begin{equation}
\Gamma'_{yz}=\sum_{sv}(-1)^{y\cdot s+z\cdot v}\Gamma_{h(s,v)}/2^{M}
\end{equation}
summing over all $2^{M}$ indices $x$ and $v$, and denoting the bitwise scalar (dot) product
$
x\cdot y=\sum_m x_my_m
$.
For $y\neq z$ we have $\Gamma'_{yz}=0$ substituting $v=s\oplus x$, i.e.
bitwise $XOR$  (it leaves the summation range invariant). Since the exponent is always modulo $2$, and due to distributivity
$(x\oplus y)z=xz\oplus yz$, we have
\begin{equation}
\Gamma'_{yz}=\sum_{sx}(-1)^{(y\oplus z)\cdot  s+z\cdot x}\Gamma_{h(x)}/2^{M}
\end{equation}
with $h(x)\equiv h(0^M,x)$ being the number of bits $1$ in $x$ (Hamming weight or bit sum).
The inner summation
\begin{equation}
\sum_s(-1)^{y\cdot s}=\left\{\begin{array}{l}
2^M\mbox{ for }y=0^M\\
0\mbox{ otherwise}
\end{array}\right.
\end{equation}
because whenever any bit of $y$ is equal to $1$, the sum contains the equal amounts of $+1$ and $-1$.
The only nondestructive case is for all bits equal to $0$. Therefore $y\neq z$ gives $\Gamma'_{yz}=0$.
The nonzero terms are
\begin{equation}
\Gamma'_{yy}=\sum_{x}(-1)^{y\cdot x}\Gamma_{h(x)}.
\end{equation}
The permutation symmetry gives $\Gamma'_{yy}$ also depending on the 
number of nonzero bits of  $s$ i.e. $w=h(s)$.
We can identify $\Gamma'_{yy}\equiv \Gamma'_{w}$ and get
\begin{equation}
\Gamma'_w=\sum_{k=0}^M\sum_j(-1)^j\binom{w}{j}\binom{M-w}{k-j}\Gamma_k
\end{equation}
with the binomial
\begin{equation}
\binom{w}{j}=\frac{ w!}{j!(w-j)!}=\binom{w}{w-j}
\end{equation}
being the number of $j$-element subsets of a $w$-element set.
Indeed, summing over $x$ we have to choose $j$ position of bits of $x$ with value $1$ on the positions where $y$ has bits $1$
and choose $k-j$ positions of the remaining bits of $x$ in the remaining positions of $y$ (where it has bits $0$).
The coefficient is known as Kravchuk polynomial or matrix \cite{krav1,krav2,krav3}, i.e.
\begin{equation}
\Gamma'_w=\sum_k \Gamma_k K^M_{kw}
\end{equation}
with 
\begin{align}
&K^M_{kw}=K_k(w,M)=K^M_k(w)=\sum_j(-1)^j\binom{w}{j}\binom{M-w}{k-j}\nonumber\\
&=
K^M_{kw}=\sum_{h(x)=k}(-1)^{x\cdot y}
\end{align}
at $y$ such that $h(y)=w$ summing over bitstrings $x$ with Hamming weight $k$.
being the element of $(M+1)\times (M+1)$ Kravchuk matrix indexed $w,k=0,1,\dots, M$.
Traditionally, we shall omit $M$ whenever unambiguous.
The summation is over integer $j$  whenever binomials are nonzero, i.e.  $0\leq j\leq w$, $0\leq k-j\leq M-w$.

\begin{table*}
$$
\begin{pmatrix}
1
\end{pmatrix},\begin{pmatrix}
1&1\\
1&-1
\end{pmatrix},
\begin{pmatrix}
1&1&1\\
2&0&-2\\
1&-1&1
\end{pmatrix},
\begin{pmatrix}
1&1&1&1\\
3&1&-1&-3\\
3&-1&-1&3\\
1&-1&1&-1
\end{pmatrix},
\begin{pmatrix}
1&1&1&1&1\\
4&2&0&-2&-4\\
6&0&-2&0&6\\
4&-2&0&2&-4\\
1&-1&1&-1&1
\end{pmatrix}
$$
\caption{The first Kravchuk Matrices for $M=0,1,2,3,4$}
\end{table*}

Let us recall properties of Kravchuk polynomials \cite{krav1} with proofs.
The generating function reads
\begin{equation}
\sum_k K_k(w)t^k=(1-t)^w(1+t)^{M-w}
\end{equation}
Weighted symmetry
\begin{equation}
\binom{M}{w}K_k(w)=\binom{M}{k}K_w(k)
\end{equation}
since, expanding the defining sum, each term reads, omitting $(-1)^j$,
\begin{equation}
\frac{M!}{j!(w-j)!(k-j)!(M-w-k+j)!}
\end{equation}
which is symmetric under $w\leftrightarrow k$.
In particular, by the properties of binomials
$K_0(w)=1$,  $K_{M-k}(w)=(-1)^w K_k(w)$ since
\begin{align}
&K_{M-k}(w)=\sum_j(-1)^j\binom{w}{j}\binom{M-w}{M-k-j}=\nonumber\\
&\sum_j(-1)^{w-j}\binom{w}{j}\binom{M-w}{M-k-w+j}
=\nonumber\\
&(-1)^w\sum_j(-1)^{j}\binom{w}{j}\binom{M-w}{k-j}
\end{align}
where we have replaced $j$ by $w-j$ and used the symmetry of binomials.
This also implies $K_{M-k}(w)=(-1)^{k+w}K_k(M-w)$ by
\begin{align}
&\binom{M}{w}(-1)^w K_{M-k}(w)=\binom{M}{w} K_k(w)=\nonumber\\
&\binom{M}{k}K_w(k)=(-1)^k\binom{M}{k}K_{M-w}(k)
=\nonumber\\
&(-1)^k\binom{M}{M-w}K_k(M-w)
\end{align}
or $K_k(M-w)=(-1)^kK_k(w)$.
The most important property is orthogonality
\begin{equation}
\sum_w \binom{M}{w}K_k(w)K_m(w)=2^M\binom{M}{k}\delta_{mk}
\end{equation}
To prove it we use generating functions
\begin{widetext}
\begin{align}
&\sum_{wkm} \binom{M}{w}K_k(w)t^kK_m(w)u^m=\sum_w\binom{M}{w}(1-t)^w(1+t)^{M-w}(1-u)^w(1+u)^{M-w}\nonumber\\
&=\sum_w\binom{M}{w}[(1-t)(1-u)]^w[(1+t)(1+u)]^{M-w}=[(1-t)(1-u)+(1+t)(1+u)]^M\nonumber\\
&=2^M(1+tu)^M=2^M\sum_k\binom{M}{k}t^k u^k
\end{align}
\end{widetext}
Alternatively we can use directly the WH representation of Kravchuk polynomials.
Since the number of bitstrings $y$ of Hamming weight $w$ is exactly $\binom{M}{w}$ we evaluate
\begin{equation}
\sum_{yxz}(-1)^{x\cdot y+z\cdot y}=\sum_{yxz}(-1)^{y\cdot(x\oplus z)}
\end{equation}
over bitstrings $y$, $x$, $z$ with $h(x)=k$, $h(z)=m$.
Again, the sum is $0$ whenever $x\neq z$ which leads to vanishing cases $k\neq m$.
Still when $k=m$ we need $x=z$ so the actual number of choices is binomial $M$ over $k$
and the dummy sum over $y$ gives $2^M$.

Combining with the previous property, we have
\begin{align}
&\sum_w K_k(w)K_w(m)=\sum_w \binom{M}{w}K_k(w)K_m(w){\binom{M}{m}}^{-1}\nonumber\\
&=2^M\delta_{mk}
\end{align}
i.e. $K^2=2^M I$, so the normalized matrix $2^{-M/2}K$ is an involution.
Therefore
\begin{equation}
\Gamma_k=\sum_w \Gamma'_w K_w(k)/2^M.
\end{equation}

Having this powerful tool we can determine the coefficients $\Gamma_k$ for odd $k$ keeping the even cases at $-1$.
Notice that the matrix $\Gamma$ with zeros at even $k$ will return $\Gamma'_w$ antisymmetric under $w\to M-w$,
by the above properties of Kravchuk polynomials. Therefore we can correct $\Gamma'_w$ only by an antisymmetric diagonal.
On the other hand antisymmetric $\Gamma'$ will give $0$ contribution to $\Gamma_k$, at even $k$.
Indeed, the antisymmetry implies
\begin{align}
&2^{M+1}\Gamma_k=\sum_w (\Gamma'_w-\Gamma'_{M-w}) K_w(k)=\nonumber\\
&\sum_w\Gamma'_w(K_w(k)-K_{M-w}(k))\nonumber\\
&=\sum_w\Gamma'_w(K_w(k)-(-1)^kK_{w}(k))=\nonumber\\
&\sum_w\Gamma'_wK_w(k)(1-(-1)^k)
\end{align}
which is zero for even $k$.
Therefore we can make the matrix $\Gamma$ PSD adding correction
\begin{equation}
\Gamma''_k=\left\{\begin{array}{l}
0\mbox{ for } k=0,\\
-2^{M-1}\mbox{ for }k=1,\\
\mbox{ between } -2^{M-1}\mbox{ and }2^{M-1}\mbox{ otherwise}
\end{array}\right.
\end{equation}
In this  way the diagonal elements at $k=0,M$ will become $0$, $\Gamma'_1\to 0$, $\Gamma'_{M-1}\to 2^M$.
The latter allows to match the offdiagonal entries with their $-2^{M}$ weight. The rest of terms are arbitrary
(can be set to 0), but remember to keep antisymmetry.
The construction is unique for $N=4$ (all $\Gamma''_k$ specified) and $N=5$ (by antisymmetry $\Gamma''_2=0$).
For larger values we get more freedom.

\section{Lov\'asz theta, its dual certificate, and higher moments}
\label{app:theta-duality}

Throughout this appendix, $\mathcal G=(V,E)$ is the anticommutation graph
of $L\geq1$ observables, not the hypergraph of physical parties.
An edge $\{i,j\}$ means $F_iF_j=-F_jF_i$ (a nonedge means $F_iF_j=F_jF_i$).
We write $J=\boldsymbol{1}\boldsymbol{1}^{T}$, where $\boldsymbol 1$ is a column vector with all entries equal to $1$, and
$\langle A,B\rangle_{\mathrm{tr}}=\operatorname{Tr}(A^{T}B)$ for real
symmetric matrices. Matrix positivity always means positive
semidefiniteness, not entrywise nonnegativity.
The equivalences below are standard; see Lov\'asz~\cite{lovasz1979} and
Knuth~\cite{knuth1994}. We include the derivation to fix the graph-complement and sign conventions.

\subsection{Trace-normalized primal and the \texorpdfstring{$\Gamma$}{Gamma} dual}

A standard primal definition is
\begin{align}
 \vartheta(\mathcal G)=\max_{X=X^T}\quad&\langle J,X\rangle_{\mathrm{tr}}
 \label{eq:theta-primal}\\
 \text{subject to}\quad&X\succeq0,\quad\operatorname{Tr}X=1,\nonumber\\
 &X_{ij}=0\quad\text{for }\{i,j\}\in E.\nonumber
\end{align}
The constraints involve edges of $\mathcal G$, not edges of its
complement. For example, this convention gives
$\vartheta(K_L)=1$ and $\vartheta(\overline K_L)=L$.

For $i<j$, set
$B^{ij}=(e_ie_j^T+e_je_i^T)/2$, so that
$\langle B^{ij},X\rangle_{\mathrm{tr}}=X_{ij}$.
Introduce a real multiplier $\Lambda$ for the trace constraint and free
real multipliers $y_{ij}$ for the edge equations. Set
$Y=\sum_{\{i,j\}\in E}y_{ij}B^{ij}$.
Let us consider
\begin{align}
 \mathcal L(X,\Lambda,Y)
 &=\langle J,X\rangle_{\mathrm{tr}}
   +\Lambda(1-\operatorname{Tr}X)-\langle Y,X\rangle_{\mathrm{tr}}
   \nonumber\\
 &=\Lambda-\langle\Lambda I+Y-J,X\rangle_{\mathrm{tr}}.
 \label{eq:theta-lagrangian}
\end{align}
Its supremum over $X\succeq0$ is $\Lambda$ precisely when
$\Lambda I+Y-J\succeq0$. Otherwise, scaling the projection onto a negative-eigenvalue eigenvector makes the supremum unbounded.
Thus the dual is
\begin{align}
 \min_{\Lambda,Y=Y^T}\quad&\Lambda\label{eq:theta-dual-Y}\\
 \text{subject to}\quad&\Lambda I+Y-J\succeq0,\nonumber\\
 &Y_{ii}=0,\quad
 Y_{ij}=0\quad(i\ne j,\ \{i,j\}\notin E).\nonumber
\end{align}
The primal is strictly feasible at $X=I/L$, and the dual is strictly feasible for $Y=0$ and $\Lambda>L$. Thus strong duality applies. Writing $\Gamma=\Lambda I+Y-J$ gives exactly
\begin{align}
 \vartheta(\mathcal G)=\min_{\Lambda,\Gamma=\Gamma^T}\quad&\Lambda
 \label{eq:theta-dual-Gamma}\\
 \text{subject to}\quad&\Gamma\succeq0,\quad
 \Gamma_{ii}=\Lambda-1,\nonumber\\
 &\Gamma_{ij}=-1\quad(i\ne j,\ \{i,j\}\notin E).\nonumber
\end{align}
Conversely, any matrix with this pattern gives a feasible
$Y=\Gamma+J-\Lambda I$ in~\eqref{eq:theta-dual-Y}.
In particular, the edge entries of $\Gamma$ are free; there is no
additional factor of two in its fixed nonedge entries.

Let $\beta(\mathcal G)$ denote the supremum of this sum over states and
Hermitian-involution representations with both the specified edge
anticommutations and nonedge commutations, as in \cite{xu2024beta}. Then
\begin{equation}
 \alpha(\mathcal G)\leq\beta(\mathcal G)\leq\vartheta(\mathcal G)
 \leq\overline\chi_f(\mathcal G)\leq\overline\chi(\mathcal G).
 \label{eq:graph-bound-chain}
\end{equation}
Here $\alpha$ is the independence number, while $\overline\chi_f$ and
$\overline\chi$ are the fractional and integer anticommuting clique-cover numbers (chromatic numbers of the  complementary graph).
Independence number is the maximal size of a commuting clique.
Thus an explicit $\Gamma$ certificate can improve on all fractional
combinations of the original-family clique inequalities. It remains
useful even when the general theta theorem is available: theorem
does not itself evaluate $\vartheta$ for each network graph.
The equality $\beta=\alpha$ is not assumed in general.

\subsection{Equivalence with the augmented moment formulation}

The first-level formulation used in \cite{bermejo2024},
\begin{align}
 \max_{z,K=K^T}\quad&\sum_i z_i\label{eq:theta-body}\\
 \text{subject to}\quad&
 M=\begin{pmatrix}1&z^T\\z&K\end{pmatrix}\succeq0,\nonumber\\
 &K_{ii}=z_i,\quad K_{ij}=0\quad(\{i,j\}\in E).\nonumber
\end{align}
We give explicit comparisons with~\eqref{eq:theta-primal}.
For a feasible $(z,K)$ with $t=\sum_i z_i>0$, the matrix $X=K/t$
is feasible in~\eqref{eq:theta-primal}. Since $K-zz^T\succeq0$,
\begin{equation}
 \langle J,X\rangle_{\mathrm{tr}}
 =\frac{\boldsymbol{1}^TK\boldsymbol{1}}{t}\geq t.
 \label{eq:theta-body-to-trace}
\end{equation}
The case $t=0$ is harmless.

Conversely, let $X_{ij}=v_i^Tv_j$ be a Gram decomposition of a feasible
$X$ in~\eqref{eq:theta-primal}, and set
$s=\boldsymbol{1}^TX\boldsymbol{1}$.
For $s>0$, define
\begin{equation}
 h=\frac{\sum_i v_i}{\sqrt{s}},\qquad
 w_i=\frac{h^Tv_i}{\|v_i\|^2}v_i
 \quad(v_i\ne0),
 \label{eq:theta-trace-to-body}
\end{equation}
and $w_i=0$ otherwise. The Gram matrix of $h,w_1,\ldots,w_L$ is feasible
in~\eqref{eq:theta-body}, because $\|h\|=1$,
$h^Tw_i=\|w_i\|^2$, and edge orthogonality is preserved.
Its objective obeys Cauchy--Schwarz:
\begin{equation}
 \sum_i\|w_i\|^2
 =\sum_{v_i\ne0}\frac{(h^Tv_i)^2}{\|v_i\|^2}
 \geq\frac{(\sum_i h^Tv_i)^2}{\sum_i\|v_i\|^2}=s.
 \label{eq:theta-trace-to-body-value}
\end{equation}
Here $\sum_i\|v_i\|^2=\operatorname{Tr}X=1$.
Again $s=0$ cannot obstruct equality of the maxima. Together these
constructions prove that~\eqref{eq:theta-body} has value
$\vartheta(\mathcal G)$.

There is also a particularly simple conversion of dual certificates.
For a feasible $\Gamma$ with $\Lambda>0$, define
\begin{equation}
 Z_1=\begin{pmatrix}
 \Lambda&-\boldsymbol{1}^T\\
 -\boldsymbol{1}&(\Gamma+J)/\Lambda
 \end{pmatrix}\succeq0.
 \label{eq:theta-lifted-dual}
\end{equation}
Its Schur complement is $\Gamma/\Lambda$.
For every feasible $M$ in~\eqref{eq:theta-body},
\begin{equation}
 \operatorname{Tr}(Z_1M)=\Lambda-\sum_i z_i\geq0.
 \label{eq:theta-lifted-gap}
\end{equation}
The bottom diagonal contributes $\sum_i z_i$, the first row and column
contribute $-2\sum_i z_i$, and the remaining terms vanish on edges or
have zero coefficients on nonedges. Thus the $L\times L$ matrix
$\Gamma$ and the $(L+1)\times(L+1)$ moment-dual slack are different
matrices certifying the same bound.

\section{Bell trade-off with 3 parties and 3 settings}
\label{appb3}

Here we present the proof of inequality (\ref{ss2})
For Hermitian operators $Y\succeq X$ means that $Y-X=Z\succeq 0$ and $Z\succeq 0$ means semidefinite operator with nonnegative eigenvalues
or equivalently $\langle\psi|Z|\psi\rangle\geq 0$ for every state $|\psi\rangle$.
We can write
\begin{equation}
\mathcal S_{AB}+\mathcal S_{CA}=A_1(B_++C_+)+A_2B_-+A_3C_-
\end{equation}
where $B_\pm =B_1\pm B_3$, $C_\pm=C_1\pm C_2$
We have 
\begin{align}
&A_1(B_++C_+)\leq |A_1(B_++C_+)|\nonumber\\
&=|A_1||B_++C_+|=|B_++C_+|
\end{align}
where $|F|=\sqrt{F^2}$ for Hermitian $F$ means replacing eigenvalues by their modulus in the diagonal basis.
Therefore
\begin{equation}
\mathcal S_{AB}+\mathcal S_{CA}\leq  |B_++C_+|+A_2B_-+A_3C_-
\end{equation}
We can divide all the space into eigenspaces of
\begin{equation}
\begin{aligned}
B_+^2&=4\cos^2\beta,& B_-^2&=4\sin^2\beta,\\
C_+^2&=4\cos^2\gamma,& C_-^2&=4\sin^2\gamma.
\end{aligned}
\end{equation}
with $\beta,\gamma\in[0,\pi/2]$
and restrict to a single pair $\pm \beta$, $\pm\gamma$ by linearity, just like in Jordan's lemma Appendix \ref{appjor}.
Then again $B_+=2Z\cos\beta$, $B_-=2X\sin\beta$, $C_+=2Z\cos\gamma$, $C_-=2X\sin\gamma$.

Without loss of generality we assume $\cos\beta\geq \cos\gamma\geq 0$.
If $\beta=\pi/2$ then also $\gamma=\pi/2$. Then $B_1=-B_3=\pm 1$ and $C_1=-C_2=\pm 1$ and the case reduces to
\begin{equation}
2A_3C_1+2A_2B_1\leq 4
\end{equation}
If $\beta=0$ then we get
\begin{align}
&2\pm C_++A_3C_-=2+C_1(A_3\pm 1)-C_2(A_3\mp 1)=\nonumber\\
&2\pm 2C_2P(A_3=\mp 1)\pm 2C_1P(A_3=\pm 1)\nonumber\\
&\leq 2+2P(A_3=\mp 1)+2P(A_3=\pm 1)=2+2=4
\end{align}
with $P(A_3=\pm 1)$ meaning the projection onto the eigenspace of $A_3=\pm 1$.

For $0<\beta<\pi/2$,
\begin{equation}
|B_++C_+|=2II\cos\beta +2ZZ\cos\gamma .
\end{equation}
Now
\begin{align}
&|B_++C_+|+A_2B_-+A_3C_-=\nonumber\\
&2II\cos\beta+2ZZ\cos\gamma+2A_2XI\sin\beta+2A_3IX\sin\gamma
\end{align}
We use
\begin{equation}
II\cos\beta+A_2XI\sin\beta\leq \cos\beta+\sin\beta\leq \sqrt{2}
\end{equation}
with the maximum for $\beta=\pi/4$.
On the other hand taking
\begin{equation}
W=ZZ\cos\gamma+A_3IX\sin\gamma
\end{equation}
we have 
\begin{equation}
\langle W\rangle^2\leq \langle W^2\rangle=\langle \cos^2\gamma+\sin^2\gamma\rangle=1
\end{equation}
because matrices $ZZ$ and $IX$ anticommute and square to 1, like in monogamy argumentation.
$\square$.

\begin{table}
\begin{tabular}{|c|c|c|c|}
\toprule
&$A$&$B$&$C$\\
\midrule
$A_2$&$-A$&$B$&$\sqrt{2}B-A$\\
$A_3$&$-A$&$\sqrt{2}C-A$&$C$\\
$B_2$&$\sqrt{2}C-B$&$-B$&$C$\\
$B_3$&$A$&$-B$&$\sqrt{2}A-B$\\
$C_2$&$A$&$\sqrt{2}A-C$&$-C$\\
$C_3$&$\sqrt{2}B-C$&$B$&$-C$\\
\bottomrule
\end{tabular}
\caption{Reduction of operators acting on the {\bf maximizing} state. The first column contains all operators, the next 3 column correspond
to the maximal states with party $A$, $B$, $C$ separated.}\label{rmax}
\end{table}

\begin{table}
\begin{tabular}{|c|c|c|c|}
\toprule
&$A$&$B$&$C$\\
\midrule
$A_2$&$-A$&$-B$&$-\sqrt{2}B-A$\\
$A_3$&$-A$&$-\sqrt{2}C-A$&$-C$\\
$B_2$&$-\sqrt{2}C-B$&$-B$&$-C$\\
$B_3$&$-A$&$-B$&$-\sqrt{2}A-B$\\
$C_2$&$-A$&$-\sqrt{2}A-C$&$-C$\\
$C_3$&$-\sqrt{2}B-C$&$-B$&$-C$\\
\bottomrule
\end{tabular}
\caption{Reduction of operators acting on the {\bf minimizing} state. The first column contains all operators, the next 3 column correspond
to the maximal states with party $A$, $B$, $C$ separated.}\label{rmin}
\end{table}

To find the exact lower and upper bound on (\ref{ss3}) using NPA, we construct two sets of operator polynomials (Gr{\"o}bner bases).
The first set consists of $II=1$, $A_iB_j$, $B_iC_j$, $C_iA_j$ (28 elements in total), and we make a quadratic form in this basis
with the coefficients $g$, $g_{ViWj}$, $g_{ViWjV'i'W'j'}$,
\begin{align}
&G=\sum_{ViWj,V'i'W'j'}g_{ViWjV'i'W'j'}W'_{j'}V'_{i'}V_iW_j\nonumber\\
&+\sum_{ViWj}2g_{ViWj}V_iW_j+g
\end{align}
with $V,W=A,B,C$.
The second set constrains triples $A_iB_jB_k$, $B_iA_jA_k$, $B_iC_jC_k$, $C_iB_jB_k$, $C_iA_jA_k$, $A_iC_jC_k$, $j\neq k$,
(108 elements in total), and we make a quadratic form in this basis
with the coefficients $h_{ViWjkV'i'W'j'k'}$
\begin{equation}
H=\sum_{ViWjk,V'i'W'j'k'}h_{ViWjkV'i'W'j'k'}W'_{k'}W'_{j'}V'_{i'}V_iW_jW_k
\end{equation}
such that $G\geq 0$, $H\geq 0$ and
\begin{equation}
G+H=2(2+\sqrt{2})\mp(\mathcal S_A+\mathcal S_B+\mathcal S_C)
\end{equation}
with $-$ for the maximum and $+$ for the minimum.

Now, using squares $A_i^2=B_i^2=C_j^2=1$ the one can find a set of coefficients $g$ and $h$ to reach the above identification
by SDP. Unfortunately, simply numerical run cannot give an exact value.
To obtain exact agreement we need a complicated workaround.
Firstly, knowing the reductions on the maximizing  and minimizing states in Tables \ref{rmax} and \ref{rmin}, we determine the kernel of $g$ and $h$ (matrices of coefficients),
as both $G$ and $H$ must vanish acting on maximizing states. In particular
\begin{widetext}
\begin{align}
&G(|A\rangle,|B\rangle,|C\rangle)=gr_g((1,BC,CA,AB)|A\rangle,(1,BC,CA,AB)|B\rangle,(1,BC,CA,AB)|C\rangle)=0
\end{align}
where $r_g$ is a  $28\times 12$ reduction matrix as given in Tables \ref{redg} and \ref{redgm} (with omitted identity).
Due to linear independence we expect $gr_g=0$ which allows to reduce the basis of $g$ to the kernel of $r_g^T$.

Similarly
\begin{align}
&H(|A\rangle,|B\rangle,|C\rangle)=hr_h((D,A,B,C)|A\rangle,(D,A,B,C)|B\rangle,(D,A,B,C)|C\rangle)=0
\end{align}
\end{widetext}
with $D=ABC$ where $r_h$ is a  $108\times 12$ reduction matrix as given in Tables \ref{redh} and \ref{redhm}.
Due to linear independence we expect $hr_h=0$ which allows to reduce the basis of $h$ to the kernel of $r_h^T$.

\begin{table}
\begin{tabular}{|c|c|c|c|}
\toprule
&$A$&$B$&$C$\\
\midrule
$A_1B_1$&$AB$&$AB$&$AB$\\
$A_1B_2$&$\sqrt{2}CA-AB$&$-AB$&$CA$\\
$A_1B_3$&$1$&$-AB$&$\sqrt{2}-AB$\\
$A_2B_1$&$-AB$&$1$&$\sqrt{2}-AB$\\
$A_2B_2$&$AB-\sqrt{2}CA$&$-1$&$\sqrt{2}BC-CA$\\
$A_2B_3$&$-1$&$-1$&$-AB$\\
$A_3B_1$&$-AB$&$\sqrt{2}BC-AB$&$BC$\\
$A_3B_2$&$AB-\sqrt{2}CA$&$AB-\sqrt{2}BC$&$1$\\
$A_3B_3$&$-1$&$AB-\sqrt{2}BC$&$\sqrt{2}CA-BC$\\
\bottomrule
\end{tabular}
\caption{Reduction of products of operators in the operator-monomial basis of $G$ for {\bf maximizing} state. The other operators are obtained by a cyclic shift 
$A\to B\to C\to A$.}
\label{redg}
\end{table}

\begin{table}
\begin{tabular}{|c|c|c|c|}
\toprule
&$A$&$B$&$C$\\
\midrule
$A_1B_1$&$AB$&$AB$&$AB$\\
$A_1B_2$&$-\sqrt{2}CA-AB$&$-AB$&$-CA$\\
$A_1B_3$&$-1$&$-AB$&$-\sqrt{2}-AB$\\
$A_2B_1$&$-AB$&$-1$&$-\sqrt{2}-AB$\\
$A_2B_2$&$AB+\sqrt{2}CA$&$1$&$\sqrt{2}BC+CA$\\
$A_2B_3$&$1$&$1$&$-AB$\\
$A_3B_1$&$-AB$&$-\sqrt{2}BC-AB$&$-BC$\\
$A_3B_2$&$AB+\sqrt{2}CA$&$AB+\sqrt{2}BC$&$1$\\
$A_3B_3$&$1$&$AB+\sqrt{2}BC$&$\sqrt{2}CA+BC$\\
\bottomrule
\end{tabular}
\caption{Reduction of products of operators in the operator-monomial basis of $G$ for {\bf minimizing} state. The other operators are obtained by a cyclic shift 
$A\to B\to C\to A$.}
\label{redgm}
\end{table}


\begin{table}
\begin{tabular}{|c|c|c|c|}
\toprule
&$A$&$B$&$C$\\
\midrule
$B_2B_1A_1$&$A-\sqrt{2}D$&$-A$&$D$\\
$B_3B_1A_1$&$B$&$-A$&$A-\sqrt{2}B$\\
$B_3B_2A_1$&$\sqrt{2}C-B$&$A$&$\sqrt{2}C-D$\\
$B_1B_2A_1$&$\sqrt{2}D-A$&$-A$&$D$\\
$B_1B_3A_1$&$B$&$-A$&$\sqrt{2}B-A$\\
$B_2B_3A_1$&$\sqrt{2}C-B$&$A$&$\sqrt{2}C-D$\\
\midrule
$B_2B_1A_2$&$\sqrt{2}D-A$&$-B$&$\sqrt{2}C-D$\\
$B_3B_1A_2$&$-B$&$-B$&$A$\\
$B_3B_2A_2$&$B-\sqrt{2}C$&$B$&$-D$\\
$B_1B_2A_2$&$A-\sqrt{2}D$&$-B$&$\sqrt{2}C-D$\\
$B_1B_3A_2$&$-B$&$-B$&$-A$\\
$B_2B_3A_2$&$B-\sqrt{2}C$&$B$&$-D$\\
\midrule
$B_2B_1A_3$&$\sqrt{2}D-A$&$A-\sqrt{2}C$&$B$\\
$B_3B_1A_3$&$-B$&$A-\sqrt{2}C$&$C-\sqrt{2}D$\\
$B_3B_2A_3$&$B-\sqrt{2}C$&$\sqrt{2}C-A$&$\sqrt{2}A-B$\\
$B_1B_2A_3$&$A-\sqrt{2}D$&$A-\sqrt{2}C$&$B$\\
$B_1B_3A_3$&$-B$&$A-\sqrt{2}C$&$\sqrt{2}D-C$\\
$B_2B_3A_3$&$B-\sqrt{2}C$&$\sqrt{2}C-A$&$\sqrt{2}A-B$\\
\bottomrule
\end{tabular}
\caption{Reduction of
operators in the operator-monomial basis of $H$ for {\bf maximizing} state. The other operators are obtained by a cyclic shift $A\to B\to C\to A$.
$D=ABC$.
}\label{redh}
\end{table}

\begin{table}
\begin{tabular}{|c|c|c|c|}
\toprule
&$A$&$B$&$C$\\
\midrule
$B_2B_1A_1$&$A+\sqrt{2}D$&$-A$&$-D$\\
$B_3B_1A_1$&$-B$&$-A$&$A+\sqrt{2}B$\\
$B_3B_2A_1$&$\sqrt{2}C+B$&$A$&$\sqrt{2}C+D$\\
$B_1B_2A_1$&$-\sqrt{2}D-A$&$-A$&$-D$\\
$B_1B_3A_1$&$-B$&$-A$&$-\sqrt{2}B-A$\\
$B_2B_3A_1$&$\sqrt{2}C+B$&$A$&$\sqrt{2}C+D$\\
\midrule
$B_2B_1A_2$&$-\sqrt{2}D-A$&$B$&$\sqrt{2}C+D$\\
$B_3B_1A_2$&$B$&$B$&$A$\\
$B_3B_2A_2$&$-B-\sqrt{2}C$&$-B$&$D$\\
$B_1B_2A_2$&$A+\sqrt{2}D$&$B$&$\sqrt{2}C+D$\\
$B_1B_3A_2$&$B$&$B$&$-A$\\
$B_2B_3A_2$&$-B-\sqrt{2}C$&$-B$&$D$\\
\midrule
$B_2B_1A_3$&$-\sqrt{2}D-A$&$A+\sqrt{2}C$&$B$\\
$B_3B_1A_3$&$B$&$A+\sqrt{2}C$&$-C-\sqrt{2}D$\\
$B_3B_2A_3$&$-B-\sqrt{2}C$&$-\sqrt{2}C-A$&$-\sqrt{2}A-B$\\
$B_1B_2A_3$&$A+\sqrt{2}D$&$A+\sqrt{2}C$&$B$\\
$B_1B_3A_3$&$B$&$A+\sqrt{2}C$&$\sqrt{2}D+C$\\
$B_2B_3A_3$&$-B-\sqrt{2}C$&$-\sqrt{2}C-A$&$-\sqrt{2}A-B$\\
\bottomrule
\end{tabular}
\caption{Reduction of
operators in the operator-monomial basis of $H$ for {\bf minimizing} state. The other operators are obtained by a cyclic shift $A\to B\to C\to A$.
$D=ABC$.
}\label{redhm}
\end{table}

In addition it turns out there the remaining freedom allows reduction involving symmetry $A\leftrightarrow B$  together with setting relabeling $2\leftrightarrow 3$
and cyclic shift $A\to B\to C\to A$.  We could also discard 24 monomials with representative $A_2B_2B_3$, $A_3B_2B_3$, $A_2B_3B_2$, $A_3B_3B_2$ up to symmetries and from  the basis.
 Note that these reductions will be nevertheless confirmed a posteriori by the successful construction of semidefinite matrices $g$ and $h$.
Another obstruction is the appearance of $\sqrt{2}$. We resolve it using quadratic field $\mathbb Q(\sqrt{2})$.
We recall that the field is generated by rational pairs $(x_i,x_s)\equiv x=x_i+x_s\sqrt{2}$
with $0=(0,0)$, $1=(1,0)$ and all linear operations, addition, subtraction, multiplication and division, are well defined.
\begin{align}
&x\pm y=(x_i,x_s)\pm(y_i,y_s)=(x_i\pm y_i,x_s\pm y_s)\nonumber\\
&x\cdot y=(x_i,x_s)\cdot(y_i,y_s)=(x_iy_i+2x_sy_s,x_iy_s+x_sy_i)\nonumber\\
&x/y=\frac{(x_i,x_s)}{(y_i,y_s)}=\left(\frac{x_iy_i-2x_sy_s}{y_i^2-2y^2_s},\frac{x_sy_i-y_sx_i}{y_i^2-2y^2_s}\right)
\end{align}
The denominator is zero only if $y_i=y_s=0$ since $x^2=2$ has no solution in rational numbers. It is also useful to define conjugate $x'=x_i-x_s\sqrt{2}=(x_i,-x_s)$
The rest of the task to find the numerical candidates for $g$ and $h$ and replace the independent irreducible  entries
by elements of $\mathbb Q(\sqrt{2})$. The positivity must be verified symbolically which requires operations on giant integers (of $\sim 2000$ digits).

We can also use matrix operations. Let $A=A_i+A_s\sqrt{2}$ where $A_{i/s}$ are rational.

Then
\begin{equation}
\begin{pmatrix}
I/\sqrt{2}&I/\sqrt{2}\\
I&-I\end{pmatrix}
\begin{pmatrix}
A&0\\
0&A'\end{pmatrix}
\begin{pmatrix} 
I/\sqrt{2}&I/2\\
I/\sqrt{2}&-I/2\end{pmatrix}=\begin{pmatrix}
A_i&A_s\\
2A_s&A_i\end{pmatrix}
\end{equation}
where $I$ is the identity matrix of the size matching the number of rows/columns of $A$.
Then for square matrices
\begin{align}
&\begin{pmatrix}
A_i&A_s\\
2A_s&A_i\end{pmatrix}^{-1}=\nonumber\\
&\begin{pmatrix}
I/\sqrt{2}&I/\sqrt{2}\\
I&-I\end{pmatrix}
\begin{pmatrix}
A^{-1}&0\\
0&A^{\prime -1}\end{pmatrix}
\begin{pmatrix} 
I/\sqrt{2}&I/2\\
I/\sqrt{2}&-I/2\end{pmatrix}
\end{align}
We have also $(A')^{-1}=(A^{-1})'$, i.e. the conjugation can be applied after/before inversion.
Now, to find the kernel of $r^T_g$, and $r^T_h$ we can complete the matrices to the nondegenerate square matrices,
invert and extract bases of the kernels. For the 28-dimensional $G$ basis, $k_g$ has rank 16; for the 108-dimensional $H$ basis, $k_h$ has rank 96 (rank 72 after removing the 24 monomials). In each case $r^T_g k_g=0$ and $r^T_h k_h=0$.

Numerically, one can reduce both numerically and symbolically to the problem to 214 free parameters, to be either determined directly from
the SDP dual solution or running SDP with them as the input variables. One must maintain sufficient accuracy (safely 9 digits) to 
prevent small deviations from spoiling the positivity. The algebraic bottleneck is the symbolic kernel of a large matrix of all constraints.
Fortunately SageMath \cite{sage} deals with it within minutes.

\section{\texorpdfstring{$\mathcal I_{3322}$ bound and triple monogamy}{I3322 bound and triple monogamy}}
\label{appis}
  \begin{table}
 \begin{tabular}{cc}
 \toprule
 $i$&$m_i$\\
0& 8257536\\
1& - 12124160\\
2& - 11075584\\
3& - 24936448\\
4& - 25357824\\
5& - 6137088\\
6& + 922336\\
7& + 284048\\
8& - 15696\\
9& - 4696\\
10& + 118\\
11& + 29\\
\bottomrule
\end{tabular}
\caption{Table of coefficients of the polynomial $m(x)$}\label{i33}
\end{table}

Define
\[
\mathcal J^{xyz}=12+\mathcal I^{xyz}(A,B)+\mathcal I^{xyz}(B,C)+\mathcal I^{xyz}(C,A).
\]
Below we express $4\mathcal J^{xyz}$ as sums of positive squares (products
of commuting squares in the factored terms).
\begin{align}
&4\mathcal J^{+++}=\nonumber\\
&(C_3-A_2)^2(C_3-B_2)^2+(C_3+A_1)^2(C_3+B_1)^2+\nonumber\\
&(A_3-B_2)^2(A_3-C_2)^2+(A_3+B_1)^2(A_3+C_1)^2+\nonumber\\
&(B_3-C_2)^2(B_3-A_2)^2+(B_3+C_1)^2(B_3+A_1)^2+\nonumber\\
&(1+A_1)^2(1+B_2)^2+(1+A_2)^2(1+B_1)^2+\nonumber\\
&(1+B_1)^2(1+C_2)^2+(1+B_2)^2(1+C_1)^2+\nonumber\\
&(1+C_1)^2(1+A_2)^2+(1+C_2)^2(1+A_1)^2,
\end{align}

\begin{align}
&4\mathcal J^{++-}=\nonumber\\
&(1+A_1)^2(1+B_1)^2+(1+B_1)^2(1+C_1)^2+\nonumber\\
&(1+C_1)^2(1+A_1)^2+(1+A_2)^2(1+B_2)^2+\nonumber\\
&(1+B_2)^2(1+C_2)^2+(1+C_2)^2(1+A_2)^2+\nonumber\\
&(A_3+B_1)^2(A_3+C_2)^2+(A_3-B_2)^2(A_3-C_1)^2+\nonumber\\
&(B_3+C_1)^2(B_3+A_2)^2+(B_3-C_2)^2(B_3-A_1)^2+\nonumber\\
&(C_3+A_1)^2(C_3+B_2)^2+(C_3-A_2)^2(C_3-B_1)^2,
\end{align}

\begin{align}
&4\mathcal J^{+-+}=\nonumber\\
&(1+A_1)^2(1-B_2)^2+(1+B_1)^2(1+A_2)^2+\nonumber\\
&(1+B_1)^2(1-C_2)^2+(1+C_1)^2(1+B_2)^2+\nonumber\\
&(1+C_1)^2(1-A_2)^2+(1+A_1)^2(1+C_2)^2+\nonumber\\
&(A_3+B_1)^2(A_3+C_1)^2+(A_3+B_2)^2(A_3-C_2)^2+\nonumber\\
&(B_3+C_1)^2(B_3+A_1)^2+(B_3+C_2)^2(B_3-A_2)^2+\nonumber\\
&(C_3+A_1)^2(C_3+B_1)^2+(C_3+A_2)^2(C_3-B_2)^2=\nonumber\\
&(1+A_2)^2(1-B_2)^2+(1+B_1)^2(1+A_1)^2+\nonumber\\
&(1+B_2)^2(1-C_2)^2+(1+C_1)^2(1+B_1)^2+\nonumber\\
&(1+C_2)^2(1-A_2)^2+(1+A_1)^2(1+C_1)^2+\nonumber\\
&(A_3+B_2)^2(A_3+C_1)^2+(A_3+B_1)^2(A_3-C_2)^2+\nonumber\\
&(B_3+C_2)^2(B_3+A_1)^2+(B_3+C_1)^2(B_3-A_2)^2+\nonumber\\
&(C_3+A_2)^2(C_3+B_1)^2+(C_3+A_1)^2(C_3-B_2)^2=\nonumber\\
&(1+A_1)^2(1+B_1)^2+(1+A_1)^2(1+C_1)^2+\nonumber\\
&(1+B_1)^2(1+C_1)^2+(C_2-A_2)^2(C_2-B_1)^2+\nonumber\\
&(B_2-A_1)^2(B_2-C_2)^2+(A_2-B_2)^2(A_2-C_1)^2+\nonumber\\
&(2+A_1B_3-A_2B_3+C_3A_1+C_3A_2)^2+\nonumber\\
&(2+C_1A_3-C_2A_3+B_3C_1+B_3C_2)^2+\nonumber\\
&(2+B_1C_3-B_2C_3+A_3B_1+A_3B_2)^2,
\end{align}

\begin{align}
&4\mathcal J^{+--}=\nonumber\\
&(1+A_1)^2(1+B_1)^2+(1+A_1)^2(1+C_1)^2+\nonumber\\
&(1+B_1)^2(1+C_1)^2+(C_2-A_2)^2(C_2-B_1)^2+\nonumber\\
&(B_2-A_1)^2(B_2-C_2)^2+(A_2-B_2)^2(A_2-C_1)^2+\nonumber\\
&(2-A_1B_3+A_2B_3+A_1C_3+A_2C_3)^2+\nonumber\\
&(2-C_1A_3+C_2A_3+C_1B_3+C_2B_3)^2+\nonumber\\
&(2-B_1C_3+B_2C_3+B_1A_3+B_2A_3)^2.
\end{align}
Cases $\mathcal J^{-+\ast}$ reduce to $\mathcal J^{+-\ast}$ by exchange $X_1\leftrightarrow X_2$ and $X_3\leftrightarrow -X_3$
for $X=A,B,C$,

\begin{align}
&4\mathcal J^{--+}=\nonumber\\
&(1+A_1)^2(1-B_2)^2+(1+A_2)^2(1-B_1)^2+\nonumber\\
&(1+B_1)^2(1-C_2)^2+(1+B_2)^2(1-C_1)^2+\nonumber\\
&(1+C_1)^2(1-A_2)^2+(1+C_2)^2(1-A_1)^2+\nonumber\\
&(A_3-B_1)^2(A_3+C_1)^2+(A_3+B_2)^2(A_3-C_2)^2+\nonumber\\
&(B_3-C_1)^2(B_3+A_1)^2+(B_3+C_2)^2(B_3-A_2)^2+\nonumber\\
&(C_3-A_1)^2(C_3+B_1)^2+(C_3+A_2)^2(C_3-B_2)^2,
\end{align}

\begin{align}
&4\mathcal J^{---}=\nonumber\\
&(1+A_1)^2(1-B_1)^2+(1+A_2)^2(1-B_2)^2+\nonumber\\
&(1+B_1)^2(1-C_1)^2+(1+B_2)^2(1-C_2)^2+\nonumber\\
&(1+C_1)^2(1-A_1)^2+(1+C_2)^2(1-A_2)^2+\nonumber\\
&(A_3-B_1)^2(A_3+C_2)^2+(A_3+B_2)^2(A_3-C_1)^2+\nonumber\\
&(B_3-C_1)^2(B_3+A_2)^2+(B_3+C_2)^2(B_3-A_1)^2+\nonumber\\
&(C_3-A_1)^2(C_3+B_2)^2+(C_3+A_2)^2(C_3-B_1)^2.
\end{align}

\begin{table*}
\begin{tabular}{cccc}
\toprule
$+++$&$++-$&$-++$&$-+-$\\ 
\midrule
$B_++C_+$&$B_-+C_-$&$B_+-C_+$&$B_--C_-$\\ 
$B_1C_1+B_2C_2$&$B_1C_1-B_2C_2$&$A_+(B_++C_+)$&$B_2C_1-B_1C_2$\\ 
$B_1C_2+B_2C_1$&$A_+(B_--C_-)$&$A_+(B_1C_2+B_2C_1)$&$A_+(B_-+C_-)$\\ 
$I$&$A_+(B_2C_1-B_1C_2)$&$A_+$&$A_-(B_-C_3-C_-B_3)$\\ 
$A_+(B_+-C_+)$&$A_-(B_-C_3+C_-B_3)$&$A_-(B_3-C_3)$&$A_3(B_+-C_+)$\\ 
$A_-(B_3+C_3)$&$A_3(B_++C_+)$&$A_-(B_+C_3-C_+B_3)$&\\ 
$A_-(B_+C_3+C_+B_3)$&$A_3(B_1C_2+B_2C_1)$&$A_3(B_--C_-)$&\\ 
$A_3(B_-+C_-)$&&$A_3(B_2C_1-B_1C_2)$&\\ 
\bottomrule
\end{tabular}
\caption{Distribution of monomials into buckets with respect to $Z_2$ symmetries}\label{ias}
\end{table*}

The general idea to prove the symbolic bound on the joint antisymmetric inequality (\ref{issa})  is similar to the bound on $\bar{\mathcal S}_A+\bar{\mathcal S}_B+\bar{\mathcal S}_C$.
By swapping $C_1\leftrightarrow C_2$, $C_3\to -C_3$, we rewrite the expression
\begin{align}
&B_1+B_2+C_1+C_2+(A_1+A_2)(B_1+B_2-C_1-C_2)\nonumber\\
&+A_3(B_1-B_2+C_1-C_2)+(A_1-A_2)(B_3+C_3).
\end{align}
which can be also written as
\begin{equation}
B_++C_++A_+(B_+-C_+)+A_3(B_-+C_-)+A_-(B_3+C_3)
\end{equation}
denoting $D_\pm= D_1\pm D_2$.
Note the symmetries 
\begin{enumerate}
\item swap $B$ with $C$ and $A_1$ with $-A_2$ ($A_+\to -A_+$), 
\item swap $B_3,C_3\to -B_3,-C_3$ and $A_1\leftrightarrow A_2$ (or $A_-\to -A_-$),
\item swap $B_1,C_1\leftrightarrow B_2,C_2$ and $A_3\to -A_3$.
\end{enumerate}
All terms $A_iB_jC_k$ can be then reduced to the basis $I,A_1,A_3,A_1A_3$, with the last one of antisymmetric nature $A_3A_1=-A_1A_3$,
using the fact that $A_i^2=B_j^2=C_k^2=1$.
For our representative solution we can take out $B_j$ and $A_2$, $C_3$ as numbers and the remaining terms involve $A_1$, $A_3$, $C_1$, $C_2$.
For instance $\sqrt{2}A_1C_1=1-A_1A_3$, and 
and $A_1A_3=-A_3A_1=[C_2,C_1]/2$.

By numerical inspection we have found that 
we can reduce the full $4^3=64$ dimensions to four buckets of different $Z_2$ signatures, see Table \ref{ias}. This is still exact,
confirmed by a posteriori symbolic matrix construction.

We  had to reduce these space to the kernel of optimal solutions, which gives 4 equations, but we identified
an extra kernel equation numerically (again confirmed a posteriori by the exact solution)
in the bucket $-++$,
\begin{equation}
(-1,\sqrt{2}+1, -1-\sqrt{2}, 0, \sqrt{2}, \sqrt{2}+1, 0, 0)
\end{equation}

The following explicit strategies saturate (\ref{ich}); the displayed relations hold when the operators act on their minimizing states.
Entangled $AC$ in Bell state $(|01\rangle-|10\rangle)/\sqrt{2}$
\begin{align}
&C_+=-\sqrt{2}A_1,\;C_-=-\sqrt{2}A_3,\nonumber\\
&B_+=-2,\;B_-=0,\; B_3=A_2=1
\end{align}
(e.g. $A_1=Z$, $A_3=X$, $C_+=\sqrt{2}Z$, $C_-=\sqrt{2}X$).
The other 4 cases are for the entangled $AB$ in Bell state $(|01\rangle-|10\rangle)/\sqrt{2}$.
Case 2
\begin{align}
&B_+=-\sqrt{2}A_2,\;B_-=-\sqrt{2}A_3,\nonumber\\
&A_1=-1,\;B_3=1,\; C_+=2,\; C_-=0
\end{align}
(e.g. $A_2=Z$, $A_3=X$, $B_+=\sqrt{2}Z$, $B_-=\sqrt{2}X$).
Case 3
\begin{align}
&B_+=-\sqrt{2}A_2,\;B_-=-\sqrt{2}A_3,\nonumber\\
&A_1=-1,\;B_3=A_2,\; C_+=2,\; C_-=0
\end{align}
(e.g. $A_2=Z$, $A_3=X$, $B_+=\sqrt{2}Z$, $B_-=\sqrt{2}X$, $B_3=-Z$).
Case 4
\begin{align}
&\sqrt{2}B_3=-A_-,\;\sqrt{2}B_1=-A_+,\nonumber\\
&A_3=B_2=-1,\;C_+=0,\;C_-=2
\end{align}
(e.g. $B_1=Z$, $B_3=X$, $A_+=\sqrt{2}Z$, $A_-=\sqrt{2}X$).
Case 5 is obtained from 4 by swapping $B_1,C_1\leftrightarrow B_2,C_2$, and flipping $A_3\to -A_3$.
In case 1 we have also $A_1A_3=-A_3A_1$, in case 2 and 3 we have $A_2A_3=-A_3A_2$, in case 4 and 5 we have $A_1A_2=-A_2A_1$.

The NPA sum of squares can be split by $Z_2$ symmetry corresponding to swapping $B_1,C_1\leftrightarrow B_2,C_2$, and flipping $A_3\to -A_3$, see 
Table \ref{fch}.
Additionally the basis can be reduced to the kernel of equations for the minimal cases (they belong to the kernel).

This reduction allows to find the matrices $G_\pm\succeq 0$.

\begin{table}
\begin{tabular}{cc}
\toprule
$+$&$-$\\ 
\midrule
$I$&$A_3$\\ 
$A_1$&$A_3B_+$\\ 
$A_2$&$A_3C_+$\\ 
$B_+$&$B_-$\\ 
$A_1B_+$&$A_1B_-$\\ 
$A_2B_+$&$A_2B_-$\\ 
$C_+$&$C_-$\\ 
$A_1C_+$&$A_1C_-$\\ 
$A_2C_+$&$A_2C_-$\\ 
$A_3B_-$&$A_3B_3$\\ 
$A_3C_-$&$A_3(B_1C_2+B_2C_1)$\\ 
$B_3$&$B_1C_2-B_2C_1$\\ 
$A_1B_3$&$A_1(B_1C_2-B_2C_1)$\\ 
$A_2B_3$&$A_2(B_1C_2-B_2C_1)$\\ 
$B_1C_2+B_2C_1$&$A_3(B_1C_1+B_2C_2)$\\ 
$A_1(B_1C_2+B_2C_1)$&$B_1C_1-B_2C_2$\\ 
$A_2(B_1C_2+B_2C_1)$&$A_1(B_1C_1-B_2C_2)$\\ 
$A_3(B_1C_2-B_2C_1)$&$A_2(B_1C_1-B_2C_2)$\\ 
$B_1C_1+B_2C_2$&$A_3B_3C_+$\\ 
$A_1(B_1C_1+B_2C_2)$&$B_3C_-$\\ 
$A_2(B_1C_1+B_2C_2)$&$A_1B_3C_-$\\ 
$A_3(B_1C_1-B_2C_2)$&$A_2B_3C_-$\\ 
$B_3C_+$&\\ 
$A_1B_3C_+$&\\ 
$A_2B_3C_+$&\\ 
$A_3B_3C_-$&\\ 
\bottomrule
\end{tabular}
\caption{Distribution of monomials in the hybrid inequality into buckets with respect to $Z_2$ symmetry}\label{fch}
\end{table}

\end{document}